\documentclass[11pt,onecolumn]{article}

\usepackage{fullpage}

\pagestyle{plain}

\usepackage{epsfig}
\usepackage{color}
\usepackage{graphicx}
\usepackage{amssymb,latexsym,amsmath}
\usepackage[latin1]{inputenc}
\usepackage{enumerate}
\usepackage{url}
\usepackage{pstricks}
\usepackage{amsfonts,epsfig,graphicx}
\usepackage{amsmath,amssymb,amsthm}

\usepackage{multirow}
\usepackage{psfrag}

\hyphenation{op-tical net-works semi-conduc-tor}


\setlength{\textwidth}{\paperwidth} \addtolength{\textwidth}{-6cm}
\setlength{\textheight}{\paperheight}
\addtolength{\textheight}{-4cm}
\addtolength{\textheight}{-1.1\headheight}
\addtolength{\textheight}{-\headsep}
\addtolength{\textheight}{-\footskip}
\setlength{\oddsidemargin}{0.5cm} \setlength{\evensidemargin}{0.5cm}


\makeatletter \long\def\@makecaption#1#2{
        \vskip 0.8ex
        \setbox\@tempboxa\hbox{\small {\bf #1:} #2}
        \parindent 1.5em  
        \dimen0=\hsize
        \advance\dimen0 by -3em
        \ifdim \wd\@tempboxa >\dimen0
                \hbox to \hsize{
                        \parindent 0em
                        \hfil
                        \parbox{\dimen0}{\def\baselinestretch{0.96}\small
                                {\bf #1.} #2
                                }
                        \hfil}
        \else \hbox to \hsize{\hfil \box\@tempboxa \hfil}
        \fi
        }
\makeatother


\newtheorem{theorem}{Theorem}
\newtheorem{lemma}{Lemma}
\newtheorem{corollary}{Corollary}


\theoremstyle{plain}

\theoremstyle{definition}



\newcommand{\widgraph}[2]{\includegraphics[keepaspectratio,width=#1]{#2}}

\newcommand{\conditionA}{\ensuremath{\textbf{A}}}

\newcommand{\defn}{\ensuremath{: \!\, =}}
\newcommand{\Expt}{\ensuremath{\mathbb{E}}}
\newcommand{\Prob}{\ensuremath{\mathbb{P}}}
\newcommand{\mprob}{\ensuremath{\mathbb{P}}}
\newcommand{\norm}[1]{\ensuremath{\|#1\|_{L^2}}}
\newcommand{\opnorm}[1]{\ensuremath{|\!|\!| #1 |\!|\!|_2}}
\newcommand{\inprod}[2]{\ensuremath{\langle #1, \, #2 \rangle}_{L^2}}
\newcommand{\coneleq}{\ensuremath{\preceq}}
\newcommand{\project}[1]{\ensuremath{\Pi_{\ball}\big(#1\big)}}

\newcommand{\as}{\stackrel{\text{a.s.}}{\longrightarrow}}

\newcommand{\convhull}{\operatorname{convhull}}

\newcommand{\diameter}{\operatorname{diam}}

\newcommand{\order}{\ensuremath{\mathcal{O}}}

\newcommand{\real}{\mathbb{R}}

\newcommand{\Space}{\mathcal{X}}                     
\newcommand{\FunSpace}{\mathcal{M}}                  
\newcommand{\FunSpaceTwo}{\mathcal{M}^{\prime}}


\newcommand{\ball}{\mathcal{B}}                      

\newcommand{\dimn}{\ensuremath{r}}                   

\newcommand{\graph}{\mathcal{G}}

\newcommand{\edge}{\mathcal{E}}

\newcommand{\DirSet}{\mathcal{\vec{\edge}}}
\newcommand{\numnode}{\ensuremath{n}}
\newcommand{\degr}{\ensuremath{d}}

\newcommand{\diam}{\ensuremath{\ell}}

\newcommand{\rv}{\ensuremath{X}}
\newcommand{\rvar}{\ensuremath{\rv}}
\newcommand{\realize}{\ensuremath{x}}

\newcommand{\Dimn}{\ensuremath{|\DirSet|}}

\newcommand{\epot}{\ensuremath{\psi_{\snode\fnode}}}
\newcommand{\npot}{\ensuremath{\psi_{\snode}}}
\newcommand{\poten}{\ensuremath{\psi}}

\newcommand{\fnode}{\ensuremath{v}}
\newcommand{\snode}{\ensuremath{u}}
\newcommand{\tnode}{\ensuremath{w}}
\newcommand{\ftnode}{\ensuremath{s}}

\newcommand{\Neig}{\ensuremath{\mathcal{N}}}
\newcommand{\CompNeig}{\ensuremath{\tnode\in \Neig (\fnode)
    \setminus \{\snode\}}}
\newcommand{\CompNeigtwo}{\ensuremath{\ftnode \in \Neig (\fnode)
    \setminus \{\snode, \tnode\}}}

\newcommand{\compat}{\ensuremath{\psi}}

\newcommand{\rvy}{\ensuremath{Y}}

\newcommand{\Time}{\ensuremath{t}}
\newcommand{\newt}{\ensuremath{\tau}}
\newcommand{\step}{\ensuremath{\eta^{\Time}}}

\newcommand{\mes}{\ensuremath{m}}
\newcommand{\mesprod}{\ensuremath{M}}
\newcommand{\mesprodtwo}{\ensuremath{M}_{\snode\fnode}}

\newcommand{\mesast}{\ensuremath{\mes^\ast}}
\newcommand{\marg}{\ensuremath{\tau}}

\newcommand{\upfun}{\ensuremath{\mathcal{F}}}

\newcommand{\probdens}{\ensuremath{p}}

\newcommand{\margfun}{\ensuremath{\beta}}
\newcommand{\compatfun}{\ensuremath{\Gamma}}                             
\newcommand{\prjcompatfun}{\ensuremath{\widetilde{\Gamma}}}                

\newcommand{\aprerr}{\ensuremath{\Delta}}

\newcommand{\error}{\ensuremath{e}}

\newcommand{\measure}{\mu}
\newcommand{\basis}{\phi}
\newcommand{\const}{\ensuremath{C}}
\newcommand{\bound}{\ensuremath{B}}

\newcommand{\PlainDeriv}{\ensuremath{\mathcal{D}}}
\newcommand{\Deriv}{\ensuremath{\PlainDeriv_{\tnode}
    \:\probdens_{\LDE}}}

\newcommand{\arbfun}{\ensuremath{h_{\tnode\rightarrow\fnode}}}
\newcommand{\func}{\ensuremath{f}}
\newcommand{\funchat}{\ensuremath{\widehat{\func}}}

\newcommand{\matentry}{\ensuremath{\widetilde{L}_{\LowerDirEdge{\fnode}{\snode},\tnode}}}

\newcommand{\mescoef}{\ensuremath{a}}                               

\newcommand{\coeff}{\ensuremath{a}}

\newcommand{\upfuncoef}{\ensuremath{b}}                             
\newcommand{\upfuncoeftwo}[2]{\ensuremath{b_{\fnode\rightarrow\snode; #2}^{#1}}}
\newcommand{\compupfuncoef}{\upfuncoef_{\fnode\rightarrow\snode; j}^{\Time}}              

\newcommand{\numsampY}{\ensuremath{k}}                         

\newcommand{\PlainSamp}{\ensuremath{Y}}

\newcommand{\devia}{\ensuremath{\zeta}}                           

\newcommand{\Lips}{\ensuremath{L_{\LowerDirEdge{\fnode}{\snode}; w}}}

\newcommand{\term}{\ensuremath{T}}
\newcommand{\detterm}{\ensuremath{D_{\LowerDirEdge{\fnode}{\snode}}^{\Time+1}}}
\newcommand{\stocterm}{\ensuremath{S_{\LowerDirEdge{\fnode}{\snode}}^{\Time+1}}}

\newcommand{\eigen}{\ensuremath{\lambda}}

\newcommand{\smooth}{\ensuremath{\alpha}}
\newcommand{\onevec}{\ensuremath{\vec{1}}}

\newcommand{\contfact}{\ensuremath{\gamma}}

\newcommand{\fframe}{\ensuremath{I}}
\newcommand{\sframe}{\ensuremath{I^{\prime}}}

\newcommand{\Graph}{\ensuremath{\mathcal{G}}}
\newcommand{\Vertex}{\ensuremath{\mathcal{V}}}
\newcommand{\Edge}{\ensuremath{\mathcal{E}}}

\newcommand{\DirEdge}[2]{\ensuremath{(#1 \rightarrow #2)}}
\newcommand{\LowerDirEdge}[2]{\ensuremath{ {#1} \rightarrow {#2}}}

\newcommand{\ALG}{SOSMP }
\newcommand{\stochastic}{stochastic }

\newcommand{\meshat}{\ensuremath{\widehat{\mes}}}

\newcommand{\Exs}{\ensuremath{\mathbb{E}}}

\newcommand{\coeffdirthree}[2]{\ensuremath{\coeff^{#1}_{\LowerDirEdge{\fnode}{\snode};
      #2}}}

\newcommand{\coefftilthree}[2]{\ensuremath{\widetilde{b}^{#1}_{\LowerDirEdge{\fnode}{\snode};#2}}}

\newcommand{\coeffup}[1]{\ensuremath{\coeff^{#1}_{\LowerDirEdge{\fnode}{\snode}}}}
\newcommand{\coefftil}[1]{\ensuremath{\widetilde{b}^{#1}_{\LowerDirEdge{\fnode}{\snode}}}}

\newcommand{\HACK}{\ensuremath{\gamma}}

\newcommand{\SUPERHACK}[1]{\ensuremath{\HACK_{\snode \fnode;  #1}}}

\newcommand{\coeffupdown}[2]{\ensuremath{\coeff^{#1}_{\LowerDirEdge{\fnode}{\snode};
      #2}}}

\newcommand{\fullcoeff}[1]{\ensuremath{{\coeff}^{#1}}}

\newcommand{\ItErr}[1]{\ensuremath{\Delta^{#1}_{\LowerDirEdge{\fnode}{\snode}}}}
\newcommand{\ItErrtwo}[2]{\ensuremath{\Delta^{#1}_{#2}}}
\newcommand{\ItErrNo}[1]{\ensuremath{\Delta^{#1}}}

\newcommand{\MASSERR}[1]{\ensuremath{\rho^2 \big(#1\big)}}

\newcommand{\GoodApp}[1]{\ensuremath{A^{#1}}}
\newcommand{\GoodApptwo}[2]{\ensuremath{A^{#1}_{#2}}}
\newcommand{\GoodAppDown}[1]{\ensuremath{A^{#1}_{\LowerDirEdge{\fnode}{\snode}}}}

\newcommand{\Pir}{\ensuremath{\Pi^\dimn}}

\newcommand{\mesup}[1]{\ensuremath{\mes^{#1}_{\LowerDirEdge{\fnode}{\snode}}}}

\newcommand{\compathat}{\ensuremath{\widehat{\Gamma}}}

\newcommand{\dimcrit}{\ensuremath{\dimn^*}}

\newcommand{\polydecay}{\ensuremath{\alpha}}

\newcommand{\myeig}{\ensuremath{\lambda}}

\newcommand{\coeffbp}{\ensuremath{\coeff^*}}
\newcommand{\coeffbptwo}[1]{\ensuremath{\coeffbp_{\LowerDirEdge{\fnode}{\snode}; #1}}}

\newcommand{\deviate}[1]{\ensuremath{\devia^{#1}_{\LowerDirEdge{\fnode}{\snode}}}}
\newcommand{\deviatetwo}[2]{\ensuremath{\devia^{#1}_{\LowerDirEdge{\fnode}{\snode}; #2}}}

\newcommand{\Field}{\ensuremath{\mathcal{G}}}

\newcommand{\NilMat}{\ensuremath{N}}

\newcommand{\phack}[2]{\ensuremath{[p_{\LowerDirEdge{\fnode}{\snode}}(#1)](#2)}}
\newcommand{\phackno}[1]{\ensuremath{p_{\LowerDirEdge{\fnode}{\snode}}(#1)}}

\newcommand{\vvec}{\ensuremath{\vec{v}}}

\newcommand{\NEWTERMONEDOWN}{\ensuremath{U^\Time_{\LowerDirEdge{\fnode}{\snode}}}}
\newcommand{\NEWTERMTWODOWN}{\ensuremath{V^\Time_{\LowerDirEdge{\fnode}{\snode}}}}

\newcommand{\LDE}{\ensuremath{{\LowerDirEdge{\fnode}{\snode}}}}
\newcommand{\neigh}{\ensuremath{\mathcal{N}}}
\newcommand{\AVEERR}[1]{\ensuremath{\widebar{\rho}^2(#1)}}
\newcommand{\RHOSQMAX}[1]{\ensuremath{\rho^2_{\tiny{\mbox{max}}}(#1)}}
\newcommand{\RHOBARSQMAX}[1]{\ensuremath{\widebar{\rho}^2_{\tiny{\mbox{max}}}(#1)}}
\newcommand{\SUMBOUND}{\ensuremath{\widebar{B}}}

\newcommand{\NumOp}{\ensuremath{T}}
\newcommand{\NumOpPoly}{\ensuremath{\NumOp_{\operatorname{\tiny{poly}}}}}
\newcommand{\pdens}{\ensuremath{p}}

\newlength{\widebarargwidth}
\newlength{\widebarargheight}
\newlength{\widebarargdepth}
\DeclareRobustCommand{\widebar}[1]{%
  \settowidth{\widebarargwidth}{\ensuremath{#1}}%
  \settoheight{\widebarargheight}{\ensuremath{#1}}%
  \settodepth{\widebarargdepth}{\ensuremath{#1}}%
  \addtolength{\widebarargwidth}{-0.3\widebarargheight}%
  \addtolength{\widebarargwidth}{-0.3\widebarargdepth}%
  \makebox[0pt][l]{\hspace{0.3\widebarargheight}%
    \hspace{0.3\widebarargdepth}%
    \addtolength{\widebarargheight}{0.3ex}%
    \rule[\widebarargheight]{0.95\widebarargwidth}{0.1ex}}%
  {#1}}


\begin{document}

\begin{center}

{\bf{\Large{Belief Propagation for Continuous State Spaces:
      \\ Stochastic Message-Passing with Quantitative Guarantees}}}

\vspace*{.2in}

{\large{
\begin{tabular}{ccc}
Nima Noorshams$^1$ & & Martin J. Wainwright$^{1,2}$ \\
{\texttt{nshams@eecs.berkeley.edu}} & & {\texttt{wainwrig@eecs.berkeley.edu}}
\end{tabular}
}}

\vspace*{.1in}

\begin{center}
Department of Statistics$^2$ and \\
Department of Electrical Engineering $\&$ Computer Science$^{1}$\\
University of California Berkeley 
\end{center}



\vspace*{.1in}

December 2012

\end{center}


\begin{abstract}

The sum-product or belief propagation (BP) algorithm is a widely used
message-passing technique for computing approximate marginals in
graphical models.  We introduce a new technique, called stochastic
orthogonal series message-passing (SOSMP), for computing the BP fixed
point in models with continuous random variables.  It is based on a
deterministic approximation of the messages via orthogonal series
expansion, and a stochastic approximation via Monte Carlo estimates of
the integral updates of the basis coefficients.  We prove that the
\ALG iterates converge to a $\delta$-neighborhood of the unique BP
fixed point for any tree-structured graph, and for any graphs with
cycles in which the BP updates satisfy a contractivity condition.  In
addition, we demonstrate how to choose the number of basis
coefficients as a function of the desired approximation accuracy
$\delta$ and smoothness of the compatiblity functions.  We illustrate
our theory with both simulated examples and in application to optical
flow estimation.

\end{abstract}

\noindent {\bf{Keywords:}} Graphical models; sum-product algorithm for
continuous state spaces; low-complexity belief propagation; stochastic
algorithm; orthogonal basis expansion.


\section{Introduction}
\label{SecIntroduction}

Graphical models provide a parsimonious yet flexible framework for
describing probabilistic dependencies among large numbers of random
variables.  They have proven useful in a variety of application
domains, including computational biology, computer vision and image
processing, data compression, and natural language processing, among
others.  In all of these applications, a central computational
challenge is the \emph{marginalization problem}, by which we mean the
problem of computing marginal distributions over some subset of the
variables.  Naively approached, such marginalization problems become
intractable for all but toy problems, since they entail performing
summation or integration over high-dimensional spaces.  The
sum-product algorithm, also known as belief propagation (BP), is a
form of dynamic programming that can be used to compute exact
marginals much more efficiently for graphical models without cycles,
known as trees.  It is an iterative algorithm in which nodes in the
graph perform a local summation/integration operation, and then relay
results to their neighbors in the form of messages.  Although it is
guaranteed to be exact on trees, it is also commonly applied to graphs
with cycles, in which context it is often known as loopy BP.  For more
details on graphical models and belief propagation, we refer the
readers to the papers~\cite{KasEtal01, WaiJorBook08, AjiMce00,
  Loeliger04,Yedidia05}.

In many applications of graphical models, we encounter random
variables that take on continuous values (as opposed to discrete).
For instance, in computer vision, the problem of optical flow
calculation is most readily formulated in terms of estimating a vector
field in $\real^2$.  Other applications involving continuous random
variables include tracking problems in sensor networks, vehicle
localization, image geotagging, and protein folding in computational
biology.  With certain exceptions (such as multivariate Gaussian
problems), the marginalization problem is very challenging for
continuous random variables: in particular, the messages correspond to
functions, so that they are expensive to compute and transmit, in
which case belief propagation may be limited to small-scale problems.
Motivated by this challenge, researchers have proposed different
techniques to reduce complexity of BP in different
applications~\cite{AruEtal02,SuddEtal03,DouFreGor,IhlMcA09,
  CouShe07,IsaEtal09, SongEtal11, NooWai12a}.  For instance, various
types of quantization schemes~\cite{CouShe07,IsaEtal09} have used to
reduce the effective state space and consequently the complexity. In
another line of work, researchers have proposed stochastic methods
inspired by particle
filtering~\cite{AruEtal02,SuddEtal03,DouFreGor,IhlMcA09}. These
techniques are typically based on approximating the messages as
weighted particles~\cite{DouFreGor, IhlMcA09}, or mixture of
Gaussians~\cite{SuddEtal03}. Other researchers~\cite{SongEtal11} have
proposed the use of kernel methods to simultaneously estimate
parameters and compute approximate marginals in a simultaneous manner.

In this paper, we present a low-complexity alternative to belief
propagation with continous variables.  Our method, which we refer to
as stochastic orthogonal series message-passing (SOSMP), is applicable
to general graphical models, and is equipped with various theoretical
guarantees.  As suggested by its name, the algorithm is based on
combining two ingredients: orthogonal series approximation of the
messages, and the use of stochastic updates for efficiency. In this
way, the \ALG updates lead to a randomized algorithm with substantial
reductions in communication and computational complexity.  Our main
contributions are to analyze the convergence properties of the \ALG
algorithm, and to provide rigorous bounds on the overall error as a
function of the associated computational complexity.  In particular,
for tree-structured graphs, we estabish almost sure convergence, and
provide an explicit inverse polynomial convergence rate
(Theorem~\ref{ThmTree}).  For loopy graphical models on which the
usual BP updates are contractive, we also establish similar
convergence rates (Theorem~\ref{ThmGeneral}).  Our general theory
provides quantitative upper bounds on the number of iterations
required to compute a $\delta$-accurate approximation to the BP
message fixed point, as we illustrate in the case of kernel-based
potential functions (Theorem~\ref{ThmKernel}).

The reminder of the paper is organized as follows. We begin in
Section~\ref{SecBackground}, with the necessary background on the
graphical models as well as the belief propagation algorithm.
Section~\ref{SecAlgorithm} is devoted to a precise description of \ALG
algorithm.  In Section~\ref{SecTheory}, we state our main theoretical
results and develop some of their corollaries.  In order to
demonstrate the algorithm's effectiveness and confirm theoretical
predictions, we provide some experimental results, on both synthetic
and real data, in Section~\ref{SecSimulations}.  In
Section~\ref{SecProof}, we provide the proofs of our main results,
with some of the technical aspects deferred to the appendices.


\section{Background}
\label{SecBackground}

We begin by providing some background on graphical models and the
belief propagation (or sum-product) algorithm.

\subsection{Undirected graphical models}
\label{SecGraphicalModel}

Consider an undirected graph $\graph = (\Vertex, \Edge)$, consisting
of a collection of nodes or vertices \mbox{$\Vertex = \{1, 2, \ldots,
  \numnode \}$,} along with a collection of edges $\Edge \subset
\Vertex \times \Vertex$.  An edge is an undirected pair $(\snode,
\fnode)$, and self-edges are forbidden (meaning that $(\snode, \snode)
\notin \Edge$ for all $\snode \in \Vertex$).  For each $\snode \in
\Vertex$, let $\rv_\snode$ be a random variable taking values in a
space $\Space_\snode$.  An undirected graphical model, also known as a
Markov random field, defines a family of joint probability
distributions over the random vector $\rv = \{ \rv_\snode, \; \snode
\in \Vertex \}$, in which each distribution must factorize in terms of
local potential functions associated with the cliques of the graph.
In this paper, we focus on the case of pairwise Markov random fields,
in which case the factorization is specified in terms of functions
associated with the nodes and edges of the graph.

More precisely, we consider probability densities $\pdens$ that are
absolutely continuous with respect to a given measure $\measure$,
typically the Lebesgue measure for the continuous random variables
considered here.  We say that $\pdens$ \emph{respects the graph
  structure} if it can be factorized in the form
\begin{align}
\label{EqnPairwiseFact}
\pdens(\realize_1, \realize_2, \ldots, \realize_\numnode) \; & \propto
\; \prod_{\snode \in \Vertex} \poten_\snode (\realize_\snode)
\prod_{(\snode, \fnode) \in \Edge} \epot (\realize_\snode,
\realize_\fnode).
\end{align}
Here $\poten_\snode: \Space_\snode \rightarrow (0,\infty)$ is the node
potential function, whereas $\poten_{\snode\fnode}: \Space_\snode
\times \Space_\fnode \rightarrow (0,\infty)$ denotes the edge
potential function.  A factorization of this
form~\eqref{EqnPairwiseFact} is also known as \emph{pairwise Markov
  random field}; see Figure~\ref{FigGraphicalModels} for a few
examples that are widely used in practice.  

In many applications, a central computational challenge is the
computation of the marginal distribution
\begin{align}
\label{EqnDefnSingleNode}
\pdens(\realize_\snode) \;\; & \defn \!
\underbrace{\int_{\Space}\ldots\int_{\Space}}_{(\numnode-1) \;
  \text{times}} \pdens(\realize_1, \realize_2, \ldots,
\realize_\numnode) \; \prod_{\fnode \in \Vertex \backslash \{\snode\}}
\measure(dx_\fnode )
\end{align}
at each node $\snode \in \Vertex$.  Naively approached, this problem
suffers from the curse of dimensionality, since it requires computing
a multi-dimensional integral over an $(\numnode-1)$-dimensional space.
For Markov random fields defined on trees (graphs without cycles),
part of this exponential explosion can be circumvented by the use of
the belief propagation or sum-product algorithm, to which we turn in
the following section.

\begin{figure}[h]
\begin{center}
\begin{tabular}{ccc}
\psfrag{*cedge*}{$\poten_{\snode \fnode}$}
\psfrag{*cvar1*}{$\poten_{\fnode}$}
\psfrag{*cvar2*}{$\qquad \quad \poten_{\snode}$}
\widgraph{.45\textwidth}{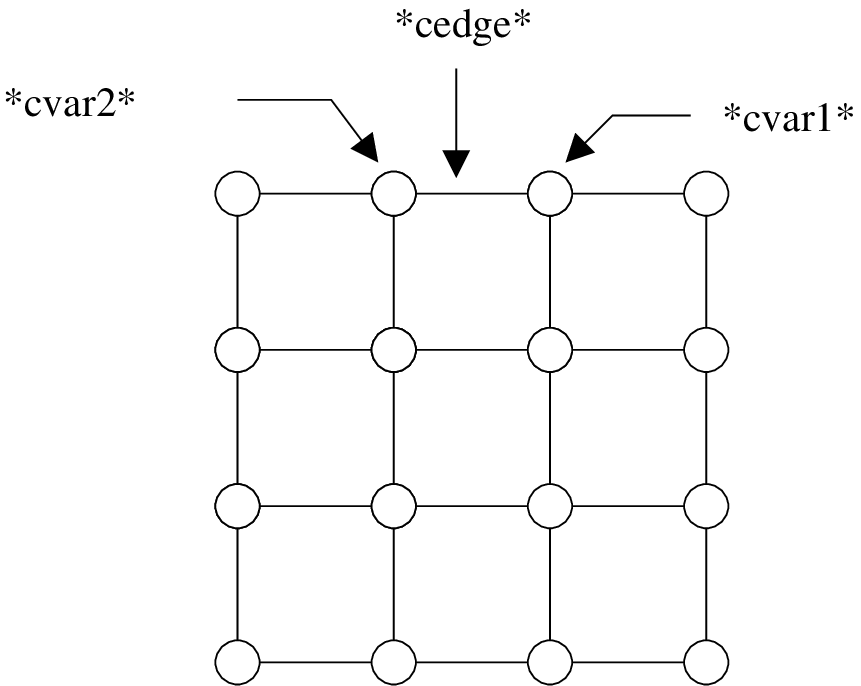} & & 
\psfrag{#cedge#}{$\poten_{\snode \fnode}$}
\psfrag{*cvar1*}{$\poten_{\fnode}$}
\psfrag{*cvar2*}{$\poten_{\snode}$}
\raisebox{.7in}{\widgraph{.45\textwidth}{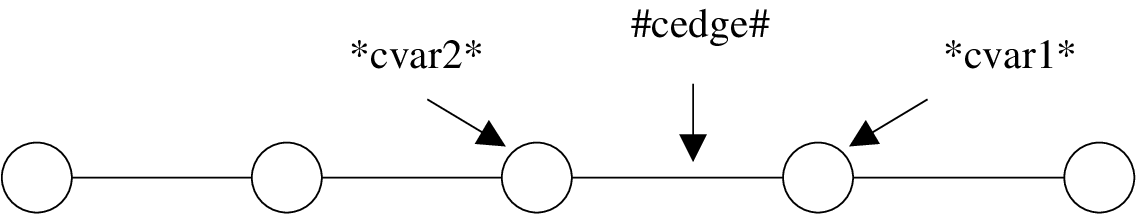}} \\
(a) & & (b)
\end{tabular}
\end{center}
\caption{Examples of pairwise Markov random fields. (a)
  Two-dimensional grid. (b) Markov chain model. Potential functions
  $\poten_\snode$ and $\poten_\fnode$ are associated with nodes
  $\snode$ and $\fnode$ respectively, whereas potential function
  $\poten_{\snode \fnode}$ is associated with edge $(\snode,
  \fnode)$.}
\label{FigGraphicalModels}
\end{figure}

Before proceeding, let us make a few comments about the relevance of
the marginals in applied problems.  In a typical application, one also
makes independent noisy observations $y_\snode$ of each hidden random
variable $\rvar_\snode$.  By Bayes' rule, the posterior distribution
of $\rvar$ given the observations $y = (y_1, \ldots, y_\numnode)$ then
takes the form
\begin{align}
\label{EqnPosterior}
\pdens_{X \mid Y} (\realize_1, \ldots, \realize_\numnode \, \mid y_1,
\ldots, y_\numnode) & \; \propto \; \prod_{\snode \in \Vertex}
\widetilde{\poten}_\snode (\realize_\snode; y_\snode) \prod_{(\snode,
  \fnode) \in \Edge} \epot (\realize_\snode, \realize_\fnode),
\end{align}
where we have introduced the convenient shorthand for the modified
node-wise potential functions
\mbox{$\widetilde{\poten}_\snode(\realize_\snode; y_\snode) \defn
  \pdens(y_\snode \mid \realize_\snode) \;
  \poten_\snode(\realize_\snode)$.}  Since the observation vector $y$
is fixed and known, any computational problem for the posterior
distribution~\eqref{EqnPosterior} can be reduced to an equivalent
problem for a pairwise Markov random field of the
form~\eqref{EqnPairwiseFact}, using the given definition of the
modified potential functions.  In addition, although our theory allows
for distinct state spaces $\Space_\snode$ at each node $\snode \in
\Vertex$, throughout the remainder of the paper, we suppress this
possible dependence so as to simplify exposition.


\subsection{Belief propagation}
\label{SecBP}

The belief propagation algorithm, also known as the sum-product
algorithm, is an iterative method based on message-passing updates for
computing either exact or approximate marginal distributions. For
trees (graphs without cycles), it is guaranteed to converge after a
finite number of iterations and yields the exact marginal
distributions, whereas for graphs with cycles, it yields only
approximations to the marginal distributions.  Nonetheless, this
``loopy'' form of belief propagation is widely used in practice. Here
we provide a very brief treatment sufficient for setting up the main
results and analysis of this paper, referring the reader to various
standard sources~\cite{KasEtal01,WaiJorBook08} for further background.

In order to define the message-passing updates, we require some
further notation.  For each node $\fnode \in \Vertex$, let
$\Neig(\fnode) \defn \{ \snode \in \Vertex \; \mid \; (\snode, \fnode)
\in \Edge \}$ be its set of neighbors, and we use
\mbox{$\DirSet(\fnode) \defn \{\DirEdge{\fnode}{\snode} \; \mid \;
  \snode \in \Neig (\fnode) \}$} to denote the set of all directed
edges emanating from $\fnode$.  We use $\DirSet \defn \cup_{\fnode\in
  \Vertex} \; \DirSet(\fnode)$ to denote the set of all directed edges
in the graph. Let $\FunSpace$ denote the set of all probability
densities (with respect to the base measure $\measure$) defined on the
space $\Space$---that is
\begin{align*}
\FunSpace& \; = \; \big \{ \mes: \Space \rightarrow [0, \infty) \, \mid
  \int_{\Space} \mes(\realize) \measure(d \realize) = 1 \big \}.
\end{align*}
The messages passed by the belief propagation algorithm are density
functions, taking values in the space $\FunSpace$.  More precisely, we
assign one message $\mes_{\LowerDirEdge{\fnode}{\snode}}
\in \FunSpace$ to every directed edge $\DirEdge{\fnode}{\snode} \in
\DirSet$, and we denote the collection of all messages by $\mes =
\{\mes_{\LowerDirEdge{\fnode}{\snode}}, \; \DirEdge{\fnode}{\snode}
\in \DirSet \}$.  Note that the full collection of messages $\mes$
takes values in the product space $\FunSpace^{|\DirSet|}$. \\

At an abstract level, the belief propagation algorithm generates a
sequence of message densities $\{\mes^t\}$ in the space
$\FunSpace^{|\DirSet|}$, where $t = 0,1,2 \ldots$ is the iteration
number.  The update of message $\mes^t$ to message $\mes^{t+1}$ can be
written in the form $\mes^{t+1} = \upfun(\mes^{t})$, where
\mbox{$\upfun: \FunSpace^{|\DirSet|}
  \rightarrow \FunSpace^{|\DirSet|}$} is a non-linear operator.  This
global operator is defined by the local update operators\footnote{ It
  is worth mentioning, and important for the computational efficiency
  of belief propagation, that $\mes_{\LowerDirEdge{\fnode}{\snode}}$
  is only a function of the messages
  $\mes_{\LowerDirEdge{\tnode}{\fnode}}$ for $\CompNeig$. Therefore,
  we have $\upfun_{\LowerDirEdge{\fnode}{\snode}}
  : \FunSpace^{\degr_{\fnode}-1} \to\FunSpace$, where $\degr_{\fnode}$
  is the degree of the node $\fnode$.  However, we suppress this local
  dependence so as to reduce notational clutter.}
$\upfun_{\LowerDirEdge{\fnode}{\snode}}: \FunSpace^{|\DirSet|}
\rightarrow \FunSpace$, one for each directed edge of the graph, such
that $\mes^{t+1}_{\LowerDirEdge{\fnode}{\snode}} =
\upfun_{\LowerDirEdge{\fnode}{\snode}}(\mes^t)$.   \\

More precisely, in terms of these local updates, the BP algorithm
operates as follows.  At each iteration $\Time = 0, 1, \ldots$, each
node $\fnode \in \Vertex$ performs the following steps:
\begin{itemize}
\item for each one of its neighbors $\snode \in \Neig(\fnode)$, it
  computes $\mes^{t+1}_{\LowerDirEdge{\fnode}{\snode}} =
  \upfun_{\LowerDirEdge{\fnode}{\snode}}(\mes^t)$.
\item it transmits message $\mes^{t+1}_{\LowerDirEdge{\fnode}{\snode}}$ to
neighbor $\snode \in \Neig(\fnode)$.
\end{itemize}
In more detail, the message update takes the form
\begin{align}
\label{EqnBPUpdateEntry}
\underbrace{ [\upfun_{\LowerDirEdge{\fnode}{\snode}} (\mes^{\Time})]
  (\cdot)}_{\mes^{\Time + 1}_{\LowerDirEdge{\fnode}{\snode}} (\cdot)}
\; & \defn \; \kappa \int_{\Space} \Big \{ \poten_{\snode \fnode}(\cdot,
\realize_\fnode) \; \poten_{\fnode} (\realize_\fnode) \; \!
\prod_{\CompNeig} \mes^{\Time}_{\LowerDirEdge{\tnode}{\fnode}}
(\realize_\fnode) \Big \} \, \measure(d \realize_\fnode),
\end{align}
where $\kappa$ is a normalization constant chosen to enforce the
normalization condition 
\begin{equation*}
\int_{\Space}
\mes^{\Time+1}_{\LowerDirEdge{\fnode}{\snode}}(\realize_\snode) \:
\measure(d \realize_\snode) \; = \; 1.
\end{equation*}
By concatenating the local updates~\eqref{EqnBPUpdateEntry}, we obtain
a global update operator \mbox{$\upfun: \FunSpace^{|\DirSet|}
  \to\FunSpace^{|\DirSet|}$,} as previously discussed.  The goal of
belief propagation message-passing is to obtain a \emph{fixed point},
meaning an element $\mesast \in \FunSpace^{|\DirSet|}$ such that
$\upfun (\mesast) = \mesast$.  Under mild conditions, it can be shown
that there always exists at least one fixed point, and for any
tree-structured graph, the fixed point is unique.

Given a fixed point $\mesast$, each node $\snode \in \Vertex$ computes
its marginal approximation $\marg^{\ast}_\snode \in \FunSpace$ by
combining the local potential function $\poten_\snode$ with a product
of all incoming messages as
\begin{align}
\label{EqnUpdateMarg}
\marg^{\ast}_\snode (\realize_\snode) \; & \propto \;
\poten_{\snode}(\realize_\snode) \!  \prod_{\fnode \in \Neig(\snode)}
\mesast_{\LowerDirEdge{\fnode}{\snode}} (\realize_\snode).
\end{align}
Figure~\ref{FigMessPass} provides a graphical representation of the
flow of the information in these local updates. For tree-structured
(cycle-free) graphs, it is known that BP
updates~\eqref{EqnBPUpdateEntry} converge to the unique fixed point in
a finite number of iterations~\cite{WaiJorBook08}. Moreover, the
quantity $\marg^\ast_\snode(x_\snode)$ is equal to the single-node
marginal, as previously defined~\eqref{EqnDefnSingleNode}. For general
graphs, uniqueness of the fixed point is no longer
guaranteed~\cite{WaiJorBook08}; however, the same message-passing
updates can be applied, and are known to be extremely effective for
computing approximate marginals in numerous applications.

\begin{figure}[h]
\begin{center}
\begin{tabular}{cc}
\psfrag{#1#}{$w$}
\psfrag{#2#}{$\snode$} 
\psfrag{#3#}{$\fnode$}
\psfrag{#4#}{$s$} 
\psfrag{#5#}{$t$} 
\psfrag{#m43#}{$\mes_{\LowerDirEdge{s}{\fnode}}$}
\psfrag{#m53#}{$\mes_{\LowerDirEdge{t}{\fnode}}$}
\psfrag{#m32#}{$\mes_{\LowerDirEdge{\fnode}{\snode}}$}
\widgraph{0.45\textwidth}{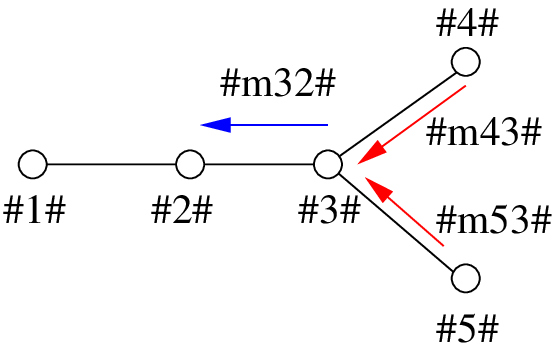} &
\psfrag{#2#}{$s$} 
\psfrag{#3#}{$\snode$}
\psfrag{#4#}{$t$} 
\psfrag{#5#}{$w$} 
\psfrag{#m43#}{$\mes_{\LowerDirEdge{t}{\snode}}$}
\psfrag{#m53#}{$\mes_{\LowerDirEdge{w}{\snode}}$} 
\psfrag{#m32#}{$\mes_{\LowerDirEdge{s}{\snode}}$}
\widgraph{0.35\textwidth}{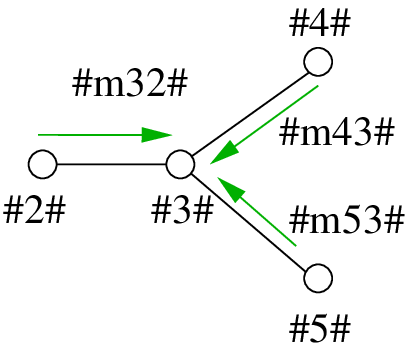} \\
(a) & (b)
\end{tabular}
\end{center}
\caption{Graphical representation of message-passing algorithms. (a)
  Node $\fnode$ transmits the message
  $\mes_{\LowerDirEdge{\fnode}{\snode}} =
  \upfun_{\LowerDirEdge{\fnode}{\snode}} (\mes)$, derived from
  equation~\eqref{EqnBPUpdateEntry}, to its neighbor $\snode$. (b)
  Upon receiving all the messages, node $\snode$ updates its marginal
  estimate according to~\eqref{EqnUpdateMarg}.}
\label{FigMessPass}
\end{figure}

Although the BP algorithm is considerably more efficient than the
brute force approach to marginalization, the message update
equation~\eqref{EqnBPUpdateEntry} still involves computing an integral
and transmitting a real-valued function (message).  With certain
exceptions (such as multivariate Gaussians), these continuous-valued
messages do not have finite representations, so that this approach is
computationally very expensive.  Although integrals can be computed by
numerical methods, the BP algorithm requires performing many such
integrals \emph{at each iteration}, which becomes very expensive in
practice.


\section{Description of the algorithm}
\label{SecAlgorithm}

We now turn to the description of the \ALG algorithm.  Before doing
so, we begin with some background on the main underlying ingredients:
orthogonal series expansion, and \stochastic message updates.

\subsection{Orthogonal series expansion}
\label{SecQSBP}

As described in the previous section, for continuous random variables,
each message is a density function in the space $\FunSpace \subset
L^2(\Space; \measure)$.  We measure distances in this space using the
usual $L^2$ norm $\|f - g\|_2^2 \defn \int_\Space (f(x) - g(x))^2
\measure(d x)$.  A standard way in which to approximate functions is
via orthogonal series expansion.  In particular, let
$\{\basis_j\}_{j=1}^{\infty}$ be an orthonormal basis of $L^2(\Space;
\measure)$, meaning a collection of functions such that
\begin{align}
\underbrace{\int_{\Space} \basis_i(x)\basis_j(x) \measure(dx)}_{\defn
  \inprod{\basis_i}{\basis_j}} \; & = \; \begin{cases} 1 & \mbox{when $i =
    j$} \\ 0 & \mbox{otherwise.}
\end{cases}
\end{align}
Any function $\func \in \FunSpace \subset L^2(\Space; \measure)$ then
has an expansion of the form $\func = \sum_{j=1}^\infty \coeff_j \phi_j$,
where $\coeff_j = \inprod{\func}{\phi_j}$ are the basis expansion
coefficients.

Of course, maintaining the infinite sequence of basis coefficients
$\{\coeff_j\}_{j=1}^\infty$ is also computationally intractable, so
that any practical algorithm will maintain only a finite number
$\dimn$ of basis coefficients.  For a given $\dimn$, we let
$\funchat_\dimn \propto \big[ \sum_{j=1}^\dimn \coeff_j \basis_j
\big]_+$ be the approximation based on the first $\dimn$ coefficients.
(Applying the operator $[t]_+ = \max \{0, t\}$ amounts to projecting
$\sum_{j=1}^\dimn \coeff_j \phi_j$ onto the space of non-negative
functions, and we also normalize to ensure that it is a density
function.)  In using only $\dimn$ coefficients, we incur the
\emph{approximation error}
\begin{align}
\label{EqnRapprox}
\| \funchat_\dimn - \func \|_2^2 \; & \stackrel{(i)}{\leq} \; \| \sum_{j=1}^\dimn
\coeff_j \phi_j - \func \|_2^2 \; \stackrel{(ii)}{=} \; \sum_{j = \dimn
  +1}^\infty \coeff_j^2
\end{align}
where inequality (i) uses non-expansivity of the projection, and step
(ii) follows from Parseval's theorem.  Consequently, the approximation
error will depend both on 
\begin{itemize}
\item how many coefficients $\dimn$ that we retain, and
\item the decay rate of the expansion coefficients
  $\{\coeff_j\}_{j=1}^\infty$.
\end{itemize}

For future reference, it is worth noting that the local message
update~\eqref{EqnBPUpdateEntry} is defined in terms of an integral
operator of the form
\begin{align}
\label{EqnIntegralUpdate}
\func(\cdot) \; \mapsto \; \int_\Space \poten_{\snode \fnode}(\cdot,
\realize_\fnode) \: \func(\realize_\fnode)\:
\measure(d \realize_\fnode).
\end{align}
Consequently, whenever the edge potential function $\poten_{\snode
  \fnode}$ has desirable properties---such as differentiability and/or
higher order smoothness---then the messages also inherit these
properties.  With an appropriate choice of the basis
$\{\basis_j\}_{j=1}^\infty$, such properties translate into decay
conditions on the basis coefficients $\{\coeff_j\}_{j=1}^\infty$.  For
instance, for $\alpha$-times differentiable functions expanded into
the Fourier basis, the Riemann-Lebesgue lemma guarantees that the
coefficients $\coeff_j$ decay faster than $(1/j)^{2 \alpha}$.  We
develop these ideas at greater length in the sequel.

\subsection{Stochastic message updates}

In order to reduce the approximation error~\eqref{EqnRapprox}, the
number of coefficients $\dimn$ needs to be increased (as a function of
the ultimate desired error $\delta$).  Since increases in $\dimn$ lead
to increases in computational complexity, we need to
develop effective reduced-complexity methods.  In this section, we
describe (at a high-level) how this can be done via a stochastic
version of the BP message-passing updates.

We begin by observing that message update~\eqref{EqnBPUpdateEntry},
following some appropriate normalization, can be cast as an
expectation operation.  This equivalence is essential, because it
allows us to obtain unbiased approximations of the message update
using \stochastic techniques. In particular, let us define the
\emph{normalized compatibility function}
\begin{align}
\label{EqnNormCompat}
\compatfun_{\snode \fnode} (\cdot, \realize_\fnode ) \defn \;
\epot(\cdot, \realize_\fnode)
\frac{\poten_\fnode(\realize_\fnode)}{\margfun_{\snode
    \fnode}(\realize_\fnode)}, \quad \mbox{where} \quad
\margfun_{\snode\fnode} (\realize_\fnode) \defn \; \poten_{\fnode}
(\realize_\fnode) \int_{\Space} \epot (\realize_\snode,
\realize_\fnode) \: \measure(d \realize_\snode).
\end{align}
By construction, for each $\realize_\fnode$, we have $\int_{\Space}
\compatfun_{\snode \fnode}(\realize_\snode, \realize_\fnode)
\measure(d \realize_\snode) = 1$.

\begin{lemma}
\label{LemCuteObservation}
Given an input collection of messages $\mes$, let $Y$ be a random
variable with density proportional to
\begin{align}
\label{EqnPdens}
\phack{\mes}{y} \; & \propto \; \margfun_{\snode \fnode}(y) \prod
\limits_{\CompNeig} \mes_{\LowerDirEdge{\tnode}{\fnode}}(y).
\end{align}
Then the message update equation~\eqref{EqnBPUpdateEntry} can be
written as
\begin{align}
[\upfun_{\LowerDirEdge{\fnode}{\snode}} (\mes)](\cdot) \; & = \;
\Exs_Y \big[ \compatfun_{\snode \fnode}(\cdot, Y) \big].
\end{align}
\end{lemma}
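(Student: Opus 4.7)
The plan is to verify the identity directly by unpacking both sides and observing that the factor $\margfun_{\snode\fnode}(y)$ introduced in the definition~\eqref{EqnNormCompat} is, by design, exactly what is needed to cancel. No probabilistic machinery is needed beyond writing the expectation as an integral against the density~\eqref{EqnPdens}.

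First I would write
\begin{equation*}
\Exs_Y\big[\compatfun_{\snode\fnode}(\cdot, Y)\big]
\; = \; \frac{1}{Z}\int_{\Space} \compatfun_{\snode\fnode}(\cdot, y)\,
  \margfun_{\snode\fnode}(y)\!\prod_{\CompNeig}\mes_{\LowerDirEdge{\tnode}{\fnode}}(y)\,\measure(dy),
\end{equation*}
where $Z \defn \int_{\Space}\margfun_{\snode\fnode}(y)\prod_{\CompNeig}\mes_{\LowerDirEdge{\tnode}{\fnode}}(y)\,\measure(dy)$ is the normalizing constant from~\eqref{EqnPdens}. Next I would substitute the explicit form $\compatfun_{\snode\fnode}(\cdot, y) = \epot(\cdot, y)\,\poten_\fnode(y)/\margfun_{\snode\fnode}(y)$ from~\eqref{EqnNormCompat}. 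The two occurrences of $\margfun_{\snode\fnode}(y)$ cancel, yielding
\begin{equation*}
\Exs_Y\big[\compatfun_{\snode\fnode}(\cdot, Y)\big]
\; = \; \frac{1}{Z}\int_{\Space} \epot(\cdot, y)\,\poten_\fnode(y)\!\prod_{\CompNeig}\mes_{\LowerDirEdge{\tnode}{\fnode}}(y)\,\measure(dy),
\end{equation*}
which matches the BP update~\eqref{EqnBPUpdateEntry} provided $1/Z$ plays the role of the normalizing constant $\kappa$.

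To close the argument I would verify that this identification of $\kappa$ with $1/Z$ is forced. The left-hand side $[\upfun_{\LowerDirEdge{\fnode}{\snode}}(\mes)](\cdot)$ is by definition an element of $\FunSpace$, hence integrates to $1$ against $\measure$. For the right-hand side, using the fact~\eqref{EqnNormCompat} that $\int_{\Space}\compatfun_{\snode\fnode}(x, y)\,\measure(dx) = 1$ for each $y$, Fubini's theorem gives
\begin{equation*}
\int_{\Space}\Exs_Y\big[\compatfun_{\snode\fnode}(x, Y)\big]\,\measure(dx)
\; = \; \Exs_Y\!\left[\int_{\Space}\compatfun_{\snode\fnode}(x, Y)\,\measure(dx)\right]
\; = \; 1,
\end{equation*}
so both sides are normalized densities whose unnormalized forms agree up to a multiplicative constant; the constants must therefore coincide.

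There is no real obstacle here: the lemma is a definitional rearrangement, and the only thing to be careful about is that the substitution is legitimate (i.e.\ $\margfun_{\snode\fnode}(y) > 0$ $\measure$-a.e., which follows from the positivity assumption $\poten_\fnode > 0$ and $\epot > 0$ in~\eqref{EqnPairwiseFact}) and that the normalization constants on both sides match, which the two-line Fubini calculation above handles.
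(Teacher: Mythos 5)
Your proposal is correct and follows essentially the same route as the paper: both proofs are a direct unwinding of the definitions in which the factor $\margfun_{\snode\fnode}(y)$ cancels between $\compatfun_{\snode\fnode}$ and the density $\phackno{\mes}$, with Fubini's theorem used to handle the normalization. The only cosmetic difference is that you start from the expectation and match constants by noting both sides integrate to one, whereas the paper starts from the BP update and exchanges the order of integration in the explicit denominator; these are equivalent.
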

\begin{proof}
Let us introduce the convenient shorthand $\mesprod (y) = \prod
\limits_{\CompNeig} \mes_{\LowerDirEdge{\tnode}{\fnode}}(y)$.  By
  definition~\eqref{EqnBPUpdateEntry} of the message update, we have
\begin{align*}
[\upfun_{\LowerDirEdge{\fnode}{\snode}} (\mes)] (\cdot) \; = \; \frac
{\int_{\Space} \big(\epot(\cdot,y) \: \npot(y) \, \mesprod(y) \measure
  (dy)} {\int_{\Space}\int_{\Space} \big(\epot(x,y) \: \npot(y)
  \mesprod (y) \big) \: \measure (dy) \:\measure(dx)}.
\end{align*}
Since the integrand is positive, by Fubini's theorem~\cite{Durrett95},
we can exchange the order of integrals in the denominator. Doing so
and simplifying the expression yields 
\begin{align}
\label{EqnDefProbDens}
[\upfun_{\LowerDirEdge{\fnode}{\snode}} (\mes)] (\cdot) \; = \;
\int_{\Space} \underbrace{\frac{\epot (\cdot, y)}{\int_{\Space} \epot
    (x,y) \measure (dx)}}_{\compatfun_{\snode \fnode}(\cdot ,y)} \;
\underbrace{\frac{\margfun_{\snode\fnode}(y) \mesprod (y)}
  {\int_{\Space}\margfun_{\snode\fnode}(z) \mesprod (z)
    \measure(dz)}}_{\phack{\mes}{y}} \; \measure(dy),
\end{align}
which establishes the claim.
\end{proof}

Based on Lemma~\ref{LemCuteObservation}, we can obtain a \stochastic
approximation to the message update by drawing $\numsampY$
i.i.d. samples $Y_i$ from the density~\eqref{EqnPdens}, and then
computing $\sum_{i=1}^\numsampY \compatfun_{\snode \fnode}(\cdot, Y_i)
\: / \: \numsampY$.  Given the non-negativity and chosen normalization
of $\compatfun_{\snode \fnode}$, note that this estimate belongs to
$\FunSpace$ by construction.  Moreover, it is an unbiased estimate of
the correctly updated message, which plays an important role in our
analysis.

\subsection{Precise description of the algorithm}

The \ALG algorithm involves a combination of the orthogonal series
expansion techniques and \stochastic methods previously described.
Any particular version of the algorithm is specified by the choice of
basis $\{\basis_j\}_{j=1}^\infty$ and two positive integers: the
number of coefficients $\dimn$ that are maintained, and the number of
samples $\numsampY$ used in the \stochastic update.  Prior to running
the algorithm, for each directed edge $\DirEdge{\fnode}{\snode}$, we
pre-compute the inner products
\begin{align}
\label{EqnHACK}
\SUPERHACK{j} (\realize_\fnode) \; & \defn \; \underbrace{\int_\Space
  \compatfun_{\snode \fnode}(\realize_\snode, \realize_\fnode)
  \basis_j(\realize_\snode) \, \measure(d
  \realize_\snode),}_{\inprod{\compatfun_{\snode \fnode}(\cdot, \;
    \realize_\fnode)}{\phi_j(\cdot)}} \qquad
\mbox{for $j = 1, \ldots, \dimn$.}
\end{align}
When $\poten_{\snode \fnode}$ is a symmetric and positive semidefinite
kernel function, these inner products have an explicit and simple
representation in terms of its Mercer eigendecomposition (see
Section~\ref{SecKernel}).  In the general setting, these $\dimn$ inner
products can be computed via standard numerical integration
techniques.  Note that this is a fixed (one-time) cost prior to
running the algorithm.

\begin{figure}[ht!]
\framebox[.99\textwidth]{
\parbox{0.96\textwidth}{
{\bf{\ALG algorithm for marginalization:}}
\begin{enumerate}
\item At time $t = 0$, initialize the message coefficients
\begin{align*}
\coeffdirthree{0}{j} & =  1/\dimn \quad \mbox{for all $j = 1, \ldots,
  \dimn$, and $\DirEdge{\fnode}{\snode} \in \DirSet$.}
\end{align*}

\item For iterations $t = 0, 1, 2, \ldots$, and for each directed edge
  $\DirEdge{\fnode}{\snode}$
\begin{enumerate}
\item[(a)] Form the projected message approximation
  \mbox{$\meshat^\Time_{\LowerDirEdge{\tnode}{\fnode}}(\cdot) = \Big[
      \sum_{j=1}^\dimn
      \coeff^{\Time}_{\LowerDirEdge{\tnode}{\fnode};j} \basis_j(\cdot)
      \Big]_+$},
for all $\CompNeig$.
\item[(b)] Draw $\numsampY$ i.i.d. samples $\PlainSamp_i$ from the
  probability density proportional to
\begin{align}
\label{EqnProbDensity}
 \margfun_{\snode\fnode}(y) \; \prod_{\CompNeig} \!\!
 \meshat_{\LowerDirEdge{\tnode}{\fnode}}^{\Time}(y),
\end{align}
where $\margfun_{\snode \fnode}$ was previously defined in
equation~\eqref{EqnNormCompat}.
\item[(c)] Use the samples $\{\PlainSamp_1, \ldots,
  \PlainSamp_\numsampY \}$ from step (b) to compute
\begin{align}
\label{EqnUpdateInov}
\coefftilthree{\Time+1}{j} \; & \defn \; \frac{1}{\numsampY} \:
\sum_{i=1}^{\numsampY} \: \SUPERHACK{j}(\PlainSamp_i) \qquad \mbox{for $j
  = 1, 2, \ldots, \dimn$,}
\end{align}
where the function $\SUPERHACK{j}$ is defined in
equation~\eqref{EqnHACK}.

\item For step size $\step = 1/(\Time+1)$, update the
  $\dimn$-dimensional message coefficient vectors $\coeffup{t} \mapsto
  \coeffup{t+1}$ via
\begin{align}
\label{EqnUpdateCoef}
\coeffup{\Time+1} \; & = \; (1 - \step) \: \coeffup{\Time} \, + \, \step \:
\coefftil{\Time+1}.
\end{align}
\end{enumerate}
\end{enumerate}
} 
} 
\caption{The $\ALG$ algorithm for continuous state spaces.}
\label{FigAlg}
\end{figure}

At each iteration $\Time = 0, 1, 2, \ldots$, the algorithm maintains
an $\dimn$-dimensional vector of basis expansion coefficients
\begin{align*}
\coeffup{\Time} = (\coeffupdown{\Time}{1}, \ldots,
\coeffupdown{\Time}{\dimn}) \in \real^\dimn, \quad \mbox{on directed
  edge $\DirEdge{\fnode}{\snode} \in \DirSet$.}
\end{align*}
This vector should be understood as defining the current message
approximation $\mes^\Time_{\LowerDirEdge{\fnode}{\snode}}$ on edge
$\DirEdge{\fnode}{\snode}$ via the expansion
\begin{align}
\label{EqnCoeff2Mess}
\mes^{\Time}_{\LowerDirEdge{\fnode}{\snode}}(\cdot) \; \defn \;
\sum_{j=1}^\dimn \coeffupdown{\Time}{j} \; \basis_j(\cdot)
\end{align}
We use $\fullcoeff{\Time} = \big \{ \coeffup{\Time},
\DirEdge{\fnode}{\snode} \in \DirSet \big \}$ to denote the full set
of $\dimn \, |\DirSet|
$ coefficients that are maintained by the
algorithm at iteration $\Time$.  With this notation, the algorithm
consists of a sequence of steps, detailed in Figure~\ref{FigAlg}, that
perform the update $\fullcoeff{\Time} \mapsto \fullcoeff{\Time+1}$,
and hence implicitly the update $\mes^{\Time} \mapsto \mes^{\Time+1}$.

As can be seen by inspection of the steps in Figure~\ref{FigAlg}, each
iteration requires $\order(\dimn \numsampY)$ floating point operations
per directed edge, which yields a total of $\order(\dimn \numsampY \,
|\DirSet|)$ operations per iteration.


\section{Theoretical guarantees}
\label{SecTheory}

We now turn to a theoretical analysis of the \ALG algorithm, and
guarantees relative to the fixed points of the true BP algorithm.  For
any tree-structured graph, the BP algorithm is guaranteed to have a
unique message fixed point $\mes^* = \{
\mes^*_{\LowerDirEdge{\fnode}{\snode}}, \; \DirEdge{\fnode}{\snode}
\in \DirSet\}$.  For graphs with cycles, uniqueness is no longer
guaranteed, which would make it difficult to compare with the \ALG
algorithm.  Accordingly, in our analysis of the loopy graph, we make a
natural contractivity assumption, which guarantees uniqueness of the
fixed point $\mes^*$. \\

The \ALG algorithm generates a random sequence
$\{\coeff^\Time\}_{\Time=0}^\infty$, which define message
approximations $\{\mes^\Time\}_{\Time=0}^\infty$ via the
expansion~\eqref{EqnCoeff2Mess}.  Of interest to us are the following
questions:
\begin{itemize}
\item under what conditions do the message iterates approach a
  neighborhood of the BP fixed point $\mes^*$ as $\Time \rightarrow
  +\infty$?
\item when such convergence takes place, how fast is it?
\end{itemize}

\noindent In order to address these questions, we separate the error
in our analysis into two terms: algorithmic error and approximation
error.  For a given $\dimn$, let $\Pir$ denote the projection operator
onto the span of $\{\basis_1, \ldots, \basis_\dimn\}$.  In detail,
given a function $f$ represented in terms of the infinite series
expansion $f = \sum_{j=1}^\infty \coeff_j \basis_j$, we have
\begin{align*}
\Pir(f) \; & \defn \; \sum_{j=1}^\dimn \coeff_j \basis_j.
\end{align*}
 For each directed edge $\DirEdge{\fnode}{\snode} \in \DirSet$, define
 the functional error
\begin{align}\label{EqnDefEstErr}
 \ItErr{\Time} \; \defn \; \mesup{\Time} - \Pir(\mesup{*})
\end{align}
 between the message approximation at time $\Time$, and the BP fixed
 point projected onto the first $\dimn$ basis functions. Moreover,
 define the approximation error at the BP fixed point as
\begin{align}\label{EqnDefApproxErr}
\GoodAppDown{\dimn} \; \defn \; \mes^*_\LDE - \Pir(\mes^*_\LDE).
\end{align}
Since $\ItErr{\Time}$ belongs to the span of the first $\dimn$ basis
functions, the Pythagorean theorem implies that the overall error can
be decomposed as
\begin{align}
\| \mesup{\Time} - \mesup{*} \|_{L^2}^2 \quad \; & = \; \underbrace{\|
  \ItErr{\Time}\|_{L^2}^2}_{\mbox{Estimation error}} \quad + \;
\underbrace{\norm{\GoodAppDown{\dimn}}^2}_{\mbox{Approximation error}}
\end{align}
Note that the approximation error term is independent of the iteration
number $\Time$, and can only be reduced by increasing the number
$\dimn$ of coefficients used in the series expansion.  Our analysis of
the estimation error is based on controlling the
$|\DirSet|$-dimensional error vector
\begin{align}
\label{EqnDefnMassErr}
\MASSERR{\ItErrNo{\Time}} \; & \defn \; \big \{ \|\ItErr{\Time} \|_{L^2}^2,
\; \DirEdge{\fnode}{\snode} \in \DirSet \big \} \; \in \; \real^{|\DirSet|},
\end{align}
and in particular showing that it decreases as $\order(1/\Time)$ up to
a lower floor imposed by the approximation error. In order to analyze
the approximation error, we introduce the $\Dimn$-dimensional vector
of approximation errors
\begin{align}\label{EqnDefnMassErrApp}
\MASSERR{\GoodApp{\dimn}} & \defn \big \{ \|\GoodAppDown{\dimn }
\|_{L^2}^2, \; \DirEdge{\fnode}{\snode} \in \DirSet \big \} \in
\real^{|\DirSet|}.
\end{align}
By increasing $\dimn$, we can reduce this approximation error term,
but at the same time, we increase the computational complexity of each
update.  In Section~\ref{SecKernel}, we discuss how to choose $\dimn$
so as to trade-off the estimation and approximation errors with
computational complexity.


\subsection{Bounds for tree-structured graphs}

With this set-up, we now turn to bounds for tree-structured graphs.
Our analysis of the tree-structured case controls the vector of errors
$\MASSERR{\ItErrNo{\Time}}$ using a nilpotent matrix $\NilMat \in
\real^{\Dimn \times \Dimn}$ determined by the tree
structure~\cite{NooWai12a}. Recall that a matrix $\NilMat$ is
nilpotent with order $\ell$ if $\NilMat^\ell = 0$ and
$\NilMat^{\ell-1} \neq 0$ for some $\ell$. As illustrated in
Figure~\ref{FigNilpotent}, the rows and columns of $\NilMat$ are
indexed by directed edges. For the row indexed by
$\DirEdge{\fnode}{\snode}$, there can be non-zero entries only for
edges in the set $\{ \DirEdge{w}{\fnode}, \; w \in \Neig(\fnode)
\backslash \{\snode\} \}$. These directed edges are precisely those
that pass messages relevant in updating the message from $\fnode$ to
$\snode$, so that $\NilMat$ tracks the propagation of message
information in the graph. As shown in our previous work (see Lemma 1
in the paper~\cite{NooWai12a}), the matrix $\NilMat$ with such
structure is nilpotent with degree at most the diameter of the
tree. (In a tree, there is always a unique edge-disjoint path between
any pair of nodes; the diameter of the tree is the length of the
longest of these paths.)

\begin{figure}[h]
\begin{center}
\begin{tabular}{ccc}
\raisebox{.4in}{
\begin{psfrags}
\psfrag{*1*}{$1$}
\psfrag{*2*}{$2$}
\psfrag{*3*}{$3$}
\psfrag{*4*}{$4$}
\widgraph{.26\textwidth}{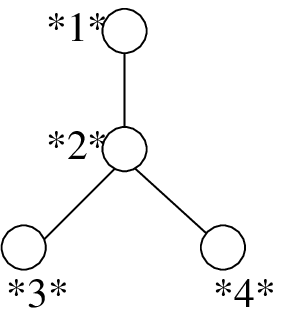}
\end{psfrags}
}
 & \quad & 
\begin{psfrags}
\psfrag{*m12*}{$\mes_{\LowerDirEdge{1}{2}}$}
\psfrag{*m21*}{$\mes_{\LowerDirEdge{2}{1}}$}
\psfrag{*m32*}{$\mes_{\LowerDirEdge{3}{2}}$}
\psfrag{*m23*}{$\mes_{\LowerDirEdge{2}{3}}$}
\psfrag{*m24*}{$\mes_{\LowerDirEdge{2}{4}}$}
\psfrag{*m42*}{$\mes_{\LowerDirEdge{4}{2}}$}
\widgraph{.5\textwidth}{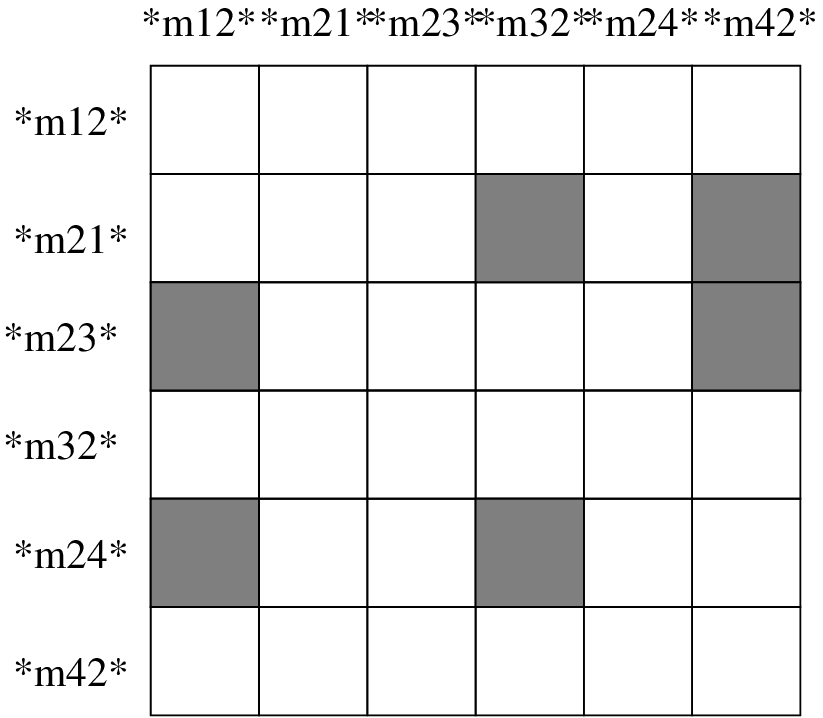}
\end{psfrags} \\
(a) & \quad & (b)
\end{tabular}
\caption{(a) A simple tree with $|\edge| = 3$ edges and hence
  $|\DirSet| = 6$ directed edges. (b) Structure of nilpotent matrix
  $\NilMat \in \real^{|\DirSet| \times |\DirSet|}$ defined by the
  graph\ in (a).  Rows and columns of the matrix are indexed by
  directed edges $\DirEdge{\fnode}{\snode} \in \DirSet$; for the row
  indexed by $\DirEdge{\fnode}{\snode}$, there can be non-zero entries
  only for edges in the set $\{ \DirEdge{w}{\fnode}, \; w \in
  \Neig(\fnode) \backslash \{\snode\} \}$.}
\label{FigNilpotent}
\end{center}
\end{figure}


Moreover, our results on tree-structured graphs impose one condition
on the vector of approximation errors $\GoodApp{\dimn}$, namely that
\begin{align}\label{EqnAnnoy}
\inf_{y\in\Space}\Pir\big(\compatfun_{\snode\fnode}(x, y) \big) \; >
\; 0, \quad \text{and} \quad |\GoodAppDown{\dimn}(x)| \; \le \;
\frac{1}{2} \: \inf_{y \in \Space} \Pir
\big(\compatfun_{\snode\fnode}(x, y) \big)
\end{align}
for all $x\in\Space$ and all directed edges $\DirEdge{\fnode}{\snode}
\in \DirSet$.  This condition ensures that the $L^2$-norm of the
approximation error is not too large relative to the compatibility
functions. Since
$\sup_{x, y\in\Space}|\Pir\big(\compatfun_{\snode\fnode}(x, y) \big) -
\compatfun_{\snode\fnode}(x, y)| \to 0$ and
$\sup_{x\in\Space}|\GoodAppDown{\dimn}(x)| \to 0$ as $\dimn \to
+\infty$, assuming that the compatibility functions are uniformly
bounded away from zero, condition~\eqref{EqnAnnoy} will hold once the
number of basis expansion coefficients $\dimn$ is sufficiently
large. Finally, our bounds involve the constants
\begin{align}
\label{EqnDefnBound}
\bound_j \; & \defn \; \max \limits_{\DirEdge{\fnode}{\snode} \in
  \DirSet} \: \sup \limits_{y \in \Space} \:
\inprod{\compatfun_{\snode\fnode}(\cdot, y)}{\basis_j}.
\end{align}
With this set-up, we have the following guarantees:

\begin{theorem}
\label{ThmTree} 
Suppose that $\Space$ is closed and bounded, the node and edge
potential functions are continuous, and that
condition~\eqref{EqnAnnoy} holds.  Then for any tree-structured model,
the sequence of messages $\{\mes^{\Time}\}_{\Time = 0}^{\infty}$
generated by the \ALG algorithm have the following properties:
\begin{enumerate}
\item[(a)] There is a nilpotent matrix $\NilMat \in \real^{|\DirSet|
  \times |\DirSet|}$ such that the error vector
  $\MASSERR{\ItErrNo{\Time}}$ converges almost surely to the set
\begin{align}
\ball & \defn \big \{ e \in \real^{|\DirSet|} \; \mid \; |e| \coneleq
\NilMat (I-\NilMat)^{-1} \MASSERR{\GoodApp{\dimn}} \big \},
\end{align}
where $\coneleq$ denotes elementwise inequality for vectors.
\item[(b)] Furthermore, for all iterations $\Time = 1, 2, \ldots$, we
  have
\begin{align}
\Expt \big[ \MASSERR{\ItErrNo{\Time}} \big] \; & \coneleq \; \Big( 6
\: \sum_{j=1}^{\dimn}\bound_j^2 \Big) \; \frac{(I - \log{\Time} \:
  \NilMat)^{-1}}{\Time} \: (\NilMat \: \onevec \: + \: 16)\, + \,
\NilMat (I-\NilMat)^{-1} \MASSERR{\GoodApp{\dimn}}.
\end{align}
\end{enumerate}
\end{theorem}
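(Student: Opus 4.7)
The plan is to derive a coordinatewise inequality for the vector $\MASSERR{\ItErrNo{\Time}}$ by separating the SOSMP update into a deterministic bias term plus a mean-zero stochastic term, and then to unroll the resulting recursion using nilpotency of $\NilMat$. I would start from the identity
\begin{equation*}
\ItErr{\Time+1} \; = \; (1-\step)\,\ItErr{\Time} \; + \; \step \bigl[ \Pir(\upfun_{\LDE}(\meshat^{\Time})) - \Pir(\mesup{*}) \bigr] \; + \; \step \bigl[ \coefftil{\Time+1} - \Pir(\upfun_{\LDE}(\meshat^{\Time})) \bigr],
\end{equation*}
obtained by adding and subtracting $\Pir(\upfun_{\LDE}(\meshat^{\Time}))$ to the SOSMP increment and using the BP fixed-point relation $\upfun_{\LDE}(\mes^*) = \mesup{*}$.

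Next I would control each bracketed piece. For the deterministic bias, Lemma~\ref{LemCuteObservation} expresses $\upfun_{\LDE}(\meshat^{\Time})$ as an expectation of $\compatfun_{\snode\fnode}(\cdot, Y)$ under the density $\phack{\meshat^{\Time}}{\cdot}$, which depends only on incoming messages $\meshat^{\Time}_{w \to \fnode}$ for $w \in \Neig(\fnode)\setminus\{\snode\}$. Using condition~\eqref{EqnAnnoy} to keep the denominator in~\eqref{EqnDefProbDens} bounded away from zero, and then applying a telescoping/mean-value argument to the product of messages together with the decomposition $\meshat^{\Time}_{w\to\fnode} = \Pir(\mes^*_{w\to\fnode}) + \ItErr{\Time}_{w\to\fnode} - (\text{non-negativity correction})$, this bias can be linearized so that its squared $L^2$-norm is bounded by $\sum_{w} \NilMat_{\LDE,\, w\to\fnode}\, \|\ItErr{\Time}_{w\to\fnode}\|_{L^2}^2$ plus a residual controlled by $\MASSERR{\GoodApp{\dimn}}$; this is precisely the row of $\NilMat$ associated with $\LDE$. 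For the stochastic term, conditioning on the filtration $\filt^{\Time}$ gives a mean-zero sum of $\numsampY$ i.i.d.\ contributions per coefficient, whose conditional variance is at most $\bound_j^2/\numsampY$ by the definition~\eqref{EqnDefnBound}, and Parseval yields $\Exs\big[\|\coefftil{\Time+1} - \Pir(\upfun_{\LDE}(\meshat^{\Time}))\|_{L^2}^2 \mid \filt^{\Time}\big] \le \tfrac{1}{\numsampY}\sum_{j=1}^{\dimn} \bound_j^2$.

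Combining these bounds with $\step = 1/(\Time+1)$ and absorbing the cross term via Cauchy--Schwarz gives a vector recursion
\begin{equation*}
\Exs[\MASSERR{\ItErrNo{\Time+1}}] \; \coneleq \; \bigl(1-\tfrac{1}{\Time+1}\bigr) \Exs[\MASSERR{\ItErrNo{\Time}}] \; + \; \tfrac{1}{\Time+1}\,\NilMat\, \Exs[\MASSERR{\ItErrNo{\Time}}] \; + \; \tfrac{C}{(\Time+1)^2}\,\onevec \; + \; \tfrac{1}{\Time+1}\,\NilMat\, \MASSERR{\GoodApp{\dimn}},
\end{equation*}
where $C$ collects the constants $\sum_j \bound_j^2$. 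Unrolling this recursion and using the Neumann series identity $(I - x \NilMat)^{-1} = \sum_{i=0}^{\ell-1} x^i \NilMat^i$ (finite because $\NilMat^\ell = 0$), and bounding harmonic partial sums $\sum_{s=1}^{\Time} 1/s \le 1 + \log \Time$, produces the rate $(I - \log \Time\, \NilMat)^{-1}/\Time$ on the estimation error and the floor $\NilMat(I-\NilMat)^{-1}\MASSERR{\GoodApp{\dimn}}$ from the geometric accumulation of approximation errors along tree paths. For part (a), the almost-sure statement then follows by applying the Robbins--Siegmund convergence lemma to the non-negative supermartingale obtained after subtracting off the deterministic floor: summability $\sum_\Time \step^2 < \infty$ of squared step sizes forces the stochastic fluctuations to vanish almost surely, while nilpotency drives the iterates into $\ball$.

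The main obstacle is the linearization step~2: the density $\phack{\meshat^{\Time}}{\cdot}$ depends multiplicatively on all incoming messages and on a normalizing integral, so extracting a clean linear-in-$\ItErr{\Time}$ bound with nilpotent structure (and controlled residual involving $\GoodApp{\dimn}$) requires careful use of condition~\eqref{EqnAnnoy} to guarantee that the denominators and the projected compatibility functions stay uniformly positive, so that the mean-value-style telescoping between $\meshat^{\Time}$ and $\Pir(\mes^*)$ is well-defined. This is the step where the structural assumption on the tree gets encoded into the nilpotent matrix $\NilMat$ described in Figure~\ref{FigNilpotent}.
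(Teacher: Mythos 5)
Your overall architecture matches the paper's: a bias/noise split of the update, a Lipschitz (Fr\'echet-derivative) linearization of $\upfun_{\LDE}$ that encodes the tree structure into the nilpotent matrix $\NilMat$ (with condition~\eqref{EqnAnnoy} used exactly as you say, to keep the normalizing denominator of $\phackno{\mes}$ bounded away from zero), non-expansivity of $\Pir$ and the positive-part projection plus Pythagoras to split $\|\meshat^\Time_{w\to\fnode}-\mes^*_{w\to\fnode}\|^2$ into $\|\ItErr{\Time}_{w\to\fnode}\|^2 + \|\GoodApptwo{\dimn}{w\to\fnode}\|^2$, and an unrolling via $I+\NilMat+\cdots+\NilMat^{\diam-1}=(I-\NilMat)^{-1}$ to produce the floor $\NilMat(I-\NilMat)^{-1}\MASSERR{\GoodApp{\dimn}}$. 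Where you genuinely diverge is the intermediate recursion: the paper first unwraps the coefficient update with $\step=1/(\Time+1)$ into a Ces\`aro average, $\coeffdirthree{\Time+1}{j}-\coeffbptwo{j}=\frac{1}{\Time+1}\sum_\newt[\upfuncoeftwo{\newt}{j}-\coeffbptwo{j}]+\frac{1}{\Time+1}\sum_\newt\deviatetwo{\newt+1}{j}$, and only then squares (Lemma~\ref{LemTwoTerm}); you square the one-step update and obtain a damped linear recursion $x^{\Time+1}\coneleq(1-\step)x^\Time+\step\NilMat x^\Time+\step\NilMat\MASSERR{\GoodApp{\dimn}}+\step^2 C\onevec$. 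Your route is arguably more standard and yields the same floor and the same $(\log\Time)^s\NilMat^s$ structure via the elementary symmetric functions of $\{1/\newt\}$, while the paper's route buys a cleaner handle on the martingale part (Azuma applied to the averaged differences) and a clean iterated-Ces\`aro argument for almost-sure convergence.

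Two points need repair. First, your recursion accumulates the noise as $\sum_{\newt\le\Time}\frac{\newt+1}{\Time+1}\cdot\frac{4\sum_j\bound_j^2}{(\newt+1)^2}=\Theta\big(\sum_j\bound_j^2\,\log\Time/\Time\big)$, so the leading (identity-matrix) term of your bound carries an extra $\log\Time$ relative to the stated part (b), whose leading term is $\order(\sum_j\bound_j^2/\Time)$ with no logarithm; the Ces\`aro-then-square order of operations is what removes that factor. Second, Robbins--Siegmund does not apply coordinatewise as you invoke it: the coupling term $\step\,(\NilMat\,\MASSERR{\ItErrNo{\Time}})_{\LDE}$ is of order $1/\Time$ times a merely bounded quantity, hence not summable, and subtracting the floor does not yield a nonnegative supermartingale. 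The fix is to use the nilpotent structure constructively --- leaf-directed edges have zero rows in $\NilMat$, so their errors converge a.s.\ by a scalar argument, and one then propagates inward over at most $\diam$ levels (equivalently, the paper's iterated Ces\`aro-mean argument applied to $\term_0^\Time,\ldots,\term_{\diam-1}^\Time$). With those two adjustments your proof goes through.
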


To clarify the statement in part (a), it guarantees that the
difference $\MASSERR{\ItErrNo{\Time}} - \Pi_\ball \big(
\MASSERR{\ItErrNo{\Time}} \big)$ between the error vector and its
projection onto the set $\ball$ converges almost surely to zero.  Part
(b) provides a quantitative guarantee on how quickly the expected
absolute value of this difference converges to zero.  In particular,
apart from logarithmic factors in $\Time$, the convergence rate
guarantees is of the order $\order(1/\Time)$.


\subsection{Bounds for general graphs}

Our next theorem addresses the case of general graphical models.  The
behavior of the ordinary BP algorithm to a graph with cycles---in
contrast to the tree-structured case---is more complicated.  On one
hand, for strictly positive potential functions (as considered in this
paper), a version of Brouwer's fixed point theorem can be used to
established existence of fixed points~\cite{WaiJorBook08}.  However,
in general, there may be multiple fixed points, and convergence is not
guaranteed.  Accordingly, various researchers have studied conditions
that are sufficient to guarantee uniqueness of fixed points and/or
convergence of the ordinary BP algorithm: one set of sufficient
conditions, for both uniqueness and convergence, involve assuming that
the BP update operator is a contraction in a suitable norm
(e.g.,~\cite{Tatikonda02,Ihler05,MooijKapp07,RoostaEtal08}).

In our analysis of the \ALG algorithm for a general graph, we impose
the following form of \emph{contractivity}: there exists a constant $0
< \contfact < 2$ such that
\begin{align}
\label{EqnContraction}
\norm{\upfun_{\LowerDirEdge{\fnode}{\snode}}(\mes) -
  \upfun_{\LowerDirEdge{\fnode}{\snode}}(\mes^{\prime})} \; & \leq \;
\big(1 - \frac{\contfact}{2} \big) \: \sqrt{ \frac{1}{|\Neig(\fnode)
    \backslash \{\snode\}|} \sum_{\CompNeig}
  \norm{\mes_{\LowerDirEdge{\tnode}{\fnode}} -
  \mes_{\LowerDirEdge{\tnode}{\fnode}}^{\prime}}^2},
\end{align}
for all directed edges $\DirEdge{\fnode}{\snode} \in \DirSet$, and feasible
messages $\mes$, and $\mes^{\prime}$.  We say that the ordinary BP
algorithm is $\contfact$-contractive when
condition~\eqref{EqnContraction} holds. \\

\begin{theorem}
\label{ThmGeneral}
Suppose that the ordinary BP algorithm is
$\contfact$-contractive~\eqref{EqnContraction}, and consider the
sequence of messages $\{\mes^{\Time}\}_{\Time = 0}^{\infty}$ generated
with step-size $\step = 1/(\contfact (\Time+1))$.  Then for all $\Time
= 1, 2, \ldots$, the error sequence $\{\ItErr{\Time}
\}_{\Time=0}^\infty$ is bounded in mean-square as
\begin{align}
\label{EqnGeneralBound}
\Expt \big[ \MASSERR{\ItErrNo{\Time}} \big] \; & \coneleq \; \biggr[
  \bigg(\frac{8 \sum_{j=1}^{\dimn} \bound_j^2}{\contfact^2} \bigg) \:
  \frac{\log \Time}{\Time} + \frac{1}{\contfact}
  \max_{\DirEdge{\fnode}{\snode} \in \DirSet}
  \|\GoodAppDown{\dimn}\|_{L^2}^2 \biggr] \, \onevec.
\end{align}
where $\GoodAppDown{\dimn} = \mes^*_{\LDE} - \Pir(\mes^*_{\LDE})$
is the approximation error on edge $\DirEdge{\fnode}{\snode}$.
\end{theorem}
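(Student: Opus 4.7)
The plan is to derive a one-step recursion for the per-edge mean-squared error $M_\Time \defn \max_{\DirEdge{\fnode}{\snode}} \Expt \norm{\ItErr{\Time}}^2$ and iterate it against the prescribed step-size. By Parseval, $\norm{\ItErr{\Time}}^2$ equals the squared Euclidean norm of $\coeffup{\Time} - \coeffbp_{\LDE}$, where I write $\coeffbptwo{j} \defn \inprod{\mes^*_{\LDE}}{\basis_j}$ for the coefficients of $\Pir(\mes^*_{\LDE})$. Subtracting from~\eqref{EqnUpdateCoef} yields
$\coeffup{\Time+1} - \coeffbp_{\LDE} = (1-\step)(\coeffup{\Time} - \coeffbp_{\LDE}) + \step\,(\coefftil{\Time+1} - \coeffbp_{\LDE})$,
and I split the innovation $\coefftil{\Time+1} - \coeffbp_{\LDE}$ into a conditionally zero-mean sampling noise and a deterministic bias $D^{\Time+1} \defn \Expt[\coefftil{\Time+1} \mid \filt^\Time] - \coeffbp_{\LDE}$, where $\filt^\Time$ is the sigma-algebra of samples through iteration $\Time$.

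Squaring and conditioning on $\filt^\Time$, the bias-noise cross term vanishes, leaving
\begin{align*}
\Expt\bigl[\norm{\ItErr{\Time+1}}^2 \mid \filt^\Time\bigr] \;=\; (1-\step)^2 \norm{\ItErr{\Time}}^2 + 2(1-\step)\step\,\langle \ItErr{\Time},D^{\Time+1}\rangle + (\step)^2 \bigl(\|D^{\Time+1}\|^2 + V^{\Time+1}\bigr),
\end{align*}
with $V^{\Time+1}$ the conditional variance of $\coefftil{\Time+1}$. Lemma~\ref{LemCuteObservation} identifies $\Expt[\coefftilthree{\Time+1}{j} \mid \filt^\Time]$ with $\inprod{\upfun_{\LDE}(\meshat^\Time)}{\basis_j}$, so combining the fixed point property $\upfun(\mes^*) = \mes^*$, Bessel's inequality, and the contractivity~\eqref{EqnContraction} gives $\|D^{\Time+1}\|^2 \le (1-\contfact/2)^2 \, \tfrac{1}{|\Neig(\fnode) \setminus \{\snode\}|} \sum_{\CompNeig} \norm{\meshat^\Time_{\LowerDirEdge{\tnode}{\fnode}} - \mes^*_{\LowerDirEdge{\tnode}{\fnode}}}^2$. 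The pointwise non-expansivity $|\meshat^\Time - \mes^*| \le |\mes^\Time - \mes^*|$ (which holds because $\mes^* \ge 0$ and $[\cdot]_+$ is the metric projection onto non-negative functions), together with the Pythagorean identity $\norm{\mes^\Time_{\LDE} - \mes^*_{\LDE}}^2 = \norm{\ItErr{\Time}}^2 + \norm{\GoodAppDown{\dimn}}^2$, upgrades this to $\|D^{\Time+1}\|^2 \le (1-\contfact/2)^2 (M_\Time + A^*)$ with $A^* \defn \max \norm{\GoodAppDown{\dimn}}^2$, while $V^{\Time+1} \le \sum_{j=1}^\dimn \bound_j^2$ follows from~\eqref{EqnDefnBound}.

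For the cross term I combine Cauchy-Schwarz with the AM-GM identity $\sqrt{M_\Time(M_\Time+A^*)} \le M_\Time + A^*/2$. The quadratic contributions in $M_\Time$ then assemble into a perfect square; writing $c \defn 1 - \contfact/2$, this yields the clean recursion
\begin{align*}
M_{\Time+1} \;\le\; \bigl(1 - \step(1-c)\bigr)^2 M_\Time + \step\, c\, \bigl(1-\step(1-c)\bigr) A^* + (\step)^2 \textstyle\sum_{j=1}^\dimn \bound_j^2.
\end{align*}
Substituting $\step = 1/(\contfact(\Time+1))$ turns the contraction factor into $(1-\tfrac{1}{2(\Time+1)})^2$, and telescoping via $\prod_{k=s+1}^{\Time-1}(1-\tfrac{1}{2(k+1)})^2 \le (s+1)/\Time$ unrolls the recursion. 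Summing the bias term against this kernel produces the steady-state contribution $\le A^*/\contfact$, while summing the variance term against $(s+1)/\Time$ yields the $\order(\log\Time/\Time)$ contribution with prefactor proportional to $\sum \bound_j^2/\contfact^2$, recovering~\eqref{EqnGeneralBound} up to the stated constants.

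The main obstacle is the perfect-square rearrangement in the cross-term step. A naive $2ab \le a^2+b^2$ would produce only $(1-\step\contfact/2)$ as the coefficient of $M_\Time$, giving the weaker telescoping decay $\prod(1-\tfrac{1}{2(s+1)}) \sim 1/\sqrt{\Time}$, too slow to deliver the claimed $\log\Time/\Time$ rate. Recognizing the algebraic identity $(1-\step)^2 + 2(1-\step)\step\,c + (\step)^2 c^2 = (1-\step(1-c))^2$ doubles the decay exponent in the telescoping step, and the remaining work is standard Robbins-Monro-style bookkeeping tailored to the prescribed step-size.
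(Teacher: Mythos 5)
Your proposal is correct and follows essentially the same route as the paper: a per-edge mean-square recursion derived from the update~\eqref{EqnUpdateCoef}, with the drift term controlled by contractivity together with Bessel/Parseval, non-expansivity of $[\cdot]_+$, and the Pythagorean split into estimation and approximation error, then unwound under the step size $\step = 1/(\contfact(\Time+1))$. The only difference is bookkeeping: the paper centers the innovation at $\coeffup{\Time}$ and extracts the full $-\norm{\mes^\Time_{\LDE} - \Pir(\mes^*_{\LDE})}^2$ from the cross product to obtain the contraction factor $1-\step\contfact$ directly, whereas you center at the conditional mean and recover the asymptotically equivalent factor $(1-\step\contfact/2)^2$ via the perfect-square identity, so the ``main obstacle'' you highlight is already absorbed by the paper's choice of centering.
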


Theorem~\ref{ThmGeneral} guarantees that under the contractivity
condition~\eqref{EqnContraction}, the \ALG iterates converge to a
neighborhood of the BP fixed point. The error offset depends on the
approximation error term that decays to zero as $\dimn$ is increased.
Moreover, disregarding the logarithmic factor, the convergence rate is
$\order(1/\Time)$, which is the best possible for a stochastic
approximation scheme of this type~\cite{NemYu83,AgaBarRavWai12}.


\subsection{Explicit rates for kernel classes}
\label{SecKernel}

Theorems~\ref{ThmTree} and~\ref{ThmGeneral} are generic results that
apply to any choices of the edge potential functions.  In this
section, we pursue a more refined analysis of the number of arithmetic
operations that are required to compute a \emph{$\delta$-uniformly
  accurate} approximation to the BP fixed point $\mes^*$, where
$\delta > 0$ is a user-specified tolerance parameter.  By a
$\delta$-uniformly accurate approximation, we mean a collection of
messages $\mes$ such that
\begin{align}
\label{EqnDeltaAcc}
\max_{\DirEdge{\fnode}{\snode} \in \DirSet} \Exs \big[ \|\mes_{\LDE} -
  \mes^*_{\LDE}\|_{L^2}^2 \big] & \leq \delta.
\end{align}
In order to obtain such an approximation, we need to specify both the
number of coefficients $\dimn$ to be retained, and the number of
iterations that we should perform.  Based on these quantities, our
goal is to the specify the \emph{minimal number of basic arithmetic
  operations} $T(\delta)$ that are sufficient to compute a
$\delta$-accurate message appproximation.

In order to obtain concrete answers, we study this issue in the
context of kernel-based potential functions.  In many applications,
the edge potentials $\epot: \Space \times \Space \rightarrow \real_+$
are symmetric and positive semidefinite (PSD) functions, frequently
referred to as kernel functions.\footnote{In detail, a PSD kernel
  function has the property that for all natural numbers $n$ and
  $\{x_1, \ldots, x_n \} \subset \Space$, the $n \times n$ kernel
  matrix with entries $\epot(x_i, x_j)$ is symmetric and positive
  semidefinite.} Commonly used examples include the Gaussian kernel
$\epot(x,y) = \exp(-\gamma\|x - y\|_2^2)$, the closely related
Laplacian kernel, and other types of kernels that enforce smoothness
priors.  Any kernel function defines a positive semidefinite integral
operator, namely via equation~\eqref{EqnIntegralUpdate}.  When
$\Space$ is compact and the kernel function is continuous, then
Mercer's theorem~\cite{RieNagBook90} guarantees that this integral
operator has a countable set of eigenfunctions
$\{\basis_j\}_{j=1}^\infty$ that form an orthonormal basis of
$L^2(\Space; \measure)$.  Moreover, the kernel function has the
expansion
\begin{align}
\label{EqnEigenDecomp}
\epot (x, y) \; = \; \sum_{j=1}^{\infty} \eigen_{j} \: \basis_j(x) \:
\basis_j(y),
\end{align}
where $\eigen_1 \geq \eigen_2 \geq \cdots \geq 0$ are the eigenvalues,
all guaranteed to be non-negative.  In general, the eigenvalues might
differ from edge to edge, but we suppress this dependence for
simplicity in exposition.  We study kernels that are trace class,
meaning that the eigenvalues are absolutely summable (i.e.,
$\sum_{j=1}^\infty \eigen_j < \infty$).

For a given eigenvalue sequence $\{\myeig_j\}_{j=1}^\infty$ and some
tolerance $\delta > 0$, we define the \emph{critical dimension}
$\dimcrit = \dimcrit(\delta; \{\myeig_j\})$ to be the smallest
positive integer $\dimn$ such that
\begin{align}
\label{EqnDefRstar}
\myeig_{\dimn} & \leq \delta.
\end{align}
Since $\myeig_j \rightarrow 0$, the existence of $\dimcrit < + \infty$
is guaranteed for any $\delta > 0$.

\begin{theorem}
\label{ThmKernel}
In addition to the conditions of Theorem~\ref{ThmGeneral}, suppose
that the compatibility functions are defined by a symmetric PSD
trace-class kernel with eigenvalues $\{\lambda_j\}$. If we run the
\ALG algorithm with $\dimcrit = \dimcrit(\delta; \{\myeig_j\})$ basis
coefficients, then it suffices to perform
\begin{align}
\label{EqnNumOpBound}
\NumOp(\delta; \{\myeig_j\}) \; & = \; \order \Big( \dimcrit \:
\big(\sum_{j=1}^{\dimcrit}\myeig_j^2\big) \; \big(1/\delta\big) \;
\log(1/\delta) \Big)
\end{align}
arithmetic operations per edge in order to obtain a $\delta$-accurate
message vector $\mes$.
\end{theorem}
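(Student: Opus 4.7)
The strategy is to specialize Theorem~\ref{ThmGeneral} to the trace-class PSD kernel setting and then optimize the basis dimension $\dimn$ and number of iterations $\Time$.  The central observation is that both the projection constant $\bound_j$ from~\eqref{EqnDefnBound} and the tail of the basis expansion of the BP fixed point $\mesast$ can be controlled directly by the eigenvalue sequence $\{\myeig_j\}$, so the bound in~\eqref{EqnGeneralBound} reduces to an expression purely in terms of the $\myeig_j$'s.

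\emph{Step 1: bound $\bound_j$ via Mercer.} Substituting the Mercer expansion~\eqref{EqnEigenDecomp} into the definition~\eqref{EqnNormCompat} of the normalized compatibility function gives, for each $j$,
\begin{align*}
\inprod{\compatfun_{\snode\fnode}(\cdot, y)}{\basis_j} \; & = \; \frac{\npot(y)}{\margfun_{\snode\fnode}(y)} \: \myeig_j \: \basis_j(y).
\end{align*}
Compactness of $\Space$ together with continuity and strict positivity of the potentials renders the ratio $\npot/\margfun_{\snode\fnode}$ uniformly bounded; combined with the standard regularity assumption that the Mercer eigenfunctions are uniformly bounded on the compact domain, this yields $\bound_j = \order(\myeig_j)$ and hence $\sum_{j=1}^{\dimcrit}\bound_j^2 = \order\big(\sum_{j=1}^{\dimcrit}\myeig_j^2\big)$.

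\emph{Step 2: approximation-error decay.} Applying Lemma~\ref{LemCuteObservation} at the fixed point shows that $\mesast_{\LDE}(\cdot) = \Exs_Y[\compatfun_{\snode\fnode}(\cdot, Y)]$, so the $j$-th basis coefficient of $\mesast_{\LDE}$ is bounded in absolute value by $\bound_j = \order(\myeig_j)$.  Parseval's identity together with the trace-class hypothesis $\sum_{j \ge 1}\myeig_j < \infty$ then gives
\begin{align*}
\norm{\GoodAppDown{\dimn}}^2 \; = \; \sum_{j > \dimn}(\coeffbptwo{j})^2 \; & \lesssim \; \myeig_\dimn \sum_{j \geq 1}\myeig_j \; = \; \order(\myeig_\dimn),
\end{align*}
so the choice $\dimn = \dimcrit$ from~\eqref{EqnDefRstar} forces $\max_{\LDE}\norm{\GoodAppDown{\dimcrit}}^2 = \order(\delta)$.

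\emph{Step 3: balance and multiply by arithmetic cost.} Feeding the bound from Step 1 into~\eqref{EqnGeneralBound} and insisting that the algorithmic term match the tolerance, one obtains (via a standard log fixed-point argument) $\Time = \order\big((\sum_{j=1}^{\dimcrit}\myeig_j^2)/\delta \cdot \log(1/\delta)\big)$, while Step 2 simultaneously controls the approximation term. Since each \ALG iteration performs $\order(\dimcrit)$ arithmetic operations per directed edge (with the number of samples $\numsampY$ being a constant), the total per-edge cost is $\dimcrit \cdot \Time$, matching~\eqref{EqnNumOpBound}. The main obstacle is Step 1: the reduction $\bound_j = \order(\myeig_j)$ hinges on uniform boundedness of the Mercer eigenfunctions $\{\basis_j\}$, which is a standard but nontrivial regularity condition on the kernel; given this, Steps 2 and 3 amount to routine computations.
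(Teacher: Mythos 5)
Your proposal is correct and follows the same overall architecture as the paper's proof: split the bound of Theorem~\ref{ThmGeneral} into estimation and approximation terms, show $\bound_j = \order(\myeig_j)$ via the Mercer expansion (the paper's computation gives exactly your $\myeig_j \basis_j(y) / \int_\Space \epot(x,y)\,\measure(dx)$, with the same implicit reliance on uniform boundedness of the eigenfunctions over the compact domain), conclude $\Time = \order\big((\sum_{j=1}^{\dimcrit}\myeig_j^2)(1/\delta)\log(1/\delta)\big)$, and multiply by the $\order(\dimcrit)$ per-edge, per-iteration cost. The one place you genuinely diverge is the approximation-error tail. You bound each fixed-point coefficient pointwise, $|\coeffbptwo{j}| \le \bound_j = \order(\myeig_j)$, and then use monotonicity plus the trace-class condition to get $\sum_{j > \dimn} (\coeffbptwo{j})^2 \le \myeig_\dimn \sum_j \myeig_j = \order(\myeig_\dimn)$. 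The paper instead exploits the smoothing action of the integral operator at the fixed point: writing $\mesast_\LDE \propto \int \epot(\cdot, y) M(y)\,\measure(dy)$ with $M \in L^2$, it shows $\coeffbptwo{j} \propto \myeig_j \widetilde{\coeff}_j$ with $\sum_j \widetilde{\coeff}_j^2 < \infty$, hence $\sum_j (\coeffbptwo{j})^2 / \myeig_j < \infty$, and then peels off one factor of $\myeig_\dimn$ from the tail. Both routes land on $\|\GoodAppDown{\dimcrit}\|_{L^2}^2 = \order(\myeig_{\dimcrit}) = \order(\delta)$; yours is shorter and reuses the bound $|\coeffbptwo{j}| \le \bound_j$ already established in the paper's appendices, while the paper's weighted-summability argument has the modest advantage of not needing the sup-norm control on $\basis_j$ for this particular step (that control is still needed for the estimation term in both proofs, so no assumption is saved overall).
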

\noindent The proof of Theorem~\ref{ThmKernel} is provided in
Section~\ref{SecProofThmKernel}.  It is based on showing that the
choice~\eqref{EqnDefRstar} suffices to reduce the approximation error
to $\order(\delta)$, and then bounding the total operation complexity
required to also reduce the estimation error.  \\

Theorem~\ref{ThmKernel} can be used to derive explicit estimates of
the complexity for various types of kernel classes.  We begin with the
case of kernels in which the eigenvalues decay at an inverse polyomial
rate: in particular, given some $\polydecay > 1$, we say that they
exhibit \emph{$\polydecay$-polynomial decay} if there is a universal
constant $\const$ such that 
\begin{align}
\label{EqnDefnPolyDecay}
\myeig_j \leq \const/j^{\polydecay} \quad \mbox{for all $j = 1, 2,
  \ldots$.}
\end{align}
Examples of kernels in this class include Sobolov spline
kernels~\cite{Gu02}, which are a widely used type of smoothness prior.
For example, the spline class associated with functions that are
$s$-times differentiable satisfies the decay
condition~\eqref{EqnDefnPolyDecay} with $\polydecay = 2 s$.

\begin{corollary}
\label{CorPolyDecay}
In addition to the conditions of Theorem~\ref{ThmGeneral}, suppose
that the compatibility functions are symmetric kernels with
$\polydecay$-polynomial decay~\eqref{EqnDefnPolyDecay}.  Then it
suffices to perform
\begin{align}
\NumOpPoly(\delta) & = \order \Big( \big(1/\delta \big)^{\frac{1 +
    \polydecay}{\polydecay}} \log(1/\delta) \Big)
\end{align}
operations per edge in order to obtain a $\delta$-accurate message
vector $\mes$.
\end{corollary}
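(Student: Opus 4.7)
The plan is to apply Theorem~\ref{ThmKernel} directly, with the polynomial decay assumption~\eqref{EqnDefnPolyDecay} used only to control the two problem-dependent quantities that appear in the operation count~\eqref{EqnNumOpBound}, namely the critical dimension $\dimcrit$ and the truncated sum of squared eigenvalues $\sum_{j=1}^{\dimcrit}\myeig_j^2$.

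First I would compute $\dimcrit = \dimcrit(\delta; \{\myeig_j\})$. By definition~\eqref{EqnDefRstar}, $\dimcrit$ is the smallest positive integer with $\myeig_{\dimcrit} \le \delta$. Under the bound $\myeig_j \le \const/j^{\polydecay}$, it suffices to take any $\dimn$ with $\const/\dimn^{\polydecay} \le \delta$, i.e. $\dimn \ge (\const/\delta)^{1/\polydecay}$. Hence
\begin{align*}
\dimcrit \; & \le \; \lceil (\const/\delta)^{1/\polydecay} \rceil \; = \; \order\big((1/\delta)^{1/\polydecay}\big).
\end{align*}

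Next I would bound $\sum_{j=1}^{\dimcrit}\myeig_j^2$. Since $\polydecay > 1$ implies $2\polydecay > 1$, the series $\sum_{j=1}^{\infty} 1/j^{2\polydecay}$ converges, so
\begin{align*}
\sum_{j=1}^{\dimcrit}\myeig_j^2 \; \le \; \const^2 \sum_{j=1}^{\infty} \frac{1}{j^{2\polydecay}} \; = \; \order(1).
\end{align*}
(Equivalently, trace-class polynomial decay forces $\{\myeig_j\}\in \ell^2$.)

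Plugging these two estimates into the bound~\eqref{EqnNumOpBound} from Theorem~\ref{ThmKernel} gives
\begin{align*}
\NumOp(\delta;\{\myeig_j\}) \; & = \; \order\Big( \dimcrit \cdot \Big(\sum_{j=1}^{\dimcrit}\myeig_j^2\Big)\cdot (1/\delta) \log(1/\delta) \Big) \\
& = \; \order\Big( (1/\delta)^{1/\polydecay} \cdot 1 \cdot (1/\delta)\log(1/\delta)\Big) \; = \; \order\Big((1/\delta)^{(1+\polydecay)/\polydecay}\log(1/\delta)\Big),
\end{align*}
which is the claimed bound on $\NumOpPoly(\delta)$. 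There is no real obstacle here beyond the two elementary estimates above; the only conceptual point to check is that the hypothesis $\polydecay > 1$ is exactly what makes $\sum_j \myeig_j^2$ (indeed, even $\sum_j \myeig_j$) summable, so that the second factor in~\eqref{EqnNumOpBound} contributes only an $\order(1)$ term and the $\delta$-dependence of the complexity comes entirely from $\dimcrit$ and the explicit $(1/\delta)\log(1/\delta)$ factor.
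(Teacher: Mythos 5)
Your proposal is correct and follows essentially the same route as the paper, which likewise notes that the polynomial decay assumption gives $\dimcrit \leq (\const/\delta)^{1/\polydecay}$ and $\sum_{j=1}^{\dimcrit}\myeig_j^2 = \order(1)$, and then substitutes these into the operation bound~\eqref{EqnNumOpBound} of Theorem~\ref{ThmKernel}. Your write-up simply makes explicit the two elementary estimates that the paper states without detail.
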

\noindent The proof of this corollary is immediate from
Theorem~\ref{ThmKernel}: given the assumption $\lambda_j \leq
\const/j^{\polydecay}$, we see that $\dimcrit \leq
(\const/\delta)^{\frac{1}{\polydecay}}$ and
$\sum_{j=1}^{\dimcrit}\myeig_j^2 = \order(1)$.  Substituting into the
bound~\eqref{EqnNumOpBound} yields the claim.
Corollary~\ref{CorPolyDecay} confirms a natural intuition---namely,
that it should be easier to compute an approximate BP fixed point for
a graphical model with smooth potential functions.  Disregarding the
logarithmic factor (which is of lower-order), the operation complexity
$\NumOpPoly(\delta)$ ranges ranges from $\order \big( (1/\delta)^2
\big)$, obtained as $\alpha \rightarrow 1^+$ all the way down to
$\order \big(1/\delta \big)$, obtained as $\alpha \rightarrow
+\infty$. \\

Another class of widely used kernels are those with exponentially
decaying eigenvalues: in particular, for some $\polydecay > 0$, we say
that the kernel has \emph{$\polydecay$-exponential decay} if there are
universal constants $(C, c)$ such that 
\begin{align}
\label{EqnDefnExpDecay}
\myeig_j & \leq C \, \exp(-c j^{\polydecay}) \qquad \mbox{ for all $j
  = 1, 2, \ldots$.}
\end{align}
Examples of such kernels include the Gaussian kernel, which satisfies
the decay condition~\eqref{EqnDefnExpDecay} with $\polydecay = 2$
(e.g.,~\cite{SteChr08}).

\begin{corollary}
\label{CorExpDecay}
In addition to the conditions of Theorem~\ref{ThmGeneral}, suppose
that the compatibility functions are symmetric kernels with
$\polydecay$-exponential decay~\eqref{EqnDefnExpDecay}.  Then it
suffices to perform
\begin{align}
\label{EqnExpDecay}
\NumOp_{\operatorname{\tiny{exp}}}(\delta) & = \order \Big( (1/\delta)
\; \big(\log(1/\delta) \big)^{\frac{1 + \polydecay}{\polydecay}}
\Big).
\end{align}
operations per edge in order to obtain a uniformly $\delta$-accurate
message vector $\mes$.
\end{corollary}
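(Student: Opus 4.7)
The plan is to derive Corollary~\ref{CorExpDecay} as a direct specialization of Theorem~\ref{ThmKernel}, in the same spirit as the proof of Corollary~\ref{CorPolyDecay}. Two quantities determined by the eigenvalue decay enter the operation bound~\eqref{EqnNumOpBound}: the critical dimension $\dimcrit = \dimcrit(\delta;\{\lambda_j\})$, and the truncated sum of squared eigenvalues $\sum_{j=1}^{\dimcrit}\lambda_j^2$. I will estimate each in turn under the $\alpha$-exponential decay hypothesis~\eqref{EqnDefnExpDecay} and then substitute.

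First I would bound $\dimcrit$. By definition~\eqref{EqnDefRstar}, $\dimcrit$ is the smallest integer with $\lambda_{\dimcrit} \leq \delta$. Using $\lambda_j \leq C\exp(-c j^{\alpha})$, the inequality $C\exp(-cj^{\alpha}) \leq \delta$ is equivalent to $j^{\alpha} \geq \frac{1}{c}\log(C/\delta)$, which holds as soon as $j \geq \bigl(\tfrac{1}{c}\log(C/\delta)\bigr)^{1/\alpha}$. Hence
\begin{equation*}
\dimcrit \; = \; \order\!\left( \bigl(\log(1/\delta)\bigr)^{1/\alpha} \right).
\end{equation*}

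Next I would bound the sum of squared eigenvalues. Since $\lambda_j^2 \leq C^2 \exp(-2cj^{\alpha})$ and the series $\sum_{j=1}^{\infty} \exp(-2cj^{\alpha})$ converges (for any $\alpha > 0$, this is a stretched-exponential tail sum), we obtain
\begin{equation*}
\sum_{j=1}^{\dimcrit}\lambda_j^2 \; \leq \; \sum_{j=1}^{\infty}\lambda_j^2 \; = \; \order(1).
\end{equation*}

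Finally I would substitute into~\eqref{EqnNumOpBound} from Theorem~\ref{ThmKernel}:
\begin{equation*}
\NumOp(\delta;\{\lambda_j\}) \; = \; \order\Big( \dimcrit \cdot \bigl(\sum_{j=1}^{\dimcrit}\lambda_j^2\bigr) \cdot (1/\delta) \cdot \log(1/\delta) \Big) \; = \; \order\Big( (1/\delta)\, \bigl(\log(1/\delta)\bigr)^{1/\alpha + 1} \Big),
\end{equation*}
which matches~\eqref{EqnExpDecay} after rewriting $1/\alpha + 1 = (1+\alpha)/\alpha$. None of these steps presents a genuine obstacle: both estimates are elementary, and the only mild point to check is convergence of the squared-eigenvalue series, which follows because $\exp(-2cj^{\alpha})$ decays strictly faster than any inverse polynomial. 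The corollary is therefore essentially a plug-in computation once Theorem~\ref{ThmKernel} is in hand.
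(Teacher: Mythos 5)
Your proposal is correct and follows exactly the route the paper intends: the paper dispatches this corollary as ``a straightforward consequence of Theorem~\ref{ThmKernel}'' in the same manner as Corollary~\ref{CorPolyDecay}, and your estimates $\dimcrit = \order\big((\log(1/\delta))^{1/\polydecay}\big)$ and $\sum_{j=1}^{\dimcrit}\myeig_j^2 = \order(1)$, substituted into the bound~\eqref{EqnNumOpBound}, are precisely the intended plug-in computation.
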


As with our earlier corollary, the proof of this claim is a
straightforward consequence of Theorem~\ref{ThmKernel}.
Corollary~\ref{CorExpDecay} demonstrates that kernel classes with
exponentially decaying eigenvalues are not significantly different
from parametric function classes, for which a stochastic algorithm
would have operation complexity $\order(1/\delta)$.  Apart from the
lower order logarithmic terms, the complexity
bound~\eqref{EqnExpDecay} matches this parametric rate.


\section{Experimental Results}
\label{SecSimulations}

In this section, we describe some experimental results that help to
illustrate the theoretical predictions of the previous section.


\subsection{Synthetic Data}

\begin{figure}[t]
\begin{center}
\widgraph{0.5\textwidth}{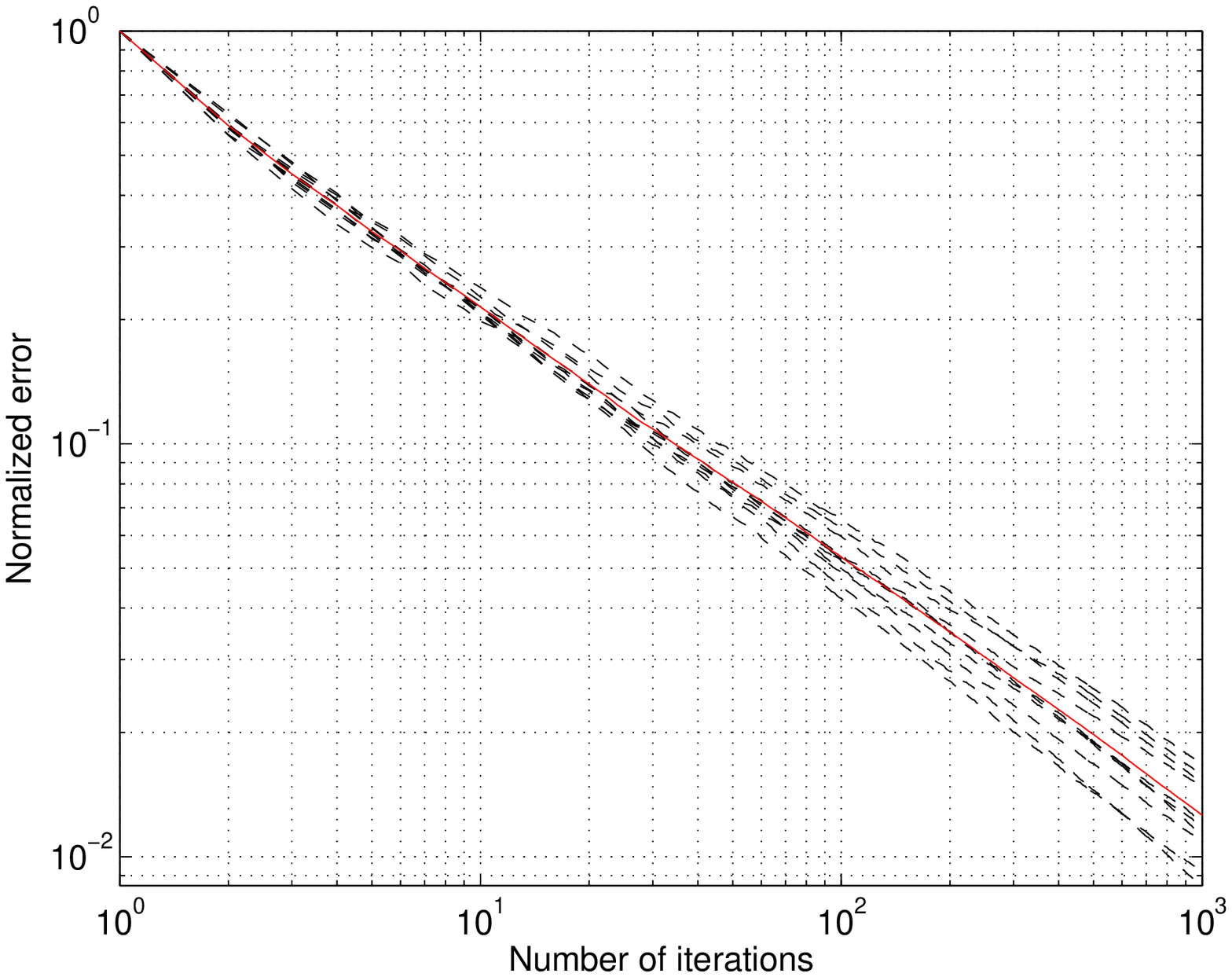}  
\caption{Plot of normalized error $\error^{\Time}/\error^{0}$ vs. the
  number of iterations $\Time$ for 10 different sample paths on a
  chain of size \mbox{$\numnode = 100$}. The dashed lines are sample
  paths whereas the solid line is the mean square error. In this
  experiment node and edge potentials are mixtures of three
  Gaussians~\eqref{EqnMix} and we implemented \ALG using the first
  $\dimn = 10$ Fourier coefficients with $\numsampY = 5$
  samples.} \label{FigSamplePath}
\end{center}
\end{figure}

We begin by running some experiments for a simple model, in which both
the node and edge potentials are mixtures of Gaussians.  More
specifically, we form a graphical model with potential functions of
the form
\begin{subequations}
\label{EqnMix}
\begin{align}
\npot(x_\snode) \; = & \sum_{i=1}^{3} \pi_{\snode;i} \exp
\big(-(x_\snode - \mu_{\snode;i})^2 / (2\sigma_{\snode;i}^2)\big),
\quad \mbox{for all $\snode \in \Vertex$, and} \\
\epot(x_\snode, x_\fnode) \; = & \; \sum_{i=1}^{3}
\pi_{\snode\fnode;i} \exp \big(-(x_\fnode -
x_\snode)^2/(2\sigma_{\snode\fnode;i}^2)\big) \quad \mbox{for all
  $(\snode, \fnode) \in \Edge$,}
\end{align}
\end{subequations}
where the non-negative mixture weights are normalized (i.e.,
$\sum_{i=1}^{3} \pi_{\snode\fnode;i} = \sum_{i=1}^{3} \pi_{\snode;i} =
1$). For each vertex and edge and for all $i = 1, 2, 3$, the mixture
parameters are chosen randomly from uniform distributions over the
range \mbox{$\sigma_{\snode;i}^2, \sigma_{\snode\fnode;i}^2 \in (0,
  0.5]$} and \mbox{$\mu_{\snode;i} \in [-3, 3]$}.

For a chain-structured graph with $\numnode = 100$ nodes, we first
compute the fixed point of standard BP, using direct numerical
integration to compute the integrals,\footnote{In particular, we
  approximate the integral update~\eqref{EqnBPUpdateEntry} with its
  Riemann sum over the range \mbox{$\Space = [-5,5]$} and with 100
  samples per unit time.}  so to compute (an extremely accurate
approximation of) the fixed point $\mes^{\ast}$.  We compare this
``exact'' answer to the approximation obtained by running the \ALG
algorithm using the first $\dimn = 10$ Fourier basis coefficients and
$\numsampY = 5$ samples.  Having run the \ALG algorithm, we compute
the average squared error
\begin{align}
\label{EqnDefnMSE}
\error^{\Time} & \defn \frac{1}{ \Dimn}
\sum_{\DirEdge{\fnode}{\snode}\in\DirSet} \sum_{j=1}^{\dimn}
(\mescoef_{\LDE; j}^{\Time} - \mescoef_{\LDE; j}^{\ast})^2
\end{align}
at each time $\Time = 1, 2, \ldots$.

\begin{figure}[h]
\begin{center}
\begin{tabular}{cc}
\widgraph{0.45\textwidth}{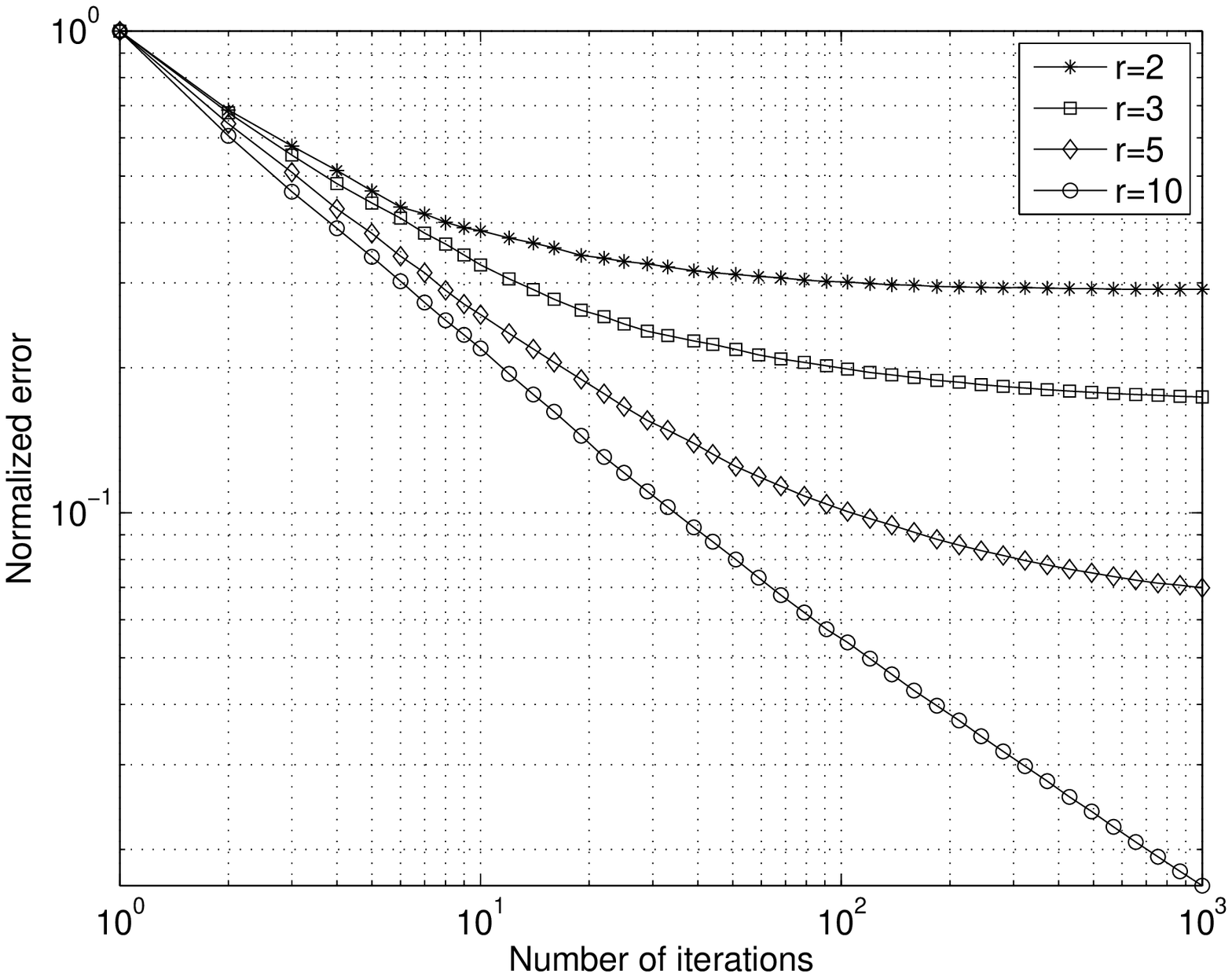} & 
\widgraph{0.45\textwidth}{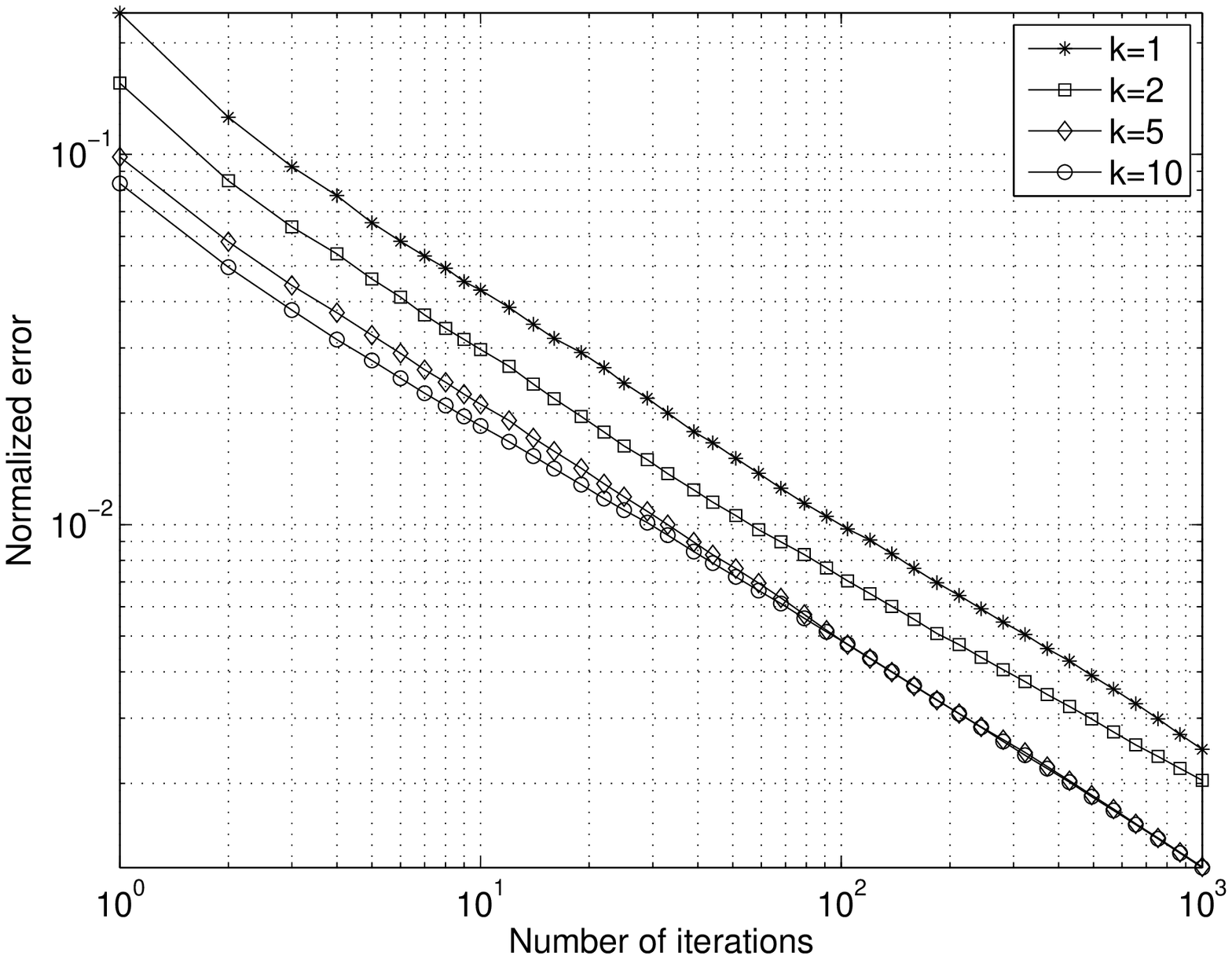} \\
(a) & (b)
\end{tabular}
\caption{ Normalized mean squared error
  $\Expt[\error^{\Time}/\error^{0}]$ verses the number of iterations
  for a Markov chain with $\numnode = 100$ nodes, using potential
  functions specified by the mixture of Gaussians
  model~\eqref{EqnMix}.  (a) Behavior as the number of expansion
  coefficients is varied over the range $\dimn \in \{2, 3, 5, 10\}$
  with $\numsampY = 5$ samples in all cases.  As predicted by the
  theory, the error drops monotonically with the number of iterations
  until it hits a floor. The error floor, which corresponds to the
  approximation error incurred by message expansion truncation,
  decreases as the number of coefficients $\dimn$ in increased. (b)
  Mean squared error $\Expt[\error^{\Time}]$ verses the number of
  iterations $\Time$ for different number of samples \mbox{$\numsampY
    \in \{1, 2, 5, 10\}$,} in all cases using $\dimn = 10$
  coefficients. Increasing the number of samples $\numsampY$ results
  in a downward shift in the error.  } \label{FigErrDiffExpCoef}
\end{center}
\end{figure}

Figure~\ref{FigSamplePath} provides plots of the
error~\eqref{EqnDefnMSE} versus the number of iterations for $10$
different trials of the \ALG algorithm.  (Since the algorithm is
randomized, each path is slightly different.)  The plots support our
claim of of almost sure convergence, and moreover, the straight lines
seen in the log-log plots confirm that convergence takes place at a
rate inverse polynomial in $\Time$.

In the next few simulations, we test the algorithm's behavior with
respect to the number of expansion coefficients $\dimn$, and number of
samples $\numsampY$. In particular, Figure~\ref{FigErrDiffExpCoef}(a)
illustrates the expected error, averaged over several sample paths,
vs. the number of iterations for different number of expansion
coefficients $\dimn \in\{ 2, 3, 5, 10\}$ when $\numsampY = 5$ fixed;
whereas Figure~\ref{FigErrDiffExpCoef}(b) depicts the expected error
vs. the number of iterations for different number of samples
$\numsampY \in \{1, 2, 5, 10\}$ when $\dimn = 10$ is fixed. As
expected, in Figure~\ref{FigErrDiffExpCoef}(a), the error decreases
monotonically, with the rate of $1/\Time$, till it hits a floor
corresponding the offset incurred by the approximation
error. Moreover, the error floor decreases with the number of
expansion coefficients. On the other hand, in
Figure~\ref{FigErrDiffExpCoef}(b), increasing the number of samples
causes a downward shift in the error. This behavior is also expected
since increasing the number of samples reduces the variance of the
empirical expectation in equation~\eqref{EqnUpdateInov}.


In our next set of experiments, still on a chain with $\numnode = 100$
vertices, we test the behavior of the \ALG algorithm on graphs with
edge potentials of varying degrees of smoothness.  In all cases, we
use node potentials from the Gaussian mixture ensemble~\eqref{EqnMix}
previously discussed, but form the edge potentials in terms of a
family of kernel functions.  More specifically, consider the basis
functions
\begin{align*}
\basis_{j}(x) & = \sin \big( (2j-1)\pi(x+5)/10 \big) \quad \mbox{for
  $j = 1, 2, \ldots$.}
\end{align*}
each defined on the interval $[-5, 5]$.  It is straightforward that
the family $\{\basis_j\}_{j=1}^\infty$ forms an orthonormal basis of
$L^2[-5, 5]$.  We use this basis to form the edge potential functions
\begin{align}
\label{EqnFinKer}
\epot (x, y) = \sum_{j=1}^{1000} (1/j)^{\smooth} \basis_{j}(x) \:
\basis_{j}(y),
\end{align}
where $\smooth > 0$ is a parameter to be specified.  By construction,
each edge potential is a positive semidefinite kernel function
satisfying the $\smooth$-polynomial decay
condition~\eqref{EqnDefnPolyDecay}.

Figure~\ref{FigSmoothKernel} illustrate the error curves for two
different choices of the smoothness parameter: panel (a) shows
$\smooth = 0.1$, whereas panel (b) shows $\smooth = 1$.  For the
larger value of $\smooth$ shown in panel (b), the messages in the BP
algorithm are smoother, so that the \ALG estimates are more accurate
with the same number of expansion coefficients. Moreover, similar to
what we have observed previously, the error decays with the rate of
$1/\Time$ till it hits the error floor.  Note that this error floor is
lower for the smoother kernel ($\smooth = 1$) compared to the rougher
case ($\smooth = 0.1$); note the difference in axis scaling between
panels (a) and (b).  Moreover, as predicted by our theory, the
approximation error decays faster for the smoother kernel, as shown by
the plots in Figure~\ref{FigErrorOffset}, in which we plot the final
error, due purely to approximation effects, versus the number of
expansion coefficients $\dimn$.  The semilog plot of
Figure~\ref{FigErrorOffset} shows that the resulting lines have
different slopes, as would be expected.

\begin{figure}
\begin{center}
\begin{tabular}{cc}

\widgraph{0.45\textwidth}{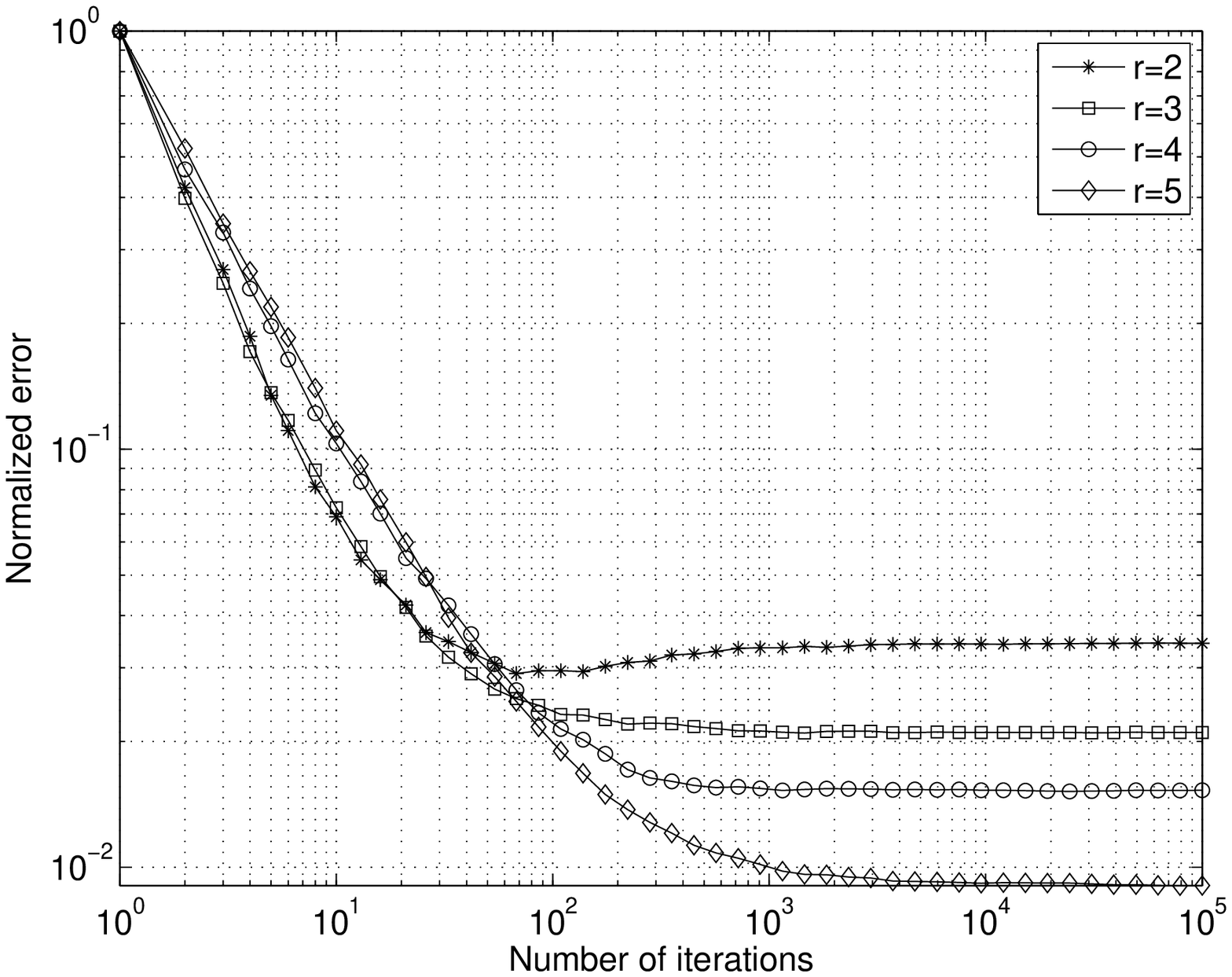} & 
\widgraph{0.45\textwidth}{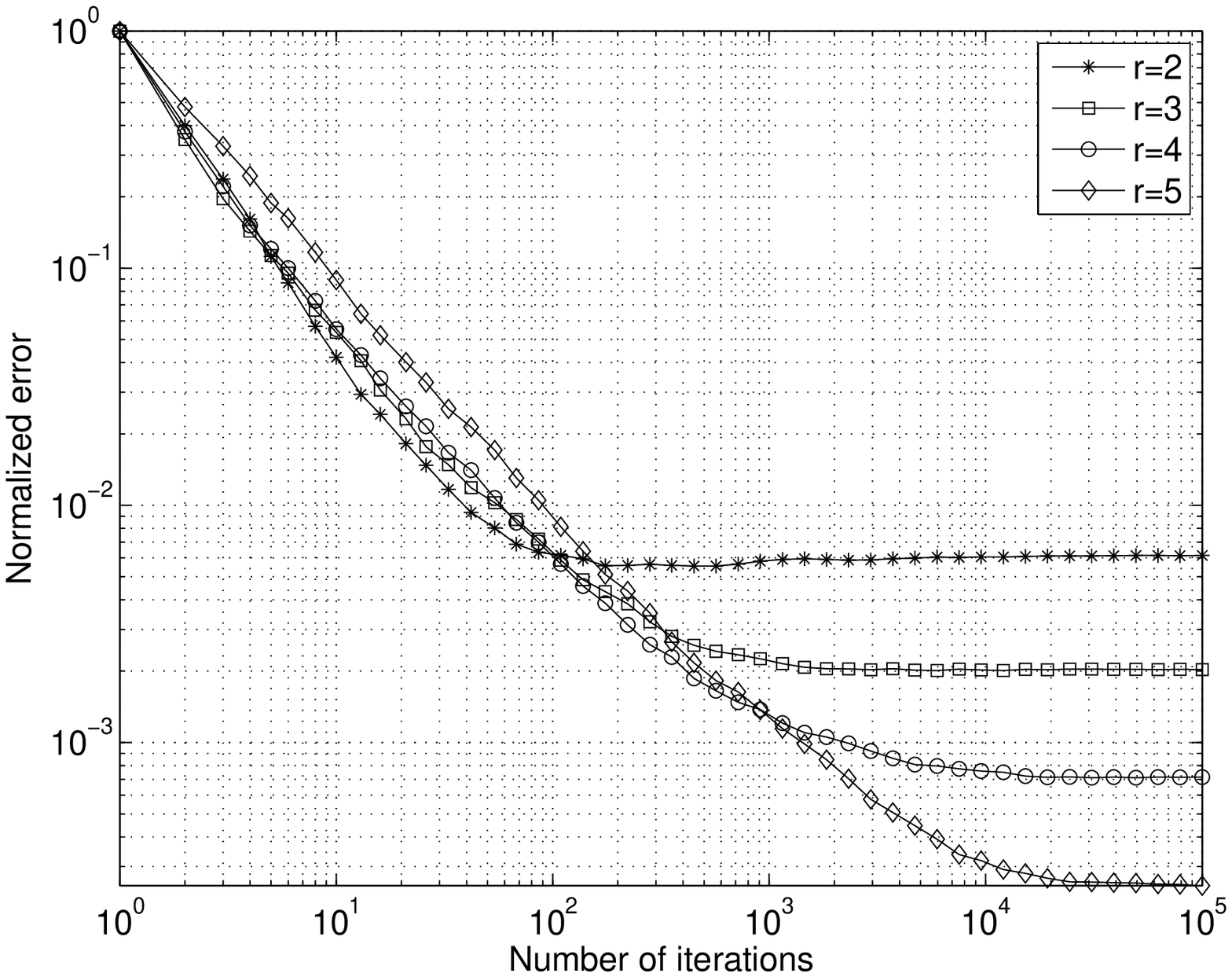} \\ 
(a) & (b)
\end{tabular}
\caption{Plot of the estimation error $\error^{\Time} / \error^{0}$
  verses the number of iterations $\Time$ for the cases of (a)
  $\smooth = 0.1$ and (b) $\smooth = 1$. The BP messages are smoother
  when $\smooth=1$, and accordingly the \ALG estimates are more
  accurate with the same number of expansion coefficients. Moreover,
  the error decays with the rate of $1/\Time$ till it hits a floor
  corresponding to the approximation error incurred by truncating the
  message expansion coefficients.} \label{FigSmoothKernel}
\end{center}
\vspace{0in}
\end{figure}

\begin{figure}
\begin{center}
\widgraph{0.6\textwidth}{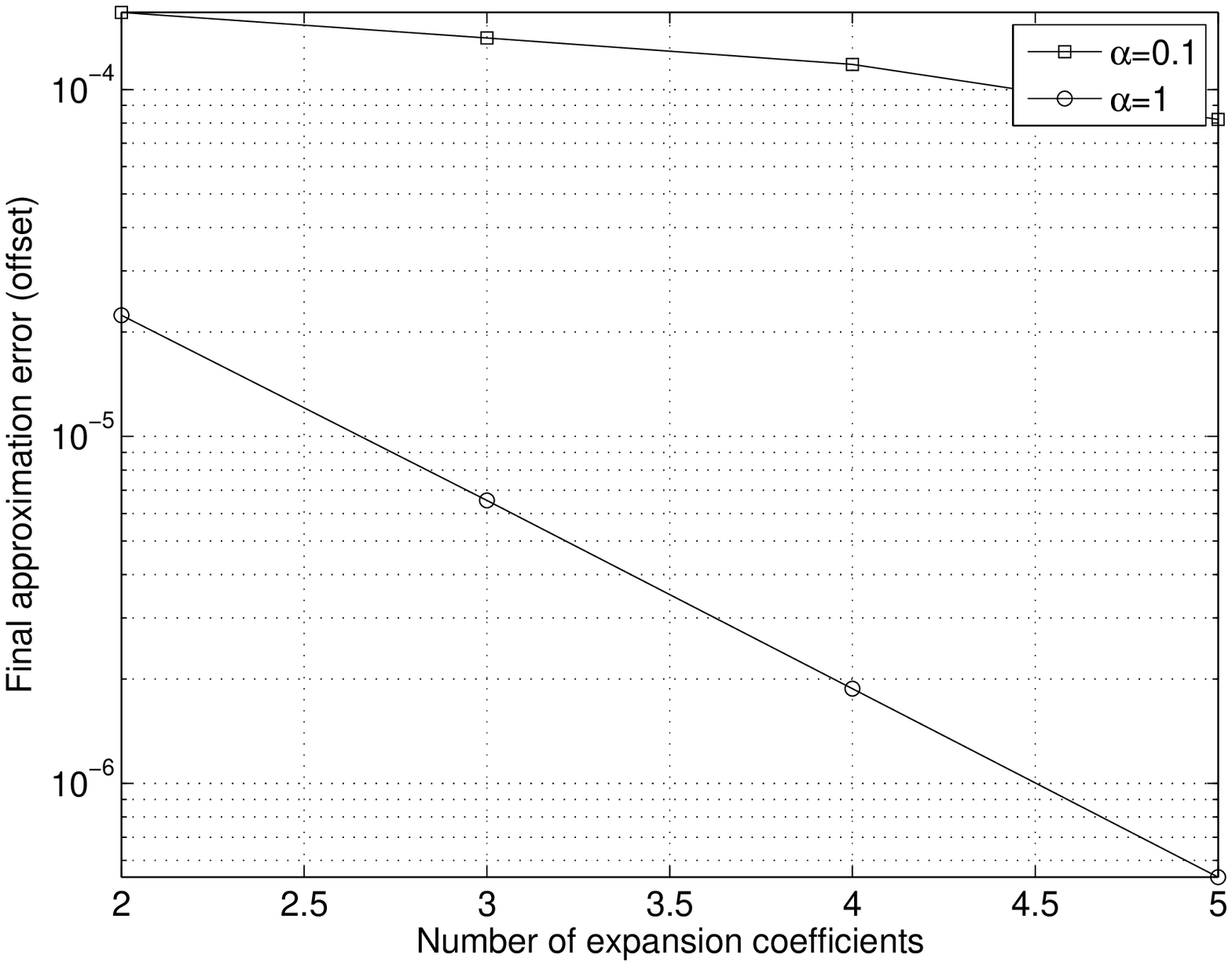}  
\caption{Final approximation error vs. the number of expansion
  coefficients for the cases of $\smooth = 0.1$ and $\smooth = 1$. As
  predicted by the theory, the error floor decays with a faster pace
  for the smoother edge potential.} \label{FigErrorOffset}
\end{center}
\vspace{0in}
\end{figure}


\subsection{Computer Vision Application}

\begin{figure}[h]
\begin{center}
\begin{tabular}{ccc}
\widgraph{0.45\textwidth}{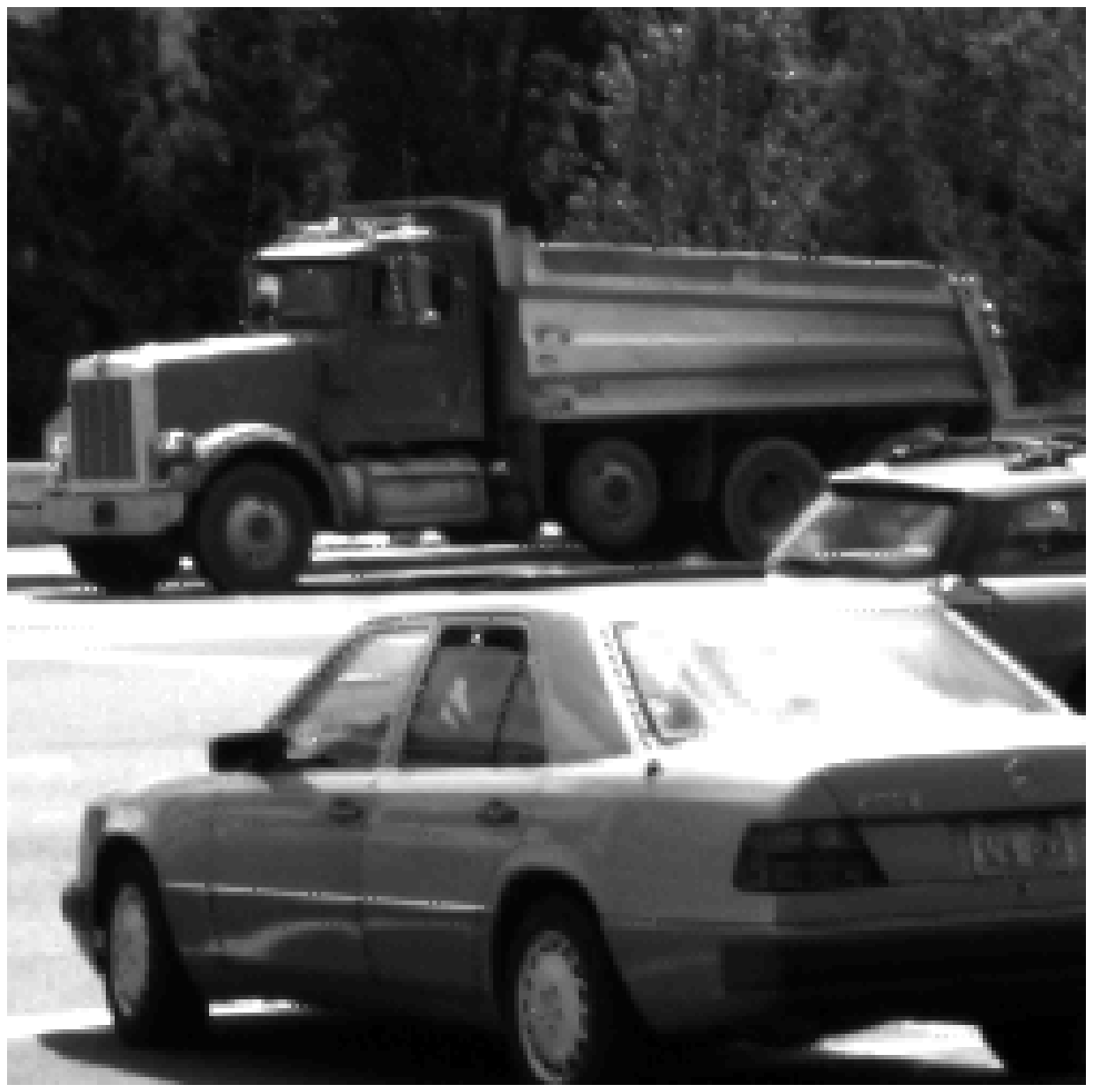}  & &
\widgraph{0.45\textwidth}{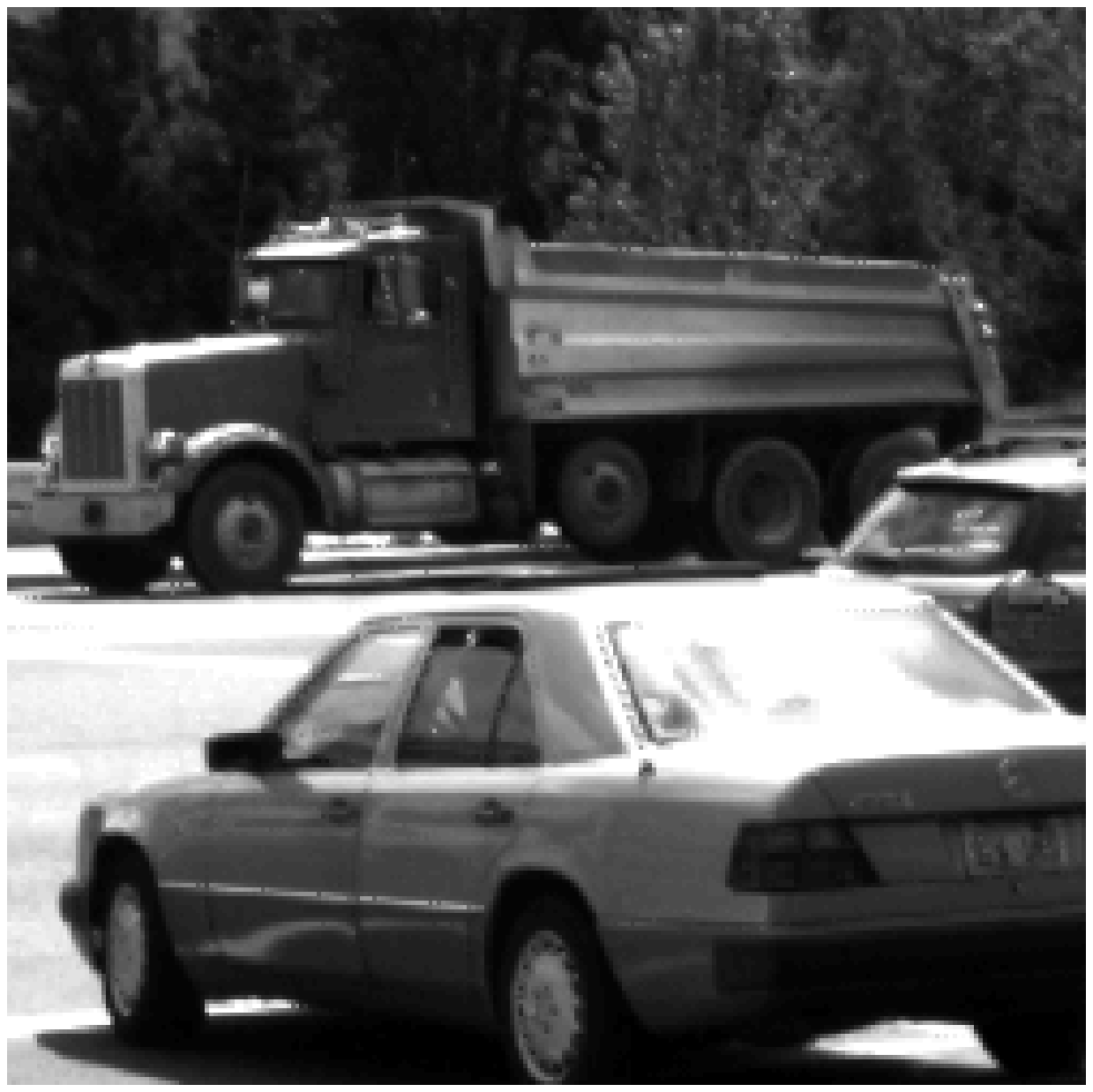}  \\
(a) & & (b)
\end{tabular}
\caption{Two frames, each of dimension $250 \times 250$ pixels, taken
  from a video sequence of moving cars.}
\label{FigOriginalFrame}
\end{center}
\vspace{0in}
\end{figure}

Moving beyond simulated problems, we conclude by showing the \ALG
algorithm in application to a larger scale problem that arises in
computer vision---namely, that of optical flow
estimation~\cite{Middlebury}.  In this problem, the input data are two
successive frames of a video sequence.  We model each frame as a
collection of pixels arrayed over a $\sqrt{n} \times \sqrt{n}$ grid,
and measured intensity values at each pixel location of the form
$\{\fframe(i, j), \sframe(i, j)\}_{i,j=1}^{\sqrt{n}}$.  Our goal is to
estimate a 2-dimensional motion vector $x_\snode = (x_{\snode;1},
x_{\snode;2})$ that captures the local motion at each pixel $\snode =
(i, j)$, $i, j = 1, 2, \ldots, \sqrt{\numnode}$ of the image sequence.

In order to cast this optical flow problem in terms of message-passing
on a graph, we adopt the model used by Boccignone et
al.~\cite{BocEtal07}.  We model the local motion $X_u$ as a
$2$-dimensional random vector taking values in the space $\Space =
[-d, d] \times [-d, d]$, and associate the random vector $X_\snode$
with vertex $\snode$, in a 2-dimensional grid (see
Figure~\ref{FigGraphicalModels}(a)).  At node $\snode = (i,j)$, we use
the change between the two image frames to specify the node potential
\begin{align*}
\compat_\snode(x_{\snode;1}, x_{\snode;2}) \; \propto \; \exp \bigg(
-\frac{(\fframe(i ,j) - \sframe(i + x_{\snode;1}, \; j + x_{\snode;2})
  )^2}{2\sigma_\snode^2} \bigg).
\end{align*}
On each edge $(\snode, \fnode)$, we introduce the potential function
\begin{align*}
\epot(x_\snode, x_\fnode) \; \propto \; \exp
\bigg(-\frac{\|x_\snode -
    x_\fnode\|^2}{2\sigma_{\snode\fnode}^2}\bigg),
\end{align*}
which enforces a type of \emph{smoothness prior} over the image.

To estimate the motion of a truck, we applied the \ALG algorithm using
the $2$-dimensional Fourier expansion as our orthonormal basis to two
$250 \times 250$ frames from a truck video sequence (see
Figure~\ref{FigOriginalFrame}).  We apply the \ALG algorithm using the
first $\dimn = 9$ coefficients and $\numsampY = 3$
samples. Figure~\ref{FigQSBPRecover} shows the HSV (hue, saturation,
value) codings of the estimated motions after $\Time = 1, 10, 40$
iterations, in panels (a), (b) and (c) respectively.  (Panel (d)
provides an illustration of the HSV encoding: hue is used to represent
in the angular direction of the motion whereas the speed (magnitude of
the motion) is encoded by the saturation (darker colors meaning higher
speeds).  The initial estimates of the motion vectors are noisy, but
it fairly rapidly converges to a reasonable optical flow field.  (To
be clear, the purpose of this experiment is not to show the
effectiveness of \ALG or BP as a particular method for optical flow,
but rather to demonstrate its correctness and feasibility of the \ALG
in an applied setting.)

\begin{figure}
\begin{center}
\begin{tabular}{cc}
  \widgraph{0.45\textwidth}{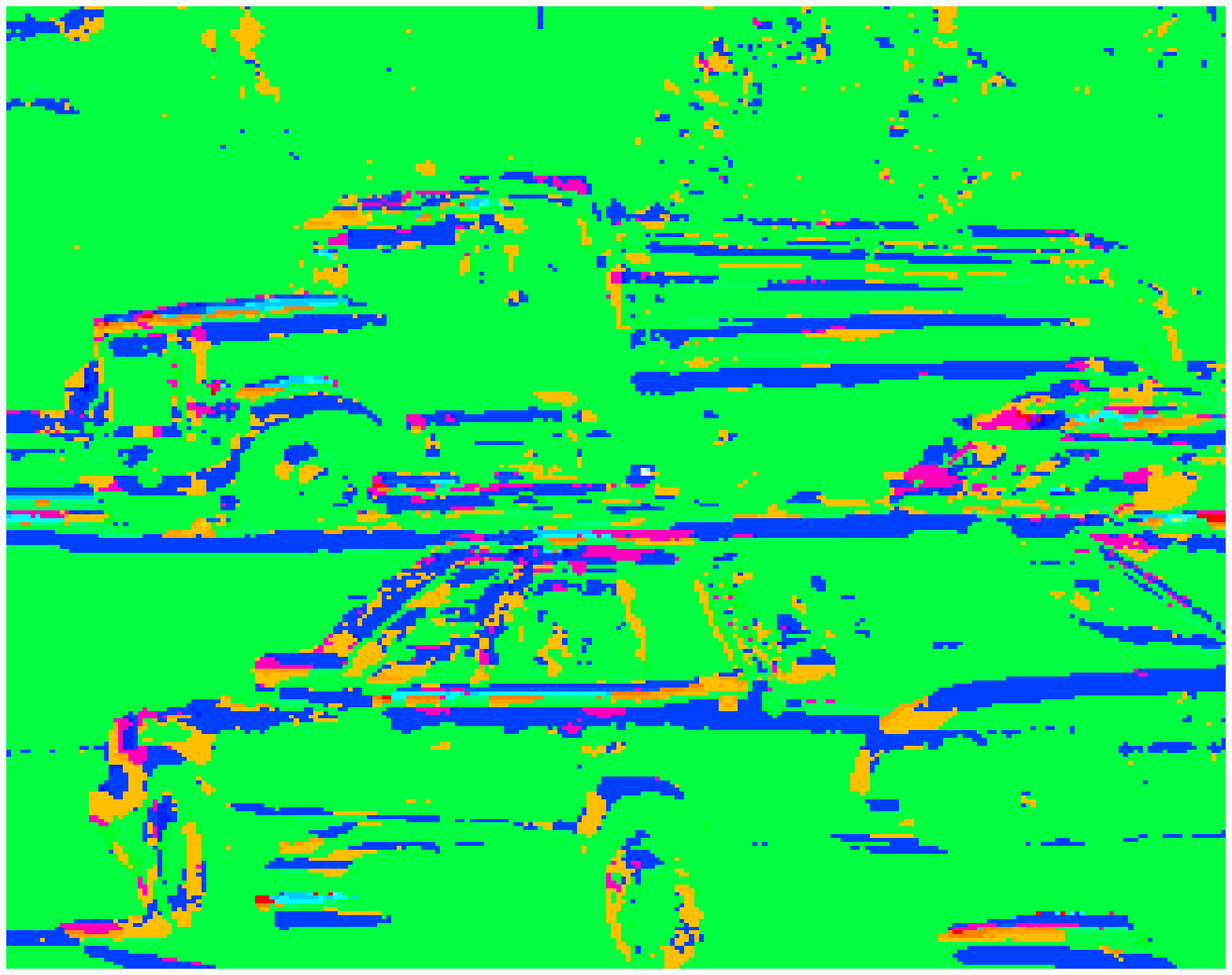} &
\widgraph{0.45\textwidth}{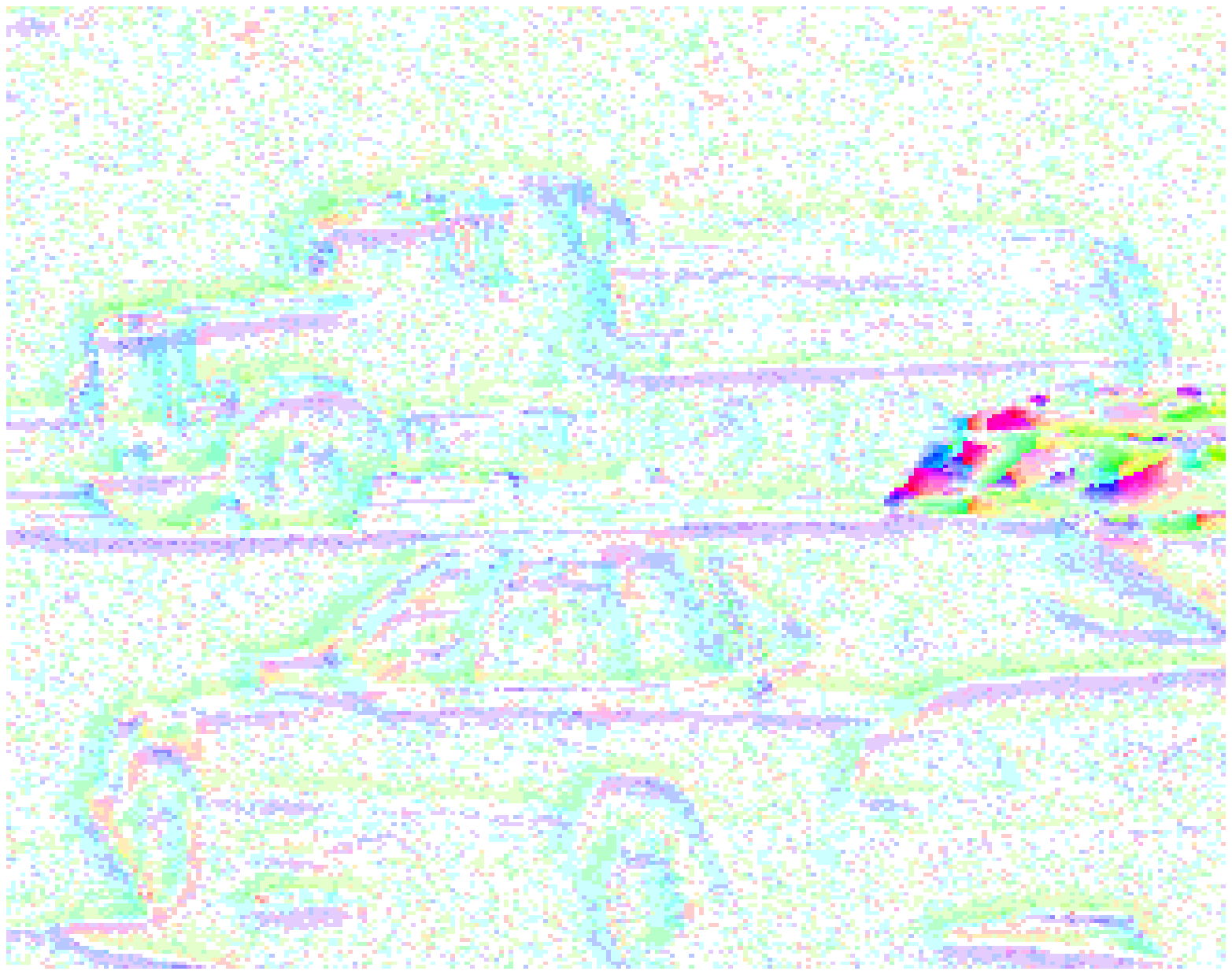} \\ (a) & (b)
\\ \widgraph{0.45\textwidth}{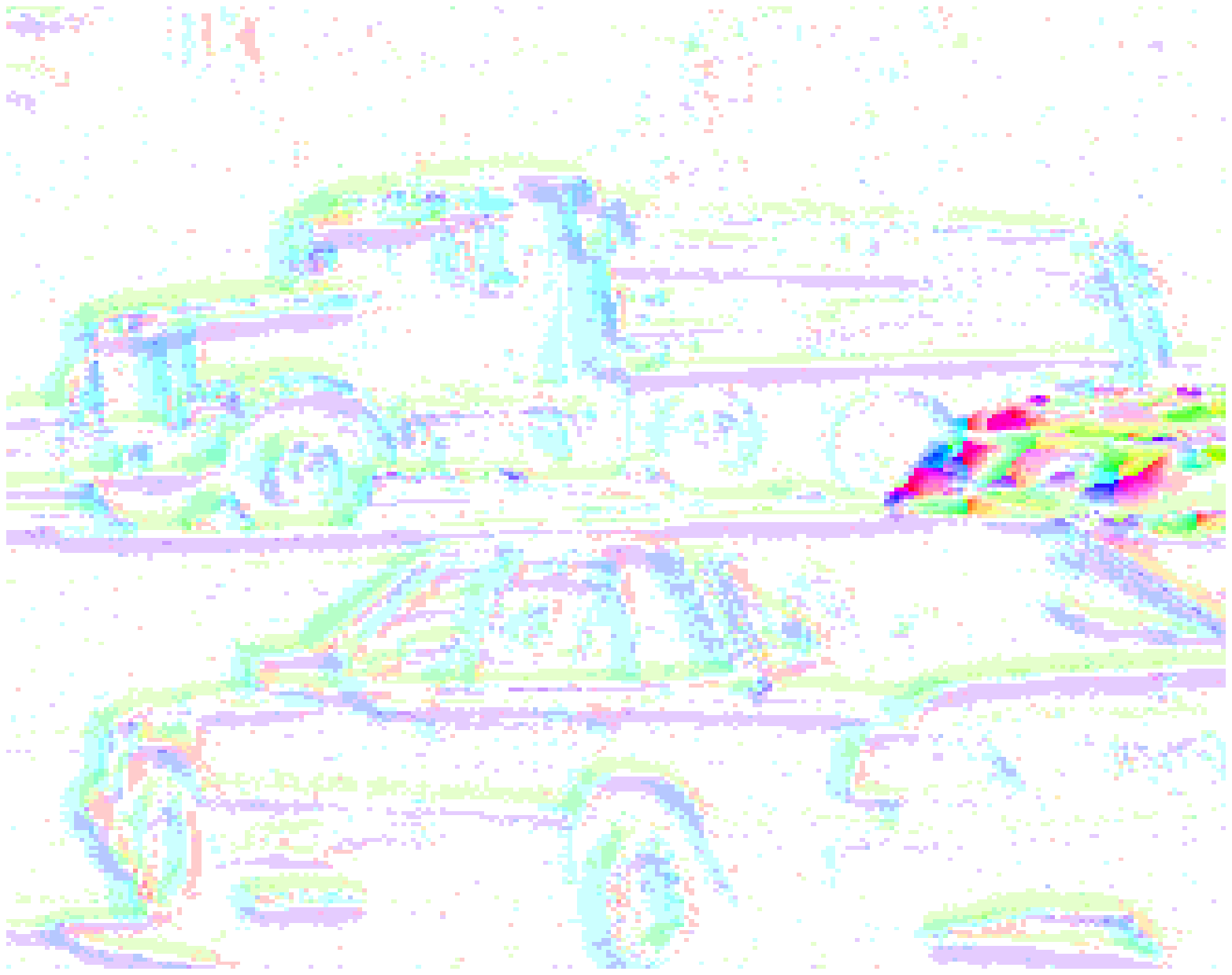} &
\raisebox{.4in}{\widgraph{0.25\textwidth}{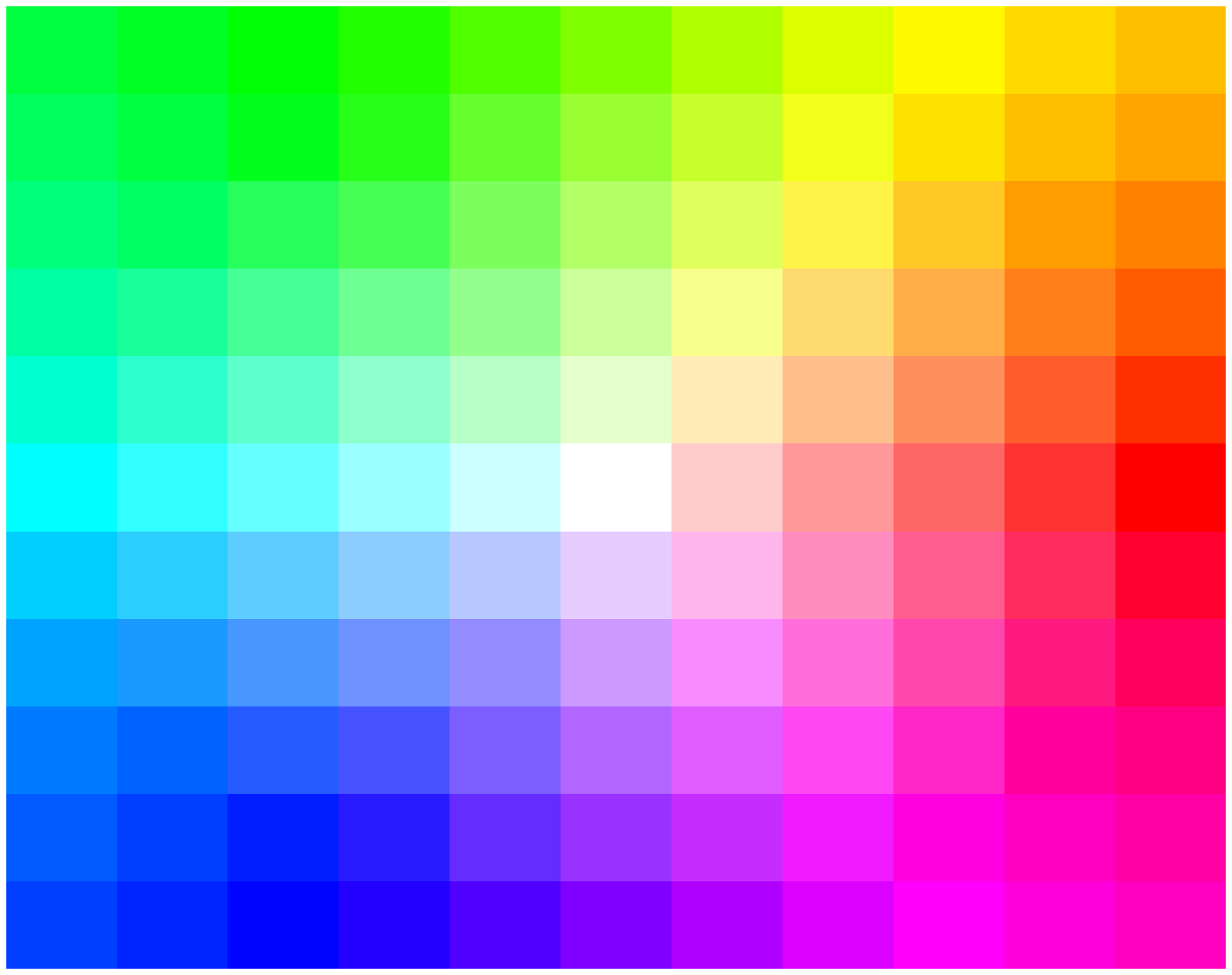}} \\ (c) & (d)
\end{tabular}
\caption{Color coded images of the estimated motion vectors after (a)
  $\Time = 1$, (b) $\Time = 10$, (c) $\Time = 40$ iterations. Panel
  (d) illustrates the hsv color coding of the flow. The color hue is
  used to encode the angular dimension of the motion, whereas the
  saturation level corresponds to the speed (length of motion
  vector). We implemented the \ALG algorithm by expanding in the
  two-dimensional Fourier basis, using $\dimn = 9$ coefficients and
  $\numsampY = 3$ samples. Although the initial estimates are noisy,
  it converges to a reasonable optical flow estimate after around $40$
  iterations.} \label{FigQSBPRecover}
\end{center}
\vspace{0in}
\end{figure}


\section{Proofs}
\label{SecProof}

We now turn to the proofs of our main results.  They involve a
collection of techniques from concentration of measure, stochastic
approximation, and functional analysis.


\subsection{Proof of Theorem~\ref{ThmTree}}
\label{SecThmTree}

Our goal is to bound the error
\begin{align}
\label{EqnPars}
\| \ItErr{\Time + 1}\|_{L^2}^2 \; & = \; \|\mesup{\Time+1} -
\Pir(\mesup{*})\|_{L^2}^2 \; = \; \sum_{j=1}^\dimn \big(
\coeffdirthree{\Time+1}{j} - \coeffbptwo{j} \big)^2,
\end{align}
where the final equality follows by Parseval's theorem.  Here
$\{\coeffbptwo{j} \}_{j=1}^\dimn$ are the basis expansion coefficients
that define the best $\dimn$ approximation to the BP fixed point
$\mes^*$.  The following lemma provides an upper bound on this error
in terms of two related quantities.  First, we let
$\{\compupfuncoef\}_{j=1}^\infty$ denote the basis function expansion
coefficients of the
$\upfun_{\LDE}(\meshat^t_{\LowerDirEdge{\fnode}{\snode}})$---that is,
$[\upfun_{\LDE}(\meshat^t_{\LowerDirEdge{\fnode}{\snode}})](\cdot) =
\sum_{j=1}^\infty \compupfuncoef \basis_j(\cdot)$.  Second, for each
$j = 1, 2, \ldots, \dimn$, define the deviation
$\deviatetwo{\Time+1}{j} \defn \coefftilthree{\Time+1}{j} -
\upfuncoeftwo{\Time}{j}$, where the coefficients
$\coefftilthree{\Time+1}{j}$ are updated in Step 2(c) Figure~\ref{FigAlg}.
\begin{lemma}
\label{LemTwoTerm}
For each iteration $\Time = 0, 1, 2, \ldots$, we have
\begin{align}
\label{EqnTwoTerm}
\| \ItErr{\Time+1}\|_{L^2}^2 \; & \leq \;
\underbrace{\frac{2}{\Time+1} \: \sum_{j=1}^\dimn
  \sum_{\newt=0}^{\Time} \: \big[ \upfuncoeftwo{\newt}{j} -
    \coeffbptwo{j} \big]^2}_{\mbox{Deterministic term $\detterm$}} \,
+ \, \underbrace{\frac{2}{(\Time+1)^2} \: \sum_{j=1}^\dimn \Big \{
  \sum_{\newt=0}^{\Time} \deviatetwo{\newt+1}{j} \Big
  \}^2}_{\mbox{Stochastic term $\stocterm$}}
\end{align}
\end{lemma}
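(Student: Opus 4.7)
The plan is to reduce the claim to a direct algebraic identity obtained by unrolling the coefficient recursion, followed by two elementary inequalities. The key simplification comes from the choice of step size $\step = 1/(\Time+1)$: this makes the update~\eqref{EqnUpdateCoef} a running average rather than an exponentially weighted average, which is exactly what allows a clean split into a ``bias'' and a ``noise'' part.

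First I would invoke Parseval's theorem, as in~\eqref{EqnPars}, to pass from the $L^2$-norm of $\ItErr{\Time+1}$ to the $\ell^2$-norm of the coefficient-error vector $\{\coeffdirthree{\Time+1}{j} - \coeffbptwo{j}\}_{j=1}^\dimn$. Next I would unroll the recursion~\eqref{EqnUpdateCoef} with the step size $\step = 1/(\Time+1)$. A short induction (the $t=0$ update is $\coeffdirthree{1}{j} = \coefftilthree{1}{j}$ regardless of the initialization, and each subsequent step is a convex combination that converts the previous average into an average of one more term) yields
\begin{align*}
\coeffdirthree{\Time+1}{j} \; & = \; \frac{1}{\Time+1} \sum_{\tau=0}^{\Time} \coefftilthree{\tau+1}{j}.
\end{align*}

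Now I would use the definition $\deviatetwo{\tau+1}{j} = \coefftilthree{\tau+1}{j} - \upfuncoeftwo{\tau}{j}$ to write the per-coordinate error as a sum of two pieces:
\begin{align*}
\coeffdirthree{\Time+1}{j} - \coeffbptwo{j} \; & = \; \frac{1}{\Time+1}\sum_{\tau=0}^{\Time}\bigl[\upfuncoeftwo{\tau}{j} - \coeffbptwo{j}\bigr] \; + \; \frac{1}{\Time+1}\sum_{\tau=0}^{\Time}\deviatetwo{\tau+1}{j}.
\end{align*}
Squaring and applying the elementary bound $(a+b)^2 \le 2a^2 + 2b^2$ separates the bias-like and noise-like contributions. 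For the bias-like piece I would apply the Cauchy--Schwarz (equivalently, Jensen's) inequality to push the square inside the empirical average of $\Time+1$ terms, producing the factor $1/(\Time+1)$ outside a sum of squares. The noise-like piece is left untouched so that the martingale increments $\deviatetwo{\tau+1}{j}$ still appear in summed form, which is crucial because subsequent analysis will exploit their cancellation via a concentration argument.

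Summing the resulting per-coordinate inequality over $j = 1, \ldots, \dimn$ produces exactly the two terms $\detterm$ and $\stocterm$ appearing in~\eqref{EqnTwoTerm}. There is no serious technical obstacle here; the only subtlety is the order of operations, in particular resisting the temptation to apply Cauchy--Schwarz to the stochastic piece, since doing so would destroy the martingale structure required to control $\stocterm$ later.
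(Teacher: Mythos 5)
Your proposal is correct and follows essentially the same route as the paper's proof in Appendix~\ref{AppLemTwoTerm}: unroll the recursion~\eqref{EqnUpdateCoef} with $\step = 1/(\Time+1)$ to express the coefficient error as an average of the deterministic deviations $\upfuncoeftwo{\newt}{j} - \coeffbptwo{j}$ plus an average of the martingale increments $\deviatetwo{\newt+1}{j}$, then apply $(a+b)^2 \le 2a^2 + 2b^2$, Cauchy--Schwarz on the deterministic piece only, and sum over $j$ via Parseval. The only cosmetic difference is that the paper subtracts $\coeffbptwo{j}$ before unwrapping rather than after, which is algebraically identical.
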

\noindent The proof of this lemma is relatively straightforward; see
Appendix~\ref{AppLemTwoTerm} for the details.  Note that
inequality~\eqref{EqnTwoTerm} provides an upper bound on the error
that involves two terms: the first term $\detterm$ depends only on the
expansion coefficients $\{ \upfuncoeftwo{\newt}{j}, \newt = 0, \ldots,
\Time \}$ and the BP fixed point, and therefore is a deterministic
quantity when we condition on all randomness in stages up to step
$\Time$.  The second term $\stocterm$, even when conditioned on
randomness through step $\Time$, remains stochastic, since the
coefficients $\coefftil{\Time+1}$ (involved in the error term
$\deviate{\Time+1}$) are updated stochastically in moving from
iteration $\Time$ to $\Time+1$.

We split the remainder of our analysis into three parts: (a) control
of the deterministic component; (b) control of the stochastic term;
and (c) combining the pieces to provide a convergence bound.

\subsubsection{Upper-bounding the deterministic term}

By the Pythagorean theorem, we have
\begin{align}
\label{EqnBeforeLips} 
\sum_{\newt=0}^{\Time} \sum_{j=1}^\dimn \big[ \upfuncoeftwo{\newt}{j}
  - \coeffbptwo{j} \big]^2 \; & \leq \; \sum_{\newt = 0}^\Time \|
\upfun_{\LowerDirEdge{\fnode}{\snode}}(\meshat^\Time) -
\upfun_{\LowerDirEdge{\fnode}{\snode}}(\mes^*) \|^2_{L^2}
\end{align}
In order to control this term, we make use of the following lemma,
proved in Appendix~\ref{AppLemLipschitz}:
\begin{lemma}
\label{LemLipschitz}
For all directed edges $\DirEdge{\fnode}{\snode} \in \DirSet$, there
exist constants $\{ \Lips, \mbox{\CompNeig} \}$ such that
\begin{align*}
\|\upfun_{\LowerDirEdge{\fnode}{\snode}}(\meshat^\Time) \, - \,
\upfun_{\LowerDirEdge{\fnode}{\snode}}(\mes^*) \|_{L^2} \; \;& \leq
\sum_{\CompNeig} \Lips \: \|
\meshat^{\Time}_{\LowerDirEdge{\tnode}{\fnode}} \, - \,
\mes_{\LowerDirEdge{\tnode}{\fnode}}^{\ast} \|_{L^2}
\end{align*}
for all $\Time = 1, 2, \ldots$.
\end{lemma}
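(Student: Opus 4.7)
The natural starting point is the alternative representation from Lemma~\ref{LemCuteObservation}, which writes the update as the action of the integral operator with kernel $\compatfun_{\snode\fnode}$ on the sampling density:
\[
[\upfun_{\LDE}(\mes)](x) \;=\; \int_{\Space} \compatfun_{\snode\fnode}(x,y)\, p_{\LDE}(\mes)(y)\, \measure(dy).
\]
Subtracting the corresponding expression for $\mes^*$ and applying Cauchy--Schwarz in $y$, together with the uniform boundedness of $\compatfun_{\snode\fnode}$ on the compact product space $\Space\times\Space$ (which follows from continuity of the potentials and~\eqref{EqnNormCompat}), reduces the claim to bounding $\|p_{\LDE}(\meshat^{\Time}) - p_{\LDE}(\mes^*)\|_{L^2}$ by a sum of $L^2$-message-differences of the form $\|\meshat^{\Time}_{\LowerDirEdge{\tnode}{\fnode}} - \mes^*_{\LowerDirEdge{\tnode}{\fnode}}\|_{L^2}$, with one term per neighbor $\tnode\in\Neig(\fnode)\setminus\{\snode\}$.

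The density $p_{\LDE}(\mes)$ is the ratio $\margfun_{\snode\fnode}(y)\prod_{\tnode} \mes_{\LowerDirEdge{\tnode}{\fnode}}(y)/Z(\mes)$, where $Z(\mes)$ is the normalizing integral.  The plan here is to combine two standard algebraic identities: first, the telescoping product identity
\[
\prod_i a_i - \prod_i b_i \;=\; \sum_j \Bigl(\prod_{i<j} b_i\Bigr)(a_j-b_j)\Bigl(\prod_{i>j}a_i\Bigr),
\]
applied to the incoming messages, and second, the difference-of-ratios identity, to split $p_{\LDE}(\meshat^{\Time}) - p_{\LDE}(\mes^*)$ into summands each of which contains exactly one factor $\meshat^{\Time}_{\LowerDirEdge{\tnode}{\fnode}} - \mes^*_{\LowerDirEdge{\tnode}{\fnode}}$.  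Cauchy--Schwarz is then applied to each summand, yielding exactly the form claimed in the lemma, provided we have (i)~a uniform $L^\infty$ bound on the remaining factors $\margfun_{\snode\fnode}$, $\mes^*_{\LowerDirEdge{\tnode_i}{\fnode}}$ and $\meshat^{\Time}_{\LowerDirEdge{\tnode_i}{\fnode}}$, and (ii)~a strictly positive lower bound on $Z(\meshat^{\Time})$ and $Z(\mes^*)$.

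The main obstacle is producing Lipschitz constants $\Lips$ that are \emph{independent of the iteration $\Time$}, since the iterate messages $\meshat^{\Time}$ are random and a priori only density-like.  The resolution is to observe that the stochastic increment $\coefftilthree{\Time+1}{j} = \numsampY^{-1}\sum_i \SUPERHACK{j}(\PlainSamp_i)$ is bounded pathwise by $\bound_j$ from~\eqref{EqnDefnBound}, so an induction on the convex combination~\eqref{EqnUpdateCoef} gives $|\coeffupdown{\Time}{j}| \le \max(1/\dimn,\bound_j)$ for all $\Time$; since $\Space$ is compact and the basis is continuous, this yields an iteration-independent $L^\infty$ bound on $\meshat^{\Time}_{\LowerDirEdge{\tnode}{\fnode}}$ depending only on $\dimn$ and $\{\phi_j\}$.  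The positive lower bound on the normalizers follows from continuity and strict positivity of $\poten_\fnode$ and $\poten_{\snode\fnode}$ on the compact domain $\Space$, together with condition~\eqref{EqnAnnoy} which ensures that $\meshat^{\Time}$ is bounded below in an averaged sense against $\margfun_{\snode\fnode}$.  Collecting all these uniform factors into a single constant $\Lips$ for each neighbor $\tnode$ completes the argument.
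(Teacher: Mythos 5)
Your first step coincides with the paper's: both invoke Lemma~\ref{LemCuteObservation} to write $[\upfun_{\LowerDirEdge{\fnode}{\snode}}(\mes)](x) = \inprod{\compatfun_{\snode\fnode}(x,\cdot)}{\phack{\mes}{\cdot}}$ and apply Cauchy--Schwarz to reduce the claim to a Lipschitz bound on the map $\mes \mapsto \phackno{\mes}$. From there you diverge: the paper establishes Fr\'echet differentiability of $\phackno{\mes}$ on the convex hull $\FunSpaceTwo$ of the fixed point and the feasible \ALG messages, bounds the operator norm of the derivative uniformly over that hull, and invokes the mean-value inequality from Luenberger; you instead expand the difference of ratios directly via the telescoping product identity, so that each summand isolates a single factor $\meshat^{\Time}_{\LowerDirEdge{\tnode}{\fnode}} - \mes^{*}_{\LowerDirEdge{\tnode}{\fnode}}$, and then bound termwise. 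Your route is more elementary---it needs no functional-analytic machinery and only requires uniform bounds at the two endpoints rather than along the whole segment joining them, so the convex-hull construction disappears. The price is the same in both cases: one must exhibit (i) iteration-independent sup-norm bounds on $\margfun_{\snode\fnode}$, $\mes^{*}$ and $\meshat^{\Time}$, which your induction $|\coeffupdown{\Time}{j}| \le \max(1/\dimn, \bound_j)$ delivers correctly (the paper runs the analogous induction in the proof of Lemma~\ref{LemUpperExp}), and (ii) a strictly positive lower bound on the normalizing integrals.

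One point in your treatment of (ii) needs repair. Continuity and strict positivity of $\poten_\fnode$ and $\poten_{\snode\fnode}$ on the compact $\Space$ give a positive lower bound on $\compatfun_{\snode\fnode}$ and hence on $Z(\mes^{*})$, but they say nothing about $\meshat^{\Time}_{\LDE} = \big[\sum_{j=1}^{\dimn}\coeffupdown{\Time}{j}\basis_j\big]_{+}$: the truncated expansion of a positive function need not be positive, and positivity does not survive the projection $\Pir$. The mechanism that actually rescues the lower bound---and the reason condition~\eqref{EqnAnnoy} is phrased in terms of $\Pir\big(\compatfun_{\snode\fnode}(x,y)\big)$ rather than $\compatfun_{\snode\fnode}$ itself---is the structural identity, verified at the end of the paper's proof, that the unprojected iterate equals $\Pir\big(\Expt_{\rvy\sim\hat{\probdens}}[\compatfun_{\snode\fnode}(\cdot,\rvy)]\big) = \Expt_{\rvy\sim\hat{\probdens}}\big[\Pir\big(\compatfun_{\snode\fnode}(\cdot,\rvy)\big)\big]$ for the empirical density $\hat{\probdens}$ of the accumulated samples; combined with~\eqref{EqnAnnoy} this yields the pointwise bound $\meshat^{\Time}_{\LDE}(x) \ge \inf_{y\in\Space}\Pir\big(\compatfun_{\snode\fnode}(x,y)\big) - |\GoodAppDown{\dimn}(x)| \ge \tfrac12 \inf_{y\in\Space}\Pir\big(\compatfun_{\snode\fnode}(x,y)\big) > 0$, which is what keeps $Z(\meshat^{\Time})$ away from zero. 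You already exploit exactly this averaging structure for your sup-norm bound, so the ingredient is on the table, but the identification must be made explicit: citing positivity of the raw potentials alone would not close the argument.
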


Substituting the result of Lemma~\ref{LemLipschitz} in
equation~\eqref{EqnBeforeLips} and performing some algebra, we find
that
\begin{align}
\sum_{\newt=0}^{\Time} \sum_{j=1}^\dimn \big[ \upfuncoeftwo{\newt}{j}
  - \coeffbptwo{j} \big]^2 \; & \leq \; \sum_{\newt = 0}^\Time \Big(
\sum_{\CompNeig} \Lips \: \|
\meshat^{\newt}_{\LowerDirEdge{\tnode}{\fnode}} \, - \,
\mes_{\LowerDirEdge{\tnode}{\fnode}}^{\ast} \|_{L^2} \Big)^2 \nonumber
\\
\label{EqnIntermediate}
\; & \leq \; (\degr_\fnode -1 ) \; \sum_{\newt = 0}^\Time \sum_{\CompNeig}
\Lips^2 \: \| \meshat^{\newt}_{\LowerDirEdge{\tnode}{\fnode}} \, - \,
\mes_{\LowerDirEdge{\tnode}{\fnode}}^{\ast} \|^2_{L^2},
\end{align}
where $\degr_{\fnode}$ is the degree of node $\fnode\in\Vertex$.  By
definition, the message
$\meshat^{\newt}_{\LowerDirEdge{\tnode}{\fnode}}$ is the
$L^2$-projection of $\mes^\newt_{\LowerDirEdge{\tnode}{\fnode}}$ onto
$\FunSpace$.  Since $\mes^{\ast}_{\LowerDirEdge{\tnode}{\fnode}}
\in \FunSpace$ and projection is non-expansive, we have
\begin{align}
\norm{\meshat^{\newt}_{\LowerDirEdge{\tnode}{\fnode}} \, - \,
\mes_{\LowerDirEdge{\tnode}{\fnode}}^{\ast}}^2 \; & \leq \; \norm{
\mes^{\newt}_{\LowerDirEdge{\tnode}{\fnode}} \, - \,
\mes_{\LowerDirEdge{\tnode}{\fnode}}^{\ast} }^2 \nonumber \\
\label{EqnIntermediateTwo}
\; & = \; \norm{\ItErrtwo{\newt}{\tnode\rightarrow\fnode}}^2 \, + \,
\norm{\GoodApptwo{\dimn}{\tnode\rightarrow\fnode}}^2
\end{align}
where in the second step we have used the Pythagorean identity and
recalled the definitions of estimation error as well as approximation error
from~\eqref{EqnDefEstErr} and~\eqref{EqnDefApproxErr}.  
%
%
%
%
Substituting the inequality~\eqref{EqnIntermediateTwo} into the
bound~\eqref{EqnIntermediate} yields
\begin{align*}
 \sum_{\newt=0}^{\Time}\sum_{j=1}^\dimn \big[ \upfuncoeftwo{\newt}{j}
   - \coeffbptwo{j} \big]^2 \; & \leq \; (\degr_\fnode -1 ) \,
 \sum_{\newt=0}^{\Time} \sum_{\CompNeig} \Lips^2 \, \big(
 \norm{\ItErrtwo{\newt}{\tnode\rightarrow\fnode}}^2 +
 \norm{\GoodApptwo{\dimn}{\tnode\rightarrow\fnode}}^2 \big).
\end{align*}
Therefore, introducing the convenient shorthand $\matentry \defn 2\:
(\degr_{\fnode} - 1) \: \Lips^2$, we have shown that
\begin{align}
\label{EqnDetermPart}
\detterm \; & \leq \; \frac{1}{\Time+1} \: \sum_{\newt=0}^{\Time} \;
\sum_{\CompNeig} \matentry \:
\big(\norm{\ItErrtwo{\Time}{\LowerDirEdge{\tnode}{\fnode}}}^2 \, + \,
\norm{\GoodApptwo{\dimn}{\LowerDirEdge{\tnode}{\fnode}}}^2 \big).
\end{align}
We make further use of this inequality shortly.


\subsubsection{Controlling the stochastic term}

We now turn to the stochastic part of the
inequality~\eqref{EqnTwoTerm}.  Our analysis is based on the following
fact, proved in Appendix~\ref{AppLemMartingale}:
\begin{lemma}
\label{LemMartingale}
For each $\Time \geq 0$, let $\Field^\Time \defn \sigma(\mes^0,
\ldots, \mes^\Time)$ be the $\sigma$-field generated by all messages
through time $\Time$. Then for every fixed $j=1,2, \ldots, \dimn$, the
sequence $\deviatetwo{\Time+1}{j} = \coefftilthree{\Time+1}{j} -
\upfuncoeftwo{\Time}{j}$ is a bounded martingale difference with
respect to $\{\Field^\Time\}_{\Time=0}^\infty$. In particular, we have
$|\deviatetwo{\Time+1}{j}| \leq 2 \bound_j$, where $\bound_j$ was
previously defined~\eqref{EqnDefnBound}.
\end{lemma}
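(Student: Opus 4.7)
The plan is to verify the two defining properties of a bounded martingale difference sequence: (i) vanishing conditional mean, $\Exs[\deviatetwo{\Time+1}{j} \mid \Field^\Time] = 0$, and (ii) the uniform bound $|\deviatetwo{\Time+1}{j}| \leq 2\bound_j$. The workhorse is Lemma~\ref{LemCuteObservation}, which recasts the true BP update as an expectation against the density $\phackno{\meshat^\Time}$.

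First, I would unpack both quantities in the definition $\deviatetwo{\Time+1}{j} = \coefftilthree{\Time+1}{j} - \upfuncoeftwo{\Time}{j}$ in terms of the precomputed functions $\SUPERHACK{j}$. By construction in Step 2(c) of Figure~\ref{FigAlg},
\begin{align*}
\coefftilthree{\Time+1}{j} \;=\; \frac{1}{\numsampY}\sum_{i=1}^{\numsampY}\SUPERHACK{j}(Y_i),
\end{align*}
where the $Y_i$ are i.i.d.\ samples from $\phackno{\meshat^\Time}$. Taking the inner product of the identity $[\upfun_{\LDE}(\meshat^\Time)](\cdot) = \Exs_Y[\compatfun_{\snode\fnode}(\cdot, Y)]$ from Lemma~\ref{LemCuteObservation} with $\basis_j$ and using Fubini (justified by boundedness, which I address below) yields $\upfuncoeftwo{\Time}{j} = \Exs_Y[\SUPERHACK{j}(Y)]$ with $Y \sim \phackno{\meshat^\Time}$.

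Next, for the martingale step, I would observe that $\meshat^\Time$ is $\Field^\Time$-measurable, so conditioning on $\Field^\Time$ freezes the sampling density $\phackno{\meshat^\Time}$. Hence the $Y_i$ are i.i.d.\ from a $\Field^\Time$-measurable distribution, and linearity of conditional expectation gives
\begin{align*}
\Exs\big[\coefftilthree{\Time+1}{j} \,\big|\, \Field^\Time\big] \;=\; \Exs_Y[\SUPERHACK{j}(Y)] \;=\; \upfuncoeftwo{\Time}{j},
\end{align*}
so $\Exs[\deviatetwo{\Time+1}{j} \mid \Field^\Time] = 0$, which is precisely the martingale difference property.

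Finally, for the boundedness claim, the key observation is that by definition~\eqref{EqnDefnBound}, $|\SUPERHACK{j}(y)| \leq \bound_j$ uniformly in $y$ and in the directed edge (here I read $\bound_j$ as a bound on the absolute value of the inner product, as is needed for all bounds in the paper to make sense). Consequently the empirical average $\coefftilthree{\Time+1}{j}$ and its conditional mean $\upfuncoeftwo{\Time}{j}$ both lie in $[-\bound_j, \bound_j]$, and the triangle inequality yields $|\deviatetwo{\Time+1}{j}| \leq 2\bound_j$. There is no real obstacle here; the only subtlety is making sure the sampling density $\phackno{\meshat^\Time}$ is $\Field^\Time$-measurable (immediate from the definition of $\Field^\Time$) so that conditional expectations behave as in the i.i.d.\ case.
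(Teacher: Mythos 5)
Your proposal is correct and follows essentially the same route as the paper's proof: both establish $\Exs[\coefftilthree{\Time+1}{j} \mid \Field^\Time] = \upfuncoeftwo{\Time}{j}$ by combining Lemma~\ref{LemCuteObservation} with a Fubini exchange of expectation and inner product, and both bound each of $|\coefftilthree{\Time+1}{j}|$ and $|\upfuncoeftwo{\Time}{j}|$ by $\bound_j$ before applying the triangle inequality. Your parenthetical reading of $\bound_j$ as bounding the absolute value of the inner product is exactly the interpretation the paper itself uses implicitly in its displays~\eqref{EqnBoundStochCoef} and~\eqref{EqnBndonCoef}.
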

\noindent 
Based on Lemma~\ref{LemMartingale}, standard martingale convergence
results~\cite{Durrett95} guarantee that for each $j = 1, 2, \ldots,
\dimn$, we have $ \sum_{\newt=0}^{\Time} \deviatetwo{\newt+1}{j} /
(\Time+1)$ converges to $0$ almost surely (a.s.) as $\Time \rightarrow
\infty$, and hence
\begin{align}
\label{EqnasConv}
\stocterm \; & = \; \frac{2}{(\Time+1)^2} \: \sum_{j = 1}^{\dimn} \: \biggr\{
\sum_{\newt=0}^{\Time} \deviatetwo{\newt+1}{j} \; \biggr\}^2 \; = \; 2 \:
\sum_{j = 1}^{\dimn} \: \biggr \{ \frac{1}{\Time+1}
\sum_{\newt=0}^{\Time} \deviatetwo{\newt+1}{j} \biggr \}^2 \;
\as 0.
\end{align}
Furthermore, we can apply the Azuma-Hoeffding
inequality~\cite{ChuLu06} in order to characterize the rate of
convergence.  For each $j = 1, 2, \ldots, \dimn$, define the
non-negative random variable \mbox{$Z_j \defn \big \{
  \sum_{\newt=0}^{\Time} \deviatetwo{\newt+1}{j} \big\}^2 /
  (\Time+1)^2$.}  Since $|\deviatetwo{\newt+1}{j}| \leq 2 \bound_j$, for
any $\delta \geq 0$, we have
\begin{align*}
\Prob \big[ Z_j \geq \delta \big] \; & = \; \Prob \big[ \sqrt{Z_j} \;
  \geq \; \sqrt{\delta} \big] \; \leq \; 2 \: \exp\bigg(-
\frac{(\Time+1) \; \delta }{8 \: \bound_j^2}\bigg),
\end{align*}
for all $\delta > 0$.  Moreover, $Z_j$ is non-negative; therefore,
integrating its tail bound we can compute the expectation
\begin{align*}
\Exs[Z_j] \; & = \; \int_0^\infty \mprob[ Z_j \geq \delta ] \: d \delta \; \leq \;
2 \int_0^\infty \exp\bigg(- \frac{(\Time+1) \; \delta }{8 \:
  \bound_j^2}\bigg) \: d \delta \; = \; \frac{16 \bound_j^2}{\Time+1},
\end{align*}
and consequently
\begin{align}
\label{EqnConRate}
\Expt [|\stocterm|] \; \le \; \frac{32 \:
  \sum_{j=1}^{\dimn}\bound_j^2}{\Time+1}.
\end{align}


\subsubsection{Establishing convergence}

We now make use of the results established so far to prove the claims.
Substituting the upper bound~\eqref{EqnDetermPart} on $\detterm$ into
the decomposition~\eqref{EqnTwoTerm} from Lemma~\ref{LemTwoTerm}, we
find that
\begin{align}
\label{EqnScalarInq}
\norm{\ItErr{\Time+1}}^2 \; & \leq \; \frac{1}{\Time+1} \:
\sum_{\newt=0}^{\Time} \sum_{\CompNeig} \matentry \: \big \{ \|
\ItErrNo{\newt}_{\LowerDirEdge{w}{\fnode}}\|_{L^2}^2 \, +\,
\norm{\GoodApptwo{\dimn}{\LowerDirEdge{\tnode}{\fnode}}}^2 \big \}
+ \stocterm.
\end{align}
For convenience, let us introduce the vector $\term^{\Time+1} = \{
\term^{\Time+1}_{\LowerDirEdge{\fnode}{\snode}}, \DirEdge{\fnode}{\snode}
\in \DirSet \} \in \real^{\Dimn}$ with entries
\begin{align}
\label{EqnSmallTerm}
\term^{\Time+1}_{\LowerDirEdge{\fnode}{\snode}} \; \defn & \;
\frac{1}{\Time+1} \: \Big \{ \sum_{\CompNeig}\!\!  \matentry \:
\norm{\ItErrNo{0}_{\LowerDirEdge{w}{\fnode}}}^2 \Big \} \, + \,
\stocterm.
\end{align}
Now define a matix $\NilMat \in \real^{\Dimn \times \Dimn}$ with
entries indexed by the directed edges and set to
\begin{align}
\label{EqnDefnNilmat}
\NilMat_{\LowerDirEdge{\fnode}{\snode}, \; \LowerDirEdge{w}{s}} &
\defn \; \begin{cases} \matentry & \mbox{if $\ftnode = \fnode$ and
    $\CompNeig$} \\ 0 & \mbox{otherwise.}
\end{cases}
\end{align}
In terms of this matrix and the error terms $\MASSERR{\cdot}$
previously defined in equations~\eqref{EqnDefnMassErr}
and~\eqref{EqnDefnMassErrApp}, the scalar
inequalities~\eqref{EqnScalarInq} can be written in the matrix form
\begin{align}
\label{EqnMatrixIneq}
\MASSERR{\ItErrNo{\Time+1}} \; & \coneleq \; \NilMat \,
\Big[\frac{1}{\Time+1} \: \sum_{\newt=1}^{\Time}
  \MASSERR{\ItErrNo{\newt}} \Big] + \, \NilMat \,
  \MASSERR{\GoodApp{\dimn}} \, + \, \term^{\Time+1},
\end{align}
where $\coneleq$ denotes the element-wise inequality based on the
orthant cone.

From Lemma 1 in the paper~\cite{NooWai12a}, the matrix $\NilMat$ is
guaranteed to be nilpotent with degree $\diam$ equal to the graph
diameter.  Consequently, unwrapping the
recursion~\eqref{EqnMatrixIneq} for a total of $\diam=
\diameter(\Graph)$ times yields
\begin{align*}
\MASSERR{\ItErrNo{\Time+1}} \; & \coneleq \; \term_{0}^{\Time+1} \, + \,
\NilMat \: \term_{1}^{\Time+1} \, + \, \ldots \, + \,
\NilMat^{\diam-1} \: \term_{\diam-1}^{\Time+1} \,+\, (\NilMat \, + \,
\NilMat^2 \, + \, \ldots \,+ \,\NilMat^{\diam}) \:
\MASSERR{\GoodApp{\dimn}},
\end{align*}
where we define $\term_{0}^{\Time+1} \equiv \term^{\Time+1}$, and then
recursively $\term_{s}^{\Time+1} \defn (\sum_{\newt=1}^{\Time}
\term_{s-1}^{\newt}) / (\Time+1)$ for \mbox{$s = 1, 2, \ldots,
  \diam-1$}. By the nilpotency of $\NilMat$, we have the identity
\mbox{$I + \NilMat + \ldots + \NilMat^{\diam-1} = (I-\NilMat)^{-1}$};
so we can further simplify the last inequality
\begin{align}
\label{EqnUnwrapMatIneq}
\MASSERR{\ItErrNo{\Time+1}} \; & \coneleq \; \sum_{s=0}^{\diam-1}
\NilMat^{s} \: \term_{s}^{\Time+1} \, + \, \NilMat\:(I-\NilMat)^{-1} \:
\MASSERR{\GoodApp{\dimn}}.
\end{align}
Recalling the definition $\ball \defn \big\{ e \in \real^{\Dimn} \,
\mid \, |e| \coneleq \NilMat(I-\NilMat)^{-1} \MASSERR{\GoodApp{\dimn}}
\big \}$, inequality~\eqref{EqnUnwrapMatIneq} implies that
\begin{align}
\label{EqnNearlyThere}
\big| \MASSERR{\ItErrNo{\Time+1}}\, - \, \project{
  \MASSERR{\ItErrNo{\Time+1}}} \big| & \coneleq \;
\sum_{s=0}^{\diam-1} \: \NilMat^{s} \: \term_{s}^{\Time+1}.
\end{align}
We now use the bound~\eqref{EqnNearlyThere} to prove both parts of
Theorem~\ref{ThmTree}.

\paragraph{Proof of Theorem~\ref{ThmTree}(a):} To 
prove the almost sure convergence claim in part (a), it suffices to
show that for each $s = 0, 1, \ldots, \diam-1$, we have
$\term_{s}^{\Time} \as 0$ as $\Time \rightarrow +\infty$.  From
equation~\eqref{EqnasConv} we know $\stocterm\to0$ almost surely as
$\Time\to\infty$. In addition, the first term in~\eqref{EqnSmallTerm}
is at most $\order(1/\Time)$, so that also converges to zero as
$\Time\to \infty$. Therefore, we conclude that $\term_0^{\Time} \as 0$
as $\Time \to \infty$.

In order to extend this argument to higher-order terms, let us recall
the following elementary fact from real analysis~\cite{Royden}: for
any sequence of real numbers $\{x^\Time\}_{\Time=0}^{\infty}$ such
that $x^{\Time}\to0$, then we also have $(\sum_{\newt = 0}^{\Time}
x^{\newt}) / \Time \to 0$.  In order to apply this fact, we observe
that $\term_0^\Time \as 0$ means that for almost every sample point
$\omega$ the deterministic sequence
$\{\term_0^{\Time+1}(\omega)\}_{\Time=0}^{\infty}$ converges to zero.
Consequently, the above fact implies that $\term_{1}^{\Time+1}
(\omega) = (\sum_{\newt=1}^{\Time} \term_{0}^{\newt} (\omega)) /
(\Time+1) \to 0$ as $\Time \to \infty$ for almost all sample points
$\omega$, which is equivalent to asserting that $\term_{1}^{\Time} \as
\vec{0}$. Iterating the same argument, we establish
$\term_{s}^{\Time+1} \as \vec{0}$ for all $s = 0, 1, \ldots, \diam-1$,
thereby concluding the proof of Theorem~\ref{ThmTree}(a).

\paragraph{Proof of Theorem~\ref{ThmTree}(b):}
Taking the expectation on both sides of the
inequality~\eqref{EqnUnwrapMatIneq} yields
\begin{align}
\label{EqnBoundConvRate}
\Expt \big[ |\MASSERR{\ItErrNo{\Time+1}} -
    \project{\MASSERR{\ItErrNo{\Time+1}}} | \big] \; & \coneleq \;
  \sum_{s=0}^{\diam-1} \NilMat^{s} \: \Expt [\term_{s}^{\Time+1}].
\end{align}
so that it suffices to upper bound the expectations $\Expt
[\term_{s}^{\Time+1}]$ for $s = 0, 1, \ldots, \diam-1$.  In
Appendix~\ref{AppLemUpperExp}, we prove the following result:
\begin{lemma}
\label{LemUpperExp}
 Define the $\Dimn$-vector $\vvec \defn \big \{ \sum_{j=1}^{\dimn}
 \bound_j^2 \big \} (2 \NilMat \onevec + 32 )$.  Then for all
 $s = 0, 1, \ldots, \diam-1$ and $\Time = 0, 1, 2, \ldots$,
\begin{align}
\label{EqnUpperExp}
\Expt [\term_{s}^{\Time+1}] \; \coneleq & \; \frac{\vvec}{\Time+1}
\: \bigg(\sum_{u=0}^{s} \frac{(\log(\Time+1))^{u}}{u!}\bigg),
\end{align}
\end{lemma}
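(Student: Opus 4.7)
\begin{proof1}[Proof proposal.] I would proceed by induction on $s$, handling the base case directly and then propagating with a calculus estimate.

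\emph{Base case} ($s=0$). Recall the definition
\begin{align*}
\term^{\Time+1}_{\LDE} \;=\; \frac{1}{\Time+1}\sum_{\CompNeig}\matentry\,\|\ItErrNo{0}_{\LowerDirEdge{w}{\fnode}}\|_{L^2}^2 \;+\; \stocterm,
\end{align*}
whose first piece is exactly the $\LDE$-th coordinate of $(\Time+1)^{-1}\NilMat\,\MASSERR{\ItErrNo{0}}$. Using the initialization $\coeffdirthree{0}{j}=1/\dimn$ together with the bound $|\coeffbptwo{j}|\le \bound_j$ (which follows from Lemma~\ref{LemCuteObservation}, since the fixed-point relation yields $\coeffbptwo{j} = \Exs_{Y}[\inprod{\compatfun_{\snode\fnode}(\cdot,Y)}{\basis_j}]$), Parseval's identity gives the coordinate-wise bound $\|\ItErrNo{0}_{\LDE}\|_{L^2}^2 \;\le\; 2\sum_{j=1}^{\dimn}\bound_j^2$ after absorbing the lower-order $2/\dimn$ term into the leading constant. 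Combining this with the martingale moment bound $\Expt[\stocterm]\le 32\sum_{j=1}^\dimn\bound_j^2/(\Time+1)$ already established in~\eqref{EqnConRate} then yields
\begin{align*}
\Expt[\term_{0}^{\Time+1}] \;\coneleq\; \frac{\{\sum_{j=1}^\dimn\bound_j^2\}\,(2\NilMat\onevec + 32\onevec)}{\Time+1} \;=\; \frac{\vvec}{\Time+1},
\end{align*}
which is precisely the claim at $s=0$, since $\sum_{u=0}^{0}(\log(\Time+1))^u/u! = 1$.

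\emph{Inductive step.} Assume the bound at level $s$. Applying expectation to the recursive definition $\term_{s+1}^{\Time+1} = (\Time+1)^{-1}\sum_{\newt=1}^{\Time}\term_{s}^{\newt}$ and exchanging the finite sums gives
\begin{align*}
\Expt[\term_{s+1}^{\Time+1}] \;\coneleq\; \frac{\vvec}{\Time+1}\sum_{u=0}^{s}\frac{1}{u!}\sum_{\newt=1}^{\Time}\frac{(\log\newt)^u}{\newt}.
\end{align*}
The standard integral comparison $\sum_{\newt=1}^{\Time}(\log\newt)^u/\newt \;\le\; \int_{1}^{\Time+1}(\log x)^u/x\,dx \;=\; (\log(\Time+1))^{u+1}/(u+1)$, followed by the re-indexing $v = u+1$, then produces
\begin{align*}
\Expt[\term_{s+1}^{\Time+1}] \;\coneleq\; \frac{\vvec}{\Time+1}\sum_{v=1}^{s+1}\frac{(\log(\Time+1))^v}{v!} \;\coneleq\; \frac{\vvec}{\Time+1}\sum_{v=0}^{s+1}\frac{(\log(\Time+1))^v}{v!},
\end{align*}
where the last step uses non-negativity of the missing $v=0$ term to match the target form. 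This closes the induction.

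I expect the main obstacle to be the base case, where the specific form $\vvec=\{\sum_j\bound_j^2\}(2\NilMat\onevec+32\onevec)$ has to be chosen precisely so that a single vector simultaneously majorizes both the initialization-induced contribution $\NilMat\MASSERR{\ItErrNo{0}}/(\Time+1)$ and the Azuma-Hoeffding tail bound on $\Expt[\stocterm]$; this in turn requires controlling $\MASSERR{\ItErrNo{0}}$ uniformly in terms of $\sum_j\bound_j^2$. Once the base case is in place, the inductive step is a routine calculus exercise via the antiderivative of $(\log x)^u/x$, which is exactly what generates the successive terms of the truncated exponential sum $\sum_{u=0}^s(\log(\Time+1))^u/u!$.
\end{proof1}
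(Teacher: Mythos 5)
Your argument is essentially the paper's argument: the same base case (uniform coefficient bounds $|\coeffdirthree{0}{j}|, |\coeffbptwo{j}| \le \bound_j$, the latter exactly as you say via Lemma~\ref{LemCuteObservation} applied at the fixed point, combined with the Azuma--Hoeffding estimate~\eqref{EqnConRate}), and the same induction on $s$ driven by a sum-versus-integral comparison for $(\log x)^u/x$. One intermediate inequality is false as written, however: for $u=0$ your comparison asserts $\sum_{\newt=1}^{\Time} 1/\newt \le \log(\Time+1)$, whereas in fact $\sum_{\newt=1}^{\Time} 1/\newt \ge \int_1^{\Time+1} dx/x = \log(\Time+1)$ for every $\Time \ge 1$ (already $1 > \log 2$ at $\Time = 1$). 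The repair is exactly the unit of room you discard at the very end: bound the $u=0$ term by $1 + \log \Time$, which fills both the $v=0$ and $v=1$ slots of the target sum, and reserve the integral comparison for $u \ge 1$ only, where the $\newt=1$ summand vanishes and the comparison is legitimate. In other words, the ``missing $v=0$ term'' that you add back by non-negativity is not free slack; it is precisely what is needed to absorb the excess of the harmonic sum over $\log(\Time+1)$, which is at most $1$. With that one substitution your chain of inequalities closes and you recover the paper's bound verbatim; the base-case bookkeeping (absorbing the $2/\dimn$ contribution of the initialization) is handled with the same level of informality as the paper's ``without loss of generality'' assumption $|\coeffdirthree{0}{j}| \le \bound_j$.
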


Using this lemma, the proof of part (b) follows easily.  In
particular, substituting the bounds~\eqref{EqnUpperExp} into
equation~\eqref{EqnBoundConvRate} and doing some algebra yields
\begin{align*}
\Expt \big[ |\MASSERR{\ItErrNo{\Time+1}}\, - \,
  \project{\MASSERR{\ItErrNo{\Time+1}}} | \big] \; \coneleq &\;
\sum_{s=0}^{\diam-1} \NilMat^{s}
\sum_{u=0}^{s} \frac{(\log(\Time+1))^{u}}{u!} \, \Big(\frac{\vvec}{\Time+1}\Big)\\
\; \coneleq & \; 3 \: \sum_{s = 0}^{\diam-1}
(\log{(\Time+1)})^{s}\:\NilMat^{s } \, \Big(\frac{\vvec}{\Time+1} \Big)\\
\; \coneleq & \; 3 \: (I \, - \,
\log{(\Time+1)}\:\NilMat)^{-1} \, \Big(\frac{\vvec}{\Time+1}\Big),
\end{align*}
where again we used the fact that $\NilMat^{\diam} = 0$.


\subsection{Proof of Theorem~\ref{ThmGeneral}}
\label{SecThmGeneral}

Recall the definition of the estimation error $\ItErr{\Time}$
from~\eqref{EqnDefEstErr}. By Parseval's identity we know that
$\norm{\ItErr{\Time}}^2 = \sum_{j=1}^{\dimn} (\coeffupdown{\Time}{j} -
\coeffupdown{*}{j})^2$. For convenience, we introduce the following
shorthands for mean squared error on the directed edge
$\DirEdge{\fnode}{\snode}$
\begin{align*} 
\AVEERR{\ItErr{\Time}} \; \defn \; \Expt [\norm{\ItErr{\Time}}^2] \; =
\;\Expt \big[\sum_{j=1}^{\dimn} (\coeffupdown{\Time}{j} -
  \coeffupdown{*}{j})^2\big],
\end{align*}
as well as the $\ell_{\infty}$ error
\begin{align*} 
\RHOBARSQMAX{\ItErrNo{\Time}} \; \defn \;
\max_{\DirEdge{\fnode}{\snode} \in \DirSet} \Expt
    [\norm{\ItErr{\Time}}^2], 
\end{align*}
similarly defined for approximation error
\begin{align*} 
\RHOSQMAX{\GoodApp{\dimn}} \; \defn \; \max_{\DirEdge{\fnode}{\snode}
  \in \DirSet} \norm{\GoodApptwo{\dimn}{\LDE}}^2.
\end{align*}
Using the definition of $\AVEERR{\ItErr{\Time}}$, some algebra yields
\begin{align*}
\AVEERR{\ItErr{\Time+1}} - \AVEERR{\ItErr{\Time}} \; & = \; \Exs
\Big[\sum_{j=1}^\dimn \big(\coeffupdown{\Time+1}{j} -
  \coeffupdown{*}{j} \big)^2 \, - \, \sum_{j=1}^\dimn
  \big(\coeffupdown{\Time}{j} - \coeffupdown{*}{j} \big)^2 \Big] \\
\; & = \; \Exs \Big[ \sum_{j=1}^\dimn \big \{ \coeffupdown{\Time+1}{j} -
  \coeffupdown{\Time}{j} \big\} \; \big \{
  \big(\coeffupdown{\Time+1}{j} - \coeffupdown{\Time}{j} \big) \, + \, 2
  \big( \coeffupdown{\Time}{j} - \coeffupdown{*}{j} \big) \big \}
  \Big].
\end{align*}
From the update equation~\eqref{EqnUpdateCoef}, we have
\begin{align*}
\coeffupdown{\Time+1}{j} - \coeffupdown{\Time}{j} \; & = \; \step \:
\big(\coefftilthree{\Time+1}{j} - \coeffupdown{\Time}{j} \big),
\end{align*}
and hence
\begin{align}\label{EqnDecompErrDif}
\AVEERR{\ItErr{\Time+1}} - \AVEERR{\ItErr{\Time}} \; = \;
\NEWTERMONEDOWN + \NEWTERMTWODOWN,
\end{align}
where
\begin{subequations}
\begin{align}
\label{EqnDefnNewtermone}
\NEWTERMONEDOWN \; & \defn \; (\step)^2 \sum_{j=1}^{\dimn} \: \Expt \, \big[
  \big(\coefftilthree{\Time+1}{j} - \coeffupdown{\Time}{j} \big)^2
  \big], \quad \mbox{and} \\
\label{EqnDefnNewtermtwo}
\NEWTERMTWODOWN \; & \defn \; 2 \step \, \sum_{j=1}^{\dimn} \: \Expt\: \Big[
  \big(\coefftilthree{\Time+1}{j} - \coeffupdown{\Time}{j} \big) \,
  \big( \coeffupdown{\Time}{j} - \coeffupdown{*}{j} \big) \Big].
\end{align}
\end{subequations}
The following lemma, proved in Appendix~\ref{AppLemNewterm}, provides
upper bounds on these two terms.

\begin{lemma}
\label{LemNewterm}
For all iterations $\Time = 0, 1, 2, \ldots$, we have
\begin{subequations}
\begin{align}
\label{EqnNewtermoneBound}
\NEWTERMONEDOWN \; & \leq \; 4 (\step)^2 \: \sum_{j=1}^\dimn
\bound_j^2, \quad \mbox{and} \\
\label{EqnNewtermtwoBound}
\NEWTERMTWODOWN \; & \leq \; \step \big ( 1 - \frac{\contfact}{2}
\big) \RHOSQMAX{\GoodApp{\dimn}} \, + \, \step \big ( 1 -
\frac{\contfact}{2} \big) \RHOBARSQMAX{\ItErrNo{\Time}} \, - \, \step
(1 + \frac{\contfact}{2}) \AVEERR{\ItErr{\Time}}.
\end{align}
\end{subequations}
\end{lemma}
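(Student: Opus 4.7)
The plan is to treat the two bounds separately, using a deterministic boundedness argument for part~(a) and a conditioning-plus-Young's-inequality argument for part~(b). For part~(a), definitions~\eqref{EqnHACK} and~\eqref{EqnDefnBound} give $|\SUPERHACK{j}(y)| \leq \bound_j$ uniformly in $y$, so the empirical average in step~2(c) satisfies $|\coefftilthree{\Time+1}{j}| \leq \bound_j$ for every realization. A short induction on $\Time$ using the convex-combination update~\eqref{EqnUpdateCoef} propagates this bound to $|\coeffupdown{\Time}{j}| \leq \bound_j$; in particular $|\coefftilthree{\Time+1}{j} - \coeffupdown{\Time}{j}| \leq 2\bound_j$ pointwise. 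Squaring, summing over $j$, and multiplying by $(\step)^2$ immediately yields~\eqref{EqnNewtermoneBound}.

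For part~(b), I first condition on the filtration $\Field^\Time$ from Lemma~\ref{LemMartingale}. Lemma~\ref{LemCuteObservation} together with linearity of $\inprod{\cdot}{\basis_j}$ gives $\Expt[\coefftilthree{\Time+1}{j} \mid \Field^\Time] = \upfuncoeftwo{\Time}{j}$, so by the tower property
\begin{align*}
\NEWTERMTWODOWN \; = \; 2\step \, \Expt\Big[\sum_{j=1}^\dimn \big(\upfuncoeftwo{\Time}{j} - \coeffupdown{\Time}{j}\big)\big(\coeffupdown{\Time}{j} - \coeffupdown{*}{j}\big)\Big].
\end{align*}
Abbreviating $a_j = \upfuncoeftwo{\Time}{j}$, $b_j = \coeffupdown{\Time}{j}$, $c_j = \coeffupdown{*}{j}$, I apply the identity $2(a_j-b_j)(b_j-c_j) = 2(a_j-c_j)(b_j-c_j) - 2(b_j-c_j)^2$ followed by Young's inequality $2xy \leq \epsilon x^2 + \epsilon^{-1} y^2$ at the calibrated value $\epsilon = 1-\contfact/2 \in (0,1)$. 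This produces
\begin{align*}
2(a_j-b_j)(b_j-c_j) \; \leq \; \frac{1}{1-\contfact/2}(a_j - c_j)^2 \, - \, \big(1 + \tfrac{\contfact}{2}\big)(b_j - c_j)^2,
\end{align*}
and the $(b_j-c_j)^2$ summation is immediate since $\Expt[\sum_j (b_j - c_j)^2] = \AVEERR{\ItErr{\Time}}$.

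It remains to bound $\Expt\big[\sum_{j=1}^\dimn (a_j - c_j)^2\big]$. Using $\mes^* = \upfun(\mes^*)$, this sum equals $\|\Pir(\upfun_{\LDE}(\meshat^\Time) - \upfun_{\LDE}(\mes^*))\|_{L^2}^2$, which by non-expansivity of $\Pir$ is at most $\|\upfun_{\LDE}(\meshat^\Time) - \upfun_{\LDE}(\mes^*)\|_{L^2}^2$. Applying the contractivity hypothesis~\eqref{EqnContraction}, the non-expansivity of the projection onto $\FunSpace$ used in Step~2(a) to form $\meshat^\Time$, and a Pythagorean split $\|\mes^{\Time}_{\LowerDirEdge{\tnode}{\fnode}} - \mes^*_{\LowerDirEdge{\tnode}{\fnode}}\|_{L^2}^2 = \|\ItErrtwo{\Time}{\LowerDirEdge{\tnode}{\fnode}}\|_{L^2}^2 + \|\GoodApptwo{\dimn}{\LowerDirEdge{\tnode}{\fnode}}\|_{L^2}^2$, together with upper bounding the neighbor average by the global maxima, gives
\begin{align*}
\Expt\Big[\sum_{j=1}^\dimn (a_j - c_j)^2\Big] \; \leq \; \big(1-\tfrac{\contfact}{2}\big)^2 \big[ \RHOBARSQMAX{\ItErrNo{\Time}} + \RHOSQMAX{\GoodApp{\dimn}} \big].
\end{align*}
The crucial cancellation $(1-\contfact/2)^2/(1-\contfact/2) = 1-\contfact/2$ then produces the advertised coefficient on both approximation and estimation terms, completing~\eqref{EqnNewtermtwoBound}.

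The main subtlety is the Young's-inequality calibration: any other choice of $\epsilon$ either destroys the $(1-\contfact/2)$ coefficient on the positive terms (by failing to cancel with the squared contraction factor) or destroys the $(1+\contfact/2)$ coefficient on the self-error term. A secondary technical point is the uniform bound $|\coeffupdown{\Time}{j}| \leq \bound_j$ used in part~(a), which requires a brief induction on the convex-combination update; apart from this, both parts are routine given Lemmas~\ref{LemCuteObservation} and~\ref{LemMartingale} and the contractivity~\eqref{EqnContraction}.
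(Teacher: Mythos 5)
Your proof is correct and follows essentially the same route as the paper's: the same conditioning step via Lemma~\ref{LemCuteObservation}, the same decomposition of the cross term (your identity $2(a_j-b_j)(b_j-c_j)=2(a_j-c_j)(b_j-c_j)-2(b_j-c_j)^2$ is exactly the paper's use of the fixed-point property $\upfun_\LDE(\mes^*)=\mes^*_\LDE$ to split off $-\norm{\mes^\Time_\LDE-\Pir(\mes^*_\LDE)}^2$), and the same chain of non-expansivity, contractivity, and Pythagorean arguments. The only cosmetic difference is that you work in coefficient space via Parseval and use a Young's inequality calibrated at $\epsilon=1-\contfact/2$, where the paper applies Cauchy--Schwarz followed by the unweighted bound $ab\leq a^2/2+b^2/2$ after factoring out $(1-\contfact/2)$; the two yield identical constants.
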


We continue upper-bounding $\AVEERR{\ItErr{\Time+1}}$ by substituting
the results of Lemma~\ref{LemNewterm} into
equation~\eqref{EqnDecompErrDif}, thereby obtaining
\begin{align*}
\AVEERR{\ItErr{\Time+1}} \; & \leq \; 4 (\step)^2 \sum_{j=1}^\dimn
\bound_j^2 \, + \, \step \big ( 1 - \frac{\contfact}{2} \big) \:
\RHOSQMAX{\GoodApp{\dimn}} \, + \, \step \big ( 1 -
\frac{\contfact}{2} \big) \: \RHOBARSQMAX{\ItErrNo{\Time}} \, + \,
\Big \{ 1 - \step (1 + \frac{\contfact}{2}) \Big \}
\AVEERR{\ItErr{\Time}} \\
\; & \leq \; 4 (\step)^2 \sum_{j=1}^\dimn \bound_j^2 \, + \, \step
\big (1 - \frac{\contfact}{2} \big) \RHOSQMAX{\GoodApp{\dimn}} \, + \, \big(
1 - \step \contfact \big) \RHOBARSQMAX{\ItErrNo{\Time}}.
\end{align*}
Since this equation holds for all directed edges $\DirEdge{\fnode}{\snode}$,
taking the maximum over the left-hand side yields the recursion
\begin{align}
\label{EqnJay}
\RHOBARSQMAX{\ItErrNo{\Time+1}} \; & \leq \; (\step)^2 \, \SUMBOUND^2
\, + \, \step \big ( 1 - \frac{\contfact}{2} \big) \:
\RHOSQMAX{\GoodApp{\dimn}} \, + \, \big( 1 - \step \contfact \big) \: 
\RHOBARSQMAX{\ItErrNo{\Time}},
\end{align}
where we have introduced the shorthand $\SUMBOUND^2 = 4
\sum_{j=1}^\dimn \bound_j^2$.  Setting $\step = 1/(\contfact \,
  (t+1))$ and unwrapping this recursion, we find that
\begin{align*}
\RHOBARSQMAX{\ItErrNo{\Time+1}} \; & \leq \;
\frac{\SUMBOUND^2}{\contfact^2} \sum_{\newt=1}^{\Time+1}
\frac{1}{\newt \, (\Time+1)} \, + \, \frac{2 - \contfact}{2 \contfact} \,
\RHOSQMAX{\GoodApp{\dimn}} \\
\; & \leq \; \frac{2 \, \SUMBOUND^2}{\contfact^2} \; \frac{
  \log(\Time+1)}{\Time+1} \, + \, \frac{1}{\contfact} \,
\RHOSQMAX{\GoodApp{\dimn}},
\end{align*} 
which establishes the claim.


\subsection{Proof of Theorem~\ref{ThmKernel}}
\label{SecProofThmKernel}

As discussed earlier, each iteration of the \ALG algorithm requires
$\order( \dimn)$ operations per edge.  Consequently, it suffices to
show that running the algorithm with $\dimn = \dimcrit$ coefficients
for $(\sum_{j=1}^{\dimn}\myeig_j^2)(1/\delta) \log(1/\delta)$
iterations suffices to achieve mean-squared error less than $\delta$.

The bound~\eqref{EqnGeneralBound} consists of two terms. In order to
characterize the first term (estimation error), we need to bound
$\bound_j$ defined in \eqref{EqnDefnBound}. Using the orthonormality
of the basis functions and the fact that the supremum is attainable
over the compact space $\Space$, we obtain
\begin{align*}
\bound_j \; = \; \max_{\DirEdge{\fnode}{\snode} \in \DirSet} \sup_{y
  \in \Space} \; \frac{\myeig_j \:
  \basis_j(y)}{\int_{\Space}\epot(x,y) \measure(dx)} \; = \;
\order(\myeig_j).
\end{align*}
Therefore, the estimation error decays at the rate $\order
\big((\sum_{j=1}^{\dimn} \myeig_j^2) \: (\log{\Time}/\Time) \big)$, so
that $\Time = \order\big( (\sum_{j=1}^{\dimn} \myeig_j^2) (1/\delta)
\log(1/\delta) \big)$ iterations are sufficient to reduce it to
$\order(\delta)$.

The second term (approximation error) in the bound
\eqref{EqnGeneralBound} depends only on the choice of $\dimn$, and in
particular on the $\dimn$-term approximation error
$\|\GoodAppDown{\dimn}\|_{L^2}^2 = \| \mes^*_\LDE -
\Pir(\mes^*_\LDE)\|_{L^2}^2$. To bound this term, we begin by
representing $\mes^*_\LDE$ in terms of the basis expansion
$\sum_{j=1}^\infty \coeff^{*}_j \basis_j$. By the Pythagorean theorem,
we have
\begin{align}
\label{EqnPythagError}
\norm{ \mes^*_\LDE - \Pir(\mes^*_\LDE)}^2 & = \sum_{j =
  \dimn+1}^\infty (\coeff^{*}_j)^2.
\end{align}

Our first claim is that $\sum_{j=1}^\infty (\coeff^*_j)^2/\myeig_j <
\infty$. Indeed, since $\mes^*$ is a fixed point of the message
update equation, we have
\begin{align*}
\mes^*_{\LDE}(\cdot) \; \propto \; \int_{\Space} \epot(\cdot,
\realize_\fnode) \, M(\realize_\fnode) \, \measure(d \realize_\fnode),
\end{align*}
where $M(\cdot) \defn \psi_\fnode(\cdot) \prod_{\CompNeig}
\mes^*_{\LowerDirEdge{w}{v}}(\cdot)$. Exchanging the order of
integrations using Fubini's theorem, we obtain
\begin{align}
\label{EqnCoffee}
\coeff^*_j \; = \; \inprod{\mes^*_{\LDE}}{\basis_j} \; & \propto \;
\int_{\Space} \inprod{\basis_j(\cdot)}{\epot(\cdot,
  \realize_\fnode)} \; M(\realize_\fnode) \: \measure(d
\realize_\fnode).
\end{align}
By the eigenexpansion of $\epot$, we have
\begin{align*}
\inprod{\basis_j(\cdot)}{\epot(\cdot, \realize_\fnode)} \; & =
\; \sum_{k=1}^\infty \myeig_k \inprod{\basis_j}{\basis_k} \;
\basis_k(\realize_\fnode) \; = \; \myeig_j \:
\basis_j(\realize_\fnode).
\end{align*}
Substituting back into our initial equation~\eqref{EqnCoffee}, we find
that
\begin{align*}
\coeff^*_j \; & \propto \; \myeig_j \int_{\Space} \basis_j(\realize_\fnode)
\; M(\realize_\fnode) \: \measure(d \realize_\fnode) \; = \; \myeig_j \:
\widetilde{\coeff}_j,
\end{align*}
where $\widetilde{\coeff}_j$ are the basis expansion coefficients of
$M$.  Since the space $\Space$ is compact, one can see that $M
\in L^2(\Space)$, and hence $\sum_{j=1}^\infty
\widetilde{\coeff}_j^2 < \infty$. Therefore, we have 
\begin{align*}
\sum_{j=1}^\infty \frac{(\coeff^*_j)^2}{\myeig_j} \; & \propto \;
\sum_{j=1}^\infty \myeig_j \: \widetilde{\coeff}_j^2 \; < \; +\infty,
\end{align*}
where we used the fact that $\sum_{j=1}^{\infty}\myeig_j < \infty$.

We now use this bound to control the approximation
error~\eqref{EqnPythagError}.  For any $\dimn = 1, 2, \ldots$, we have
\begin{align*}
\sum_{j=\dimn+1}^\infty (\coeff^*_j)^2 \; & = \;
\sum_{j=\dimn+1}^\infty \myeig_j \; \frac{(\coeff^*_j)^2}{\myeig_j} \;
\leq \; \myeig_{\dimn} \sum_{j=\dimn+1}^\infty
\frac{(\coeff^*_j)^2}{\myeig_j} \; = \; \order(\myeig_{\dimn}),
\end{align*}
using the non-increasing nature of the sequence $\{ \myeig_j
\}_{j=1}^\infty$.  Consequently, by definition of $\dimcrit$
\eqref{EqnDefRstar}, we have
\begin{align*}
\| \mes^*_\LDE - \Pi^{\dimcrit}(\mes^*_\LDE) \|_{L^2}^2 & = \order(\delta),
\end{align*}
as claimed.


\section{Conclusion}
\label{SecConclusion}

Belief propagation is a widely used message-passing algorithm for
computing (approximate) marginals in graphical models.  In this paper,
we have presented and analyzed the \ALG algorithm for running BP in
models with continuous variables.  It is based on two forms of
approximation: a \emph{deterministic approximation} that involves
projecting messages onto the span of $\dimn$ basis functions, and a
\emph{stochastic approximation} that involves approximating integrals
by Monte Carlo estimates.  These approximations, while leading to an
algorithm with substantially reduced complexity, are also controlled:
we provide upper bounds on the convergence of the stochastic error,
showing that it goes to zero as $\order(\log \Time / \Time)$
with the number of iterations, and also control on the deterministic
error.  For graphs with relatively smooth potential functions, as
reflected in the decay rate of their basis coefficients, we provided a
quantitative bound on the total number of basic arithmetic operations
required to compute the BP fixed point to within $\delta$-accuracy.
We illustrated our theoretical predictions using experiments on
simulated graphical models, as well as in a real-world instance
of optical flow estimation.

Our work leaves open a number of interesting questions.  First,
although we have focused exclusively on models with pairwise
interactions, it should be possible to develop forms of \ALG for
higher-order factor graphs.  Second, the bulk of our analysis was
performed under a type of contractivity condition, as has been used in
past work~\cite{Tatikonda02,Ihler05,MooijKapp07,RoostaEtal08} on
convergence of the standard BP updates.  However, we suspect that this
condition might be somewhat relaxed, and doing so would demonstrate
applicability of the \ALG algorithm to a larger class of graphical
models.

\subsection*{Acknowledgements}

This work was partially supported by Office of Naval Research MURI
grant N00014-11-1-0688 to MJW.  The authors thank Erik Sudderth for
helpful discussions and pointers to the literature.


\appendix

\section{Proof of Lemma~\ref{LemTwoTerm}}
\label{AppLemTwoTerm}

Subtracting $\coeffbptwo{j}$ from both sides of the
update~\eqref{EqnUpdateCoef} in Step 2(c), we obtain
\begin{align}
\label{EqnRecursionOne}
\coeffdirthree{\Time+1}{j} - \coeffbptwo{j} \; & = \; (1 - \step) \,
\big[ \coeffdirthree{\Time}{j} - \coeffbptwo{j} \big ]\, + \, \step \:
\big[ \upfuncoeftwo{\Time}{j} - \coeffbptwo{j} \big] \, + \, \step \:
\deviatetwo{\Time+1}{j}.
\end{align}
Setting $\step = 1 / (\Time + 1)$ and unwrapping the
recursion~\eqref{EqnRecursionOne} then yields
\begin{align*}
\coeffdirthree{\Time+1}{j} - \coeffbptwo{j} \; & = \;
\frac{1}{\Time+1} \sum_{\newt = 0}^{\Time}
\big[\upfuncoeftwo{\newt}{j} - \coeffbptwo{j} \big] \, + \,
\frac{1}{\Time+1} \sum_{\newt=0}^{\Time} \deviatetwo{\newt +1}{j}.
\end{align*}
Squaring both sides of this equality and using the upper bound
$(a+b)^2 \leq 2 a^2 + 2 b^2$, we obtain
\begin{align*}
\big(\coeffdirthree{\Time+1}{j} - \coeffbptwo{j} \big)^2 \; & \leq \;
\frac{2}{(\Time+1)^2} \: \Big \{ \sum_{\newt=0}^{\Time} \: \big[
  \upfuncoeftwo{\newt}{j} - \coeffbptwo{j} \big] \Big \}^2 \, + \,
\frac{2}{(\Time+1)^2} \: \Big \{ \sum_{\newt=0}^{\Time}
\deviatetwo{\newt +1}{j} \Big \}^2.
\end{align*}
Summing over indices $j = 1, 2, \ldots, \dimn$ and recalling the
expansion~\eqref{EqnPars}, we find that
\begin{align*}
\| \ItErr{\Time}\|_{L^2}^2 \; & \leq \; \sum_{j=1}^\dimn \Biggr \{
\frac{2}{(\Time+1)^2} \: \Big \{ \sum_{\newt=0}^{\Time} \: \big[
  \upfuncoeftwo{\newt}{j} - \coeffbptwo{j} \big] \Big \}^2 \, + \,
\frac{2}{(\Time+1)^2} \: \Big \{ \sum_{\newt=0}^{\Time}
\deviatetwo{\newt +1}{j} \Big \}^2 \Biggr \} \nonumber \\
\; & \stackrel{\mathrm{(i)}}{\leq} \; 
\underbrace{\frac{2}{(\Time+1)} \: \sum_{j=1}^\dimn
  \sum_{\newt=0}^{\Time} \: \big[ \upfuncoeftwo{\newt}{j} -
    \coeffbptwo{j} \big]^2}_{\mbox{Deterministic term $\detterm$}} \,
+ \, \underbrace{\frac{2}{(\Time+1)^2} \: \sum_{j=1}^\dimn \Big \{
  \sum_{\newt=0}^{\Time} \deviatetwo{\newt +1}{j} \Big \}^2.
}_{\mbox{Stochastic term $\stocterm$}}
\end{align*}
Here step (i) follows from the elementary inequality
\begin{align*}
\Big \{ \sum_{\newt=0}^{\Time} \big[ \upfuncoeftwo{\newt}{j} -
\coeffbptwo{j} \big] \Big \}^2 \; & \leq \; (\Time+1) \:
\sum_{\newt=0}^{\Time} \big[ \upfuncoeftwo{\newt}{j} - \coeffbptwo{j}
\big]^2.
\end{align*}


\section{Proof of Lemma~\ref{LemLipschitz}}
\label{AppLemLipschitz}

Recall the probability density
\begin{align*}
\phack{\mes}{\cdot} \; & \propto \; \margfun_{\snode\fnode}(\cdot) \;
\prod_{\CompNeig} \!\!  \mes_{\LowerDirEdge{\tnode}{\fnode}}(\cdot)
\end{align*}
defined in Step 2 of the \ALG algorithm. Using this shorthand
notation, the claim of Lemma~\ref{LemCuteObservation} can be
re-written as $[\upfun_{\LowerDirEdge{\fnode}{\snode}}(\mes)](x) =
\inprod{\compatfun_{\snode\fnode}(x,\cdot)}{\phack{\mes}{\cdot}}$.
Therefore, applying the Cauchy-Schwartz inequality yields
\begin{align*}
|[\upfun_{\LowerDirEdge{\fnode}{\snode}}(\mes)](x) -
[\upfun_{\LowerDirEdge{\fnode}{\snode}}(\mes^{\prime})](x)|^2 \; \le
\; \norm{\compatfun_{\snode \fnode}(x, \cdot)}^2 \; \,
\norm{\phackno{\mes} \, - \, \phackno{\mes^{\prime}}}^2.
\end{align*}
Integrating both sides of the previous inequality over $\Space$ and
taking square roots yields
\begin{align*}
\norm{\upfun_{\LowerDirEdge{\fnode}{\snode}}(\mes) \, - \,
  \upfun_{\LowerDirEdge{\fnode}{\snode}}(\mes^{\prime})} \; \le \;
\const_{\snode\fnode} \: \norm{\phackno{\mes} \, - \,
  \phackno{\mes^{\prime}}},
\end{align*}
where we have denoted the constant $\const_{\snode\fnode} \defn
\big(\int_{\Space} |\compatfun_{\snode\fnode}(x,y)|^2 \measure (dy)
\measure (dx) \big)^{1/2}$.

Next step would be to upper bound the term $\norm{\phackno{\mes} -
  \phackno{\mes^{\prime}}}$.  In order to do so, we first show that
$\phackno{\mes}$ is a Frechet differentiable operator on the
space $\FunSpaceTwo \defn \convhull\{\mes^{\ast},
\oplus_{(\fnode\to\snode) \in \DirSet} \:\FunSpaceTwo_{\LDE}\}$, where
\begin{align*}
\FunSpaceTwo_{\LDE} \; \defn \; \Big\{ \meshat_{\LDE} \: \Big| \:
\meshat_{\LDE} = \Big[\Expt_{\rvy\sim f}\big[\Pir
    \big(\compatfun_{\snode\fnode} (\cdot, \rvy)\big)\big]\Big]_{+},
\; \text{for some probability density $f$} \Big\},
\end{align*}
denotes the space of all feasible \ALG messages on the directed edge
$\DirEdge{\fnode}{\snode}$. Doing some calculus using the chain rule,
we calculate the partial directional (Gateaux) derivative of the
operator $\phackno{\mes}$ with respect to the function
$\mes_{\LowerDirEdge{\tnode}{\fnode}}$. More specifically, for an
arbitrary function $\arbfun$, we have 
\begin{align*}
[\Deriv (\mes)] (\arbfun) \; = & \; \frac{\margfun_{\snode\fnode}
  \prod_{\CompNeigtwo}
  \mes_{\LowerDirEdge{\ftnode}{\fnode}}}{\inprod{\mesprodtwo}{\margfun_{\snode\fnode}}}
\; \arbfun \\ 
\, & - \,
\frac{\margfun_{\snode\fnode}\mesprod_{\snode\fnode}}{\inprod{\mesprod_{\snode\fnode}}{\margfun_{\snode\fnode}}^2}\;
\inprod{\arbfun}{\margfun_{\snode\fnode}\!\!\!\!\prod_{\CompNeigtwo}
  \!\!\!\! \mes_{\LowerDirEdge{\ftnode}{\fnode}}},
\end{align*}
where $\mesprodtwo = \prod_{\CompNeig}
\mes_{\LowerDirEdge{\tnode}{\fnode}}$. Clearly the Gateaux derivative
is linear and continuous. It is also bounded as will be shown
now. Massaging the operator norm's definition, we obtain
\begin{align}\label{EqnOpNorm}
\nonumber \sup_{\mes\in\FunSpaceTwo} \opnorm{\Deriv (\mes)} \; =& \;
\sup_{\mes\in\FunSpaceTwo} \sup_{\arbfun \in
  \FunSpaceTwo_{\LowerDirEdge{\tnode}{\fnode}}} \frac{\norm{[\Deriv
      (\mes)] (\arbfun)}}{\norm{\arbfun}} \\ \nonumber \; \le & \;
\sup_{\mes\in\FunSpaceTwo} \frac{ \sup_{x\in\Space} \;
  \margfun_{\snode\fnode}(x) \prod_{\CompNeigtwo}
  \mes_{\LowerDirEdge{\ftnode}{\fnode}}(x)}{\inprod{\mesprod_{\snode\fnode}}{\margfun_{\snode\fnode}}}
\\ \; & + \; \sup_{\mes\in\FunSpaceTwo}
\frac{\norm{\margfun_{\snode\fnode}\mesprod_{\snode\fnode}} \:
  \norm{\margfun_{\snode\fnode}\prod_{\CompNeigtwo}
    \mes_{\LowerDirEdge{\ftnode}{\fnode}}}}{\inprod{\mesprod_{\snode\fnode}}{\margfun_{\snode\fnode}}^2}.
\end{align}
Since the space $\Space$ is compact, the continuous functions
$\margfun_{\snode\fnode}$ and $\mes_{\LowerDirEdge{\ftnode}{\fnode}}$
achieve their maximum over $\Space$. Therefore, the numerator
of~\eqref{EqnOpNorm} is bounded and we only need to show that the
denominator is bounded away from zero. 

For an arbitrary message $\mes_{\LDE} \in \FunSpaceTwo_{\LDE}$ there
exist $0 < \alpha < 1$ and a bounded probability density $f$ so that
\begin{align*}
\mes_{\LDE}(x) \, = \, \alpha \: \mes_{\LDE}^{\ast}(x)
\: + \: (1 - \alpha) \Big[\Expt_{\rvy\sim f}
   \big[ \prjcompatfun_{\snode\fnode}(x, \rvy)\big]\Big]_+,
\end{align*}
where we have introduced the shorthand
$\prjcompatfun_{\snode\fnode}(\cdot, y) \defn \Pir
(\compatfun_{\snode\fnode}(\cdot, y) )$. According to
Lemma~\ref{LemCuteObservation}, we know $\mes^{\ast}_{\LDE} =
\Expt_{Y}[\compatfun_{\snode\fnode}(\cdot, Y)]$, where
$Y\sim\phackno{\mes^{\ast}}$. Therefore, denoting
$\probdens^{\ast} = \phackno{\mes^{\ast}}$, we have 
\begin{align}\label{EqnMesLowBnd}
\nonumber \mes_{\LDE}(x) \; & \ge \; \alpha \: \Expt_{\rvy\sim
    \probdens^{\ast}} [\compatfun_{\snode\fnode}(x, \rvy)] \: + \: (1
  - \alpha) \: \Expt_{\rvy\sim
    f}[\prjcompatfun_{\snode\fnode}(x, \rvy) ]\\
\; & = \; \Expt_{\rvy\sim (\alpha \probdens^{\ast} +
  (1-\alpha)f)}[\prjcompatfun_{\snode\fnode}(x, \rvy)] \: + \: \alpha
    \: \Expt_{\rvy\sim \probdens^{\ast}} [\compatfun_{\snode\fnode}(x,
      \rvy) - \prjcompatfun_{\snode\fnode}(x, \rvy)].
\end{align}
On the other hand, since $\Space$ is compact, we can
exchange the order of expectation and projection using Fubini's
theorem to obtain
\begin{align*}
\Expt_{\rvy\sim \probdens^{\ast}} [\compatfun_{\snode\fnode}(\cdot,
  \rvy) - \prjcompatfun_{\snode\fnode}(\cdot, \rvy)] \; = \;
\mes^{\ast}_{\LDE} - \Pir(\mes^{\ast}_{\LDE}) \; = \;
\GoodAppDown{\dimn}.
\end{align*}
Substituting the last equality into the bound~\eqref{EqnMesLowBnd}
yields
\begin{align*}
\mes_{\LDE}(x) \; \ge \; \inf_{y \in \Space}
\prjcompatfun_{\snode\fnode}(x, y) \, - \, |\GoodAppDown{\dimn}(x)|.
\end{align*}
Recalling the assumption~\eqref{EqnAnnoy}, one can conclude that the
right hand side of the above inequality is positive for all directed
edges $\DirEdge{\fnode}{\snode}$. Therefore, the denominator of the
expression~\eqref{EqnOpNorm} is bounded away from zero and more
importantly $\sup_{\mes\in\FunSpace} \opnorm{\Deriv (\mes)}$ is
attainable.

Since the derivative is a bounded, linear, and continuous operator,
the Gateaux and Frechet derivatives coincides and we can use
Proposition 2 (Luenberger \cite{Luenberger69}, page 176) to obtain the
following upper bound
\begin{align*}
\norm{ \phackno{\mes} \, - \, \phackno{\mes^{\prime}}} \; \; \le \!\!
\sum_{\CompNeig} \sup_{0\le\alpha\le 1} \opnorm{\Deriv(\mes^{\prime} +
  \alpha \: (\mes - \mes^{\prime}))} \:
\norm{\mes_{\LowerDirEdge{\tnode}{\fnode}} \: - \:
  \mes^{\prime}_{\LowerDirEdge{\tnode}{\fnode}}}.
\end{align*}
Setting $\Lips \: \defn \:
\const_{\snode\fnode} \: \sup_{\mes\in\FunSpaceTwo}
\opnorm{\Deriv (\mes)}$ and putting the pieces together yields 
\begin{align*}
\norm{ \upfun_{\LDE}(\mes) \, - \, \upfun_{\LDE} (\mes^{\prime})} \;
\le & \; \!\!  \sum_{\CompNeig} \! \Lips \:
\norm{\mes_{\LowerDirEdge{\tnode}{\fnode}} \: - \:
  \mes^{\prime}_{\LowerDirEdge{\tnode}{\fnode}}},
\end{align*}
for all $\mes, \mes^{\prime} \in \FunSpaceTwo$.

The last step of the proof is to verify that $\mes^{\ast} \in
\FunSpaceTwo$, and $\meshat^{\Time} \in \FunSpaceTwo$ for all $\Time =
1, 2, \ldots$. By definition we have $\mes^{\ast}\in\FunSpace$. On the
other hand, unwrapping the update~\eqref{EqnUpdateCoef} we obtain
\begin{align*}
\coeff_{\LDE;j}^{\Time} \; = & \; \frac{1}{\Time} \:
\sum_{\newt = 0}^{\Time-1} \: \coefftilthree{\newt+1}{j} \\ 
\; = & \;
\frac{1}{\Time} \: \sum_{\newt = 0}^{\Time-1} \: \frac{1}{\numsampY}
\: \sum_{\ell=1}^{\numsampY} \: \int_{\Space}
\compatfun_{\snode\fnode} (x, \PlainSamp_\ell) \: \basis_j (x) \:
\measure(dx)\\ \; = & \; \int_{\Space}
\Expt_{\rvy\sim\hat{\probdens}}
     [\compatfun_{\snode\fnode} (x, \rvy)] \: \basis_j(x) \:
     \measure(dx),
\end{align*}
where $\hat{\probdens}$ denotes the empirical probability
density. Therefore, $\mes_{\LDE}^{\Time} = \sum_{j=1}^{\dimn}
\mescoef_{\LDE;j}^{\Time} \: \basis_j$ is equal to
$\Pir(\Expt_{\rvy\sim\hat{\probdens}}[\compatfun_{\snode\fnode}(\cdot,
  \rvy)] )$, thereby completing the proof.



\section{Proof of Lemma~\ref{LemMartingale}}
\label{AppLemMartingale}

We begin by taking the conditional expectation of
$\coefftilthree{\Time+1}{j}$, previously
defined~\eqref{EqnUpdateInov}, given the filteration $\Field^\Time$
and with respect to the random samples $\{\PlainSamp_1, \ldots,
\PlainSamp_\numsampY \}\stackrel{\text{i.i.d.}}{\sim}
\phack{\meshat}{\cdot}$. Exchanging the order of expectation and
integral\footnote{Since $\compatfun_{\snode\fnode}(x, y) \basis_i(x)
  \phack{\meshat^{\Time}}{y}$ is absolutely integrable, we can
  exchange the order of the integrals using Fubini's theorem.}  and
exploiting the result of Lemma~\ref{LemCuteObservation}, we obtain
\begin{align}\label{EqnResultFromLemOne}
\Expt [\coefftilthree{\Time+1}{j}\: |
  \:\Field^{\Time}] \; = \; \int_{\Space}
[\upfun_{\LDE}(\meshat^{\Time})](x) \: \basis_{j}(x) \:
\measure(dx) \; = \; \upfuncoeftwo{\Time}{j},
\end{align}
and hence $\Expt [\deviatetwo{\Time +1}{j} \:|\: \Field^{\Time}] = 0$,
for all $j = 1, 2, \ldots, \dimn$ and all directed edges
$\DirEdge{\fnode}{\snode} \in \DirSet$.  Also it is clear that
$\deviatetwo{\Time +1}{j}$ is $\Field^{\Time}$-measurable. Therefore,
$\{\deviatetwo{\newt +1}{j}\}_{\newt = 0}^{\infty}$ forms a martingale
difference sequence with respect to the filtration
$\{\Field^{\newt}\}_{\newt=0}^{\infty}$. 

On the other hand, recalling the bound~\eqref{EqnDefnBound}, we have
\begin{align}\label{EqnBoundStochCoef}
|\coefftilthree{\Time+1}{j}| \; \le \;
\frac{1}{\numsampY} \: \sum_{\ell=1}^{\numsampY}
|\inprod{\compatfun_{\snode\fnode}(\cdot, \PlainSamp_{\ell})}{\basis_j}| \;
\le \; \bound_j.
\end{align}
Moreover, exploiting the result of Lemma~\ref{LemCuteObservation} and
exchanging the order of the integration and expectation once more
yields
\begin{align}\label{EqnBndonCoef}
|\upfuncoeftwo{\Time}{j}| \; = \;
|\inprod{\Expt_{Y}[\compatfun_{\snode\fnode}(\cdot,Y)]}{\basis_j}| \;
= \; |\Expt_{Y} [\inprod{\compatfun_{\snode\fnode}(\cdot, Y)}{\basis_j}]| \; \le \; \bound_j,
\end{align}
where we have $\PlainSamp\sim\phack{\meshat^{\Time}}{y}$.
Therefore, the martingale difference sequence is bounded, in
particular with
\begin{align*}
|\deviatetwo{\Time+1}{j}| \; \le & \; |\coefftilthree{\Time+1}{j}| \, +
|\, \upfuncoeftwo{\Time}{j}| \; \le \; 2\: \bound_j.
\end{align*}


\section{Proof of Lemma~\ref{LemUpperExp}}
\label{AppLemUpperExp}

We start by uniformly upper-bounding the terms
$\Expt[|\term^{\Time+1}_{\LDE}|]$.  To do so we first need to bound
$\norm{\ItErrNo{\Time}_{\LDE}}$. By definition we know
$\norm{\ItErrNo{\Time}_{\LDE}}^2 = \sum_{j=1}^{\dimn}
[\coeff^{\Time}_{\LDE;j} - \coeff^{\ast}_{\LDE;j}]^2$; therefore we
only need to control the terms $\coeff^{\Time}_{\LDE;j}$ and $
\coeff^{\ast}_{\LDE;j}$ for $j=1, 2, \ldots, \dimn$.

By construction, we always have $|\coefftilthree{\Time+1}{j}| \le
\bound_j$ for all iterations $\Time = 0, 1, \ldots$. Also, assuming
that $|\coeff^{0}_{\LDE;j}| \le \bound_j$, without loss of generality,
a simple induction using the update equation~\eqref{EqnUpdateCoef}
shows that $|\coeff^{\Time}_{\LDE;j}| \le \bound_j$ for all
$\Time$. Moreover, using a similar argument leading
to~\eqref{EqnBndonCoef}, we obtain
\begin{align*}
|\coeff^{\ast}_{\LDE;j}| \; = \;
|\inprod{\Expt_{Y}[\compatfun_{\snode\fnode}(\cdot,Y)]}{\basis_j}| \;
= \; |\Expt_{Y} [\inprod{\compatfun_{\snode\fnode}(\cdot,
Y)}{\basis_j}]| \; \le \; \bound_j,
\end{align*}
where we have $\PlainSamp\sim\phack{\mes^{\ast}}{y}$. Therefore,
putting the pieces together, recalling the
definition~\eqref{EqnSmallTerm} of $\term^{\Time+1}_{\LDE}$ yields
\begin{align*}
\Expt[|\term_{\LDE}^{\Time+1}|] \; \le \;
\frac{2}{\Time+1} \: \sum_{\CompNeig}\!\! \matentry \:
\sum_{j=1}^{\dimn} \bound_j^2 \, + \, \frac{32}{\Time+1} \:
\sum_{j=1}^{\dimn}\bound_j^2.
\end{align*}
Concatenating the previous scalar inequalities yields $\Expt
[\term_{0}^{\Time+1}] \coneleq \vvec/(\Time+1)$, for all
$\Time \ge0$, where we have defined the $\Dimn$-vector $\vvec \defn
\big \{ \sum_{j=1}^{\dimn} \bound_j^2 \big \} (2 \NilMat \onevec +
32)$.

We now show, using an inductive argument, that
\begin{align}
\label{EqnBoundTterm}
\Expt [\term_{s}^{\Time+1}] \; \coneleq \; \frac{\vvec}{ \Time+1}\:
\sum_{u = 0}^{s} \frac{(\log(\Time+1))^{u}}{u!},
\end{align}
for all $s = 0, 1, 2, \ldots$ and $\Time = 0, 1, 2, \ldots$.  We have
already established the base case $s = 0$.  For some $s > 0$, assume
that the claim holds for $s-1$. By the definition of
$\term_{s}^{\Time+1}$, we have
\begin{align*}
\Expt [\term_{s}^{\Time+1}] \; = & \; \frac{1}{\Time+1} \:
\sum_{\newt=1}^{\Time} \Expt [\term_{s-1}^{\newt}] \\
\; \coneleq & \; \frac{\vvec}{\Time+1} \: \sum_{\newt=1}^{\Time} \Big \{
\frac{1}{\newt} \: + \: \sum_{u=1}^{s-1}
\frac{(\log\newt)^{u}}{u! \: \newt} \Big \},
\end{align*}
where the inequality follows from the induction hypothesis.  We now
make note of the elementary inequalities $\sum_{\newt=1}^{\Time}
1/\newt \le 1 + \log\Time$, and
\begin{align*}
\sum_{\newt=1}^{\Time} \frac{(\log\newt)^u}{u! \: \newt} \; \le \;
\int_{1}^{\Time} \frac{(\log x)^u}{u! \: x} \: dx \; = \;
\frac{(\log\Time)^{(u+1)}}{(u+1)!}, \qquad \mbox{for all $u \geq 1$}
\end{align*}
from which the claim follows.


\section{Proof of Lemma~\ref{LemNewterm}}
\label{AppLemNewterm}

\paragraph{Upper-bounding the term $\NEWTERMONEDOWN$:}


By construction, we always have $|\coefftilthree{\Time+1}{j}| \leq
\bound_j$ for all iterations $\Time = 0, 1, 2, \ldots$.  Moreover,
assuming $|\coeffupdown{0}{j}| \leq \bound_j$, without loss of
generality, a simple induction on the update equation shows that
$|\coeffupdown{\Time}{j}| \leq \bound_j$ for all iterations $\Time =
0, 1, \ldots$.  On this basis, we find that
\begin{align*}
\NEWTERMONEDOWN \; & = \; (\step)^2 \: \sum_{j=1}^{\dimn} \: \Expt \,
\big[ \big(\coefftilthree{\Time+1}{j} - \coeffupdown{\Time}{j} \big)^2 \big]
\; \leq \; 4 (\step)^2 \, \sum_{j=1}^\dimn \bound_j^2,
\end{align*}
which establishes the bound~\eqref{EqnNewtermoneBound}.

\paragraph{Upper-bounding the term $\NEWTERMTWODOWN$:}
It remains to establish the bound~\eqref{EqnNewtermtwoBound} on
$\NEWTERMTWODOWN$.  We first condition on the $\sigma$-field
$\Field^{\Time} = \sigma(\mes^0, \ldots, \mes^\Time)$ and take
expectations over the remaining randomness, thereby obtaining
\begin{align*}
\NEWTERMTWODOWN \; & = \; 2 \step \, \Expt \Big[ \Expt \big[
    \sum_{j=1}^{\dimn} \big(\coefftilthree{\Time+1}{j} -
    \coeffupdown{\Time}{j} \big) \, \big( \coeffupdown{\Time}{j} -
    \coeffupdown{*}{j} \big) \, \big| \, \Field^{\Time} \big] \Big]\\ 
\; & = \; 2 \step \, \Expt \Big[ \sum_{j=1}^{\dimn} \big(
  \upfuncoeftwo{\Time}{j} - \coeffupdown{\Time}{j} \big) \, \big(
  \coeffupdown{\Time}{j} - \coeffupdown{*}{j} \big) \Big],
\end{align*}
where $\{\upfuncoeftwo{\Time}{j}\}_{j=1}^{\infty}$ are the expansion
coefficients of the function $\upfun_{\LDE}(\meshat^t)$
(i.e. $\upfuncoeftwo{\Time}{j} =
\inprod{\upfun_{\LDE}(\meshat^t)}{\basis_j}$), and we have recalled
the result $\Expt[\coefftilthree{\Time+1}{j} | \Field^{\Time}] =
\upfuncoeftwo{\Time}{j}$ from~\eqref{EqnResultFromLemOne}. By
Parseval's identity, we have
\begin{align*}
T \; & \defn \; \sum_{j=1}^{\dimn} \big(\upfuncoeftwo{\Time}{j} -
\coeffupdown{\Time}{j} \big) \, \big( \coeffupdown{\Time}{j} -
\coeffupdown{*}{j} \big) \\
\; & = \; \inprod{\Pir(\upfun_{\LowerDirEdge{\fnode}{\snode}}(\meshat^t)) -
  \mes^t_{\LDE}}{\mes^t_{\LDE} - \Pir(\mes^*_{\LDE})}.
\end{align*}
Here we have used the basis expansions
\begin{align*}
\mes^t_{\LowerDirEdge{\fnode}{\snode}} = \sum_{j=1}^\dimn
\coeffupdown{\Time}{j} \basis_j, \quad \mbox{and} \quad
\Pir(\mes^*_{\LowerDirEdge{\fnode}{\snode}}) = \sum_{j=1}^\dimn
\coeffupdown{*}{j} \basis_j.
\end{align*}
Since $\Pir(\mes^\Time_\LDE) = \mes^\Time_\LDE$ and
$\upfun_\LDE(\mes^*) = \mes^*_\LDE$, we have
\begin{align*}
T \; & = \; \inprod{\Pir \big(\upfun_\LDE(\meshat^\Time) -
  \upfun_\LDE(\mes^*) \big)}{\mes^t_\LDE - \Pir(\mes^*_\LDE)} \, - \,
\norm{\mes^\Time_\LDE - \Pir(\mes^*_\LDE)}^2 \\
\; & \stackrel{(i)}{\leq} \; \norm{ \Pir \big(\upfun_\LDE(\meshat^\Time) -
\upfun_\LDE(\mes^*) \big)} \; \: \norm{\mes^t_\LDE -
\Pir(\mes^*_\LDE)} \, - \, \norm{\mes^\Time_\LDE -
\Pir(\mes^*_\LDE)}^2 \\
\; & \stackrel{(ii)}{\leq} \; \norm{\upfun_\LDE(\meshat^\Time) -
\upfun_\LDE(\mes^*)} \: \norm{\mes^t_\LDE - \Pir(\mes^*_\LDE)}
\, - \, \norm{\mes^\Time_\LDE - \Pir(\mes^*_\LDE)}^2.
\end{align*}
where step (i) uses the Cauchy-Schwarz inequality, and step (ii) uses
the non-expansivity of projection.
Applying the contraction condition~\eqref{EqnContraction}, we obtain
\begin{align*}
T \; \leq & \; \big ( 1 - \frac{\contfact}{2} \big) \; \sqrt{ \frac{\sum
    \limits_{\CompNeig}
    \norm{\meshat^t_{\LowerDirEdge{w}{v}} -
    \mes^*_{\LowerDirEdge{w}{v}}}^2}{|\neigh(v)| - 1}} \;
\norm{\mes^t_\LDE - \Pir(\mes^*_\LDE)} \\
\, & - \, \norm{\mes^\Time_\LDE -
\Pir(\mes^*_\LDE)}^2 \\
\; \leq & \; \big ( 1 - \frac{\contfact}{2} \big) \biggr \{ \frac{1}{2} \:
\frac{\sum_{\CompNeig}
  \norm{\mes^t_{\LowerDirEdge{w}{v}} -
  \mes^*_{\LowerDirEdge{w}{v}}}^2}{|\neigh(v)| - 1} \, + \, 
\frac{1}{2} \: \norm{\mes^t_\LDE - \Pir(\mes^*_\LDE)}^2 \biggr \} \\
\, & -  \, \norm{\mes^\Time_\LDE - \Pir(\mes^*_\LDE)}^2,
\end{align*}
where the second step follows from the elementary inequality $a b \leq
a^2/2 + b^2/2$ and the non-expansivity of projection onto the space of
non-negative functions.  By the Pythagorean theorem, we have
\begin{align*}
\norm{\mes^t_{\LowerDirEdge{w}{v}} - \mes^*_{\LowerDirEdge{w}{v}}}^2
\; & = \; \norm{\mes^t_{\LowerDirEdge{w}{v}} -
\Pir(\mes^*_{\LowerDirEdge{w}{v}})}^2 \, + \,
\norm{\Pir(\mes^*_{\LowerDirEdge{w}{v}}) -
\mes^*_{\LowerDirEdge{w}{v}}}^2 \\
\; & = \; \norm{\ItErrNo{\Time}_{\LowerDirEdge{\tnode}{\fnode}}}^2 +
\norm{\GoodApptwo{\dimn}{\LowerDirEdge{\tnode}{\fnode}}}^2.
\end{align*}
Using this equality and taking expectations, we obtain
\begin{align*}
\Exs[T] \; & \leq \; \big ( 1 - \frac{\contfact}{2} \big) \biggr \{
\frac{1}{2} \: \frac{\sum_{\CompNeig}
  [\AVEERR{\Delta^\Time_{\LowerDirEdge{w}{v}}} +
  \norm{\GoodApp{\dimn}_{\LowerDirEdge{w}{v}}}^2] } {|\neigh(v)| - 1}
+ \frac{1}{2} \: \AVEERR{\ItErr{\Time}} \biggr \} \, - \, \AVEERR{\ItErr{\Time}}
\\
\; & \leq \; \big ( \frac{1}{2} - \frac{\contfact}{4} \big) \:
\RHOSQMAX{\GoodApp{\dimn}} \, + \, \big ( \frac{1}{2} - \frac{\contfact}{4}
\big) \: \RHOBARSQMAX{\ItErrNo{\Time}} \, - \, (\frac{1}{2} +
\frac{\contfact}{4}) \: \AVEERR{\ItErr{\Time}}.
\end{align*}
Since $\NEWTERMTWODOWN = 2 \step \,\Exs[T]$, the claim follows.


\bibliographystyle{plain}
\bibliography{NewSBP_bibfile}

\end{document}